    \let\stdchapter\section
    \renewcommand*\section{%
    \@ifstar{\starchapter}{\@dblarg\nostarchapter}}
    \newcommand*\starchapter[1]{%
        \stdchapter*{#1}
        \thispagestyle{fancy}
        \markboth{\MakeUppercase{#1}}{}
    }
    \def\nostarchapter[#1]#2{%
        \stdchapter[{#1}]{#2}
        \thispagestyle{fancy}
    }
\newtcolorbox{boxA}{
    fontupper = \bf,
    boxrule = 1.5pt,
    colframe = black 
}
\newtheorem{theorem}{Theorem}[section]
\newtheorem*{theorem*}{Theorem}
\newtheorem{lemma}[theorem]{Lemma}
\newtheorem{proposition}[theorem]{Proposition}
\theoremstyle{definition}
\newtheorem{definition}[theorem]{Definition}
\theoremstyle{corollary}
\newtheorem{corollary}[theorem]{Corollary}
\theoremstyle{remark}
\newtheorem{remark}[theorem]{Remark}
\newtheorem{example}[theorem]{Example}
\theoremstyle{conclusion}
\begin{document}

\thispagestyle{empty}
\begin{center}
	\Large{{\bf     On the construction of polynomial Poisson algebras: a novel grading approach}  }
\end{center}
\vskip 0.5cm
\begin{center}
	\textsc{Rutwig Campoamor-Stursberg$^{1,\star}$, Danilo Latini$^{2,*}$, Ian Marquette$^{3,\bullet}$, \\
    Junze Zhang$^{4,\dagger}$ and Yao-Zhong Zhang$^{4,\ddagger}$}
\end{center}
\begin{center}
	$^1$ Instituto de Matem\'{a}tica Interdisciplinar and Dpto. Geometr\'{i}a y Topolog\'{i}a, UCM, E-28040 Madrid, Spain
\end{center}

\begin{center}
	$^2$ Dipartimento di Matematica “Federigo Enriques”, Università degli Studi di Milano, Via C. Saldini 50, 20133 Milano, Italy \& INFN Sezione di Milano, Via G. Celoria 16, 20133 Milano, Italy
\end{center}

\begin{center}
	$^3$ Department of Mathematical and Physical Sciences, La Trobe University, Bendigo, VIC 3552, Australia
\end{center}

\begin{center}
	$^4$ School of Mathematics and Physics, The University of Queensland, Brisbane, QLD 4072, Australia
\end{center}
\begin{center}
	\footnotesize{$^\star$\textsf{rutwig@ucm.es} \hskip 0.25cm$^*$\textsf{danilo.latini@unimi.it} \hskip 0.25cm $^\bullet$\textsf{i.marquette@latrobe.edu.au} \hskip 0.25cm
 $^\dagger$\textsf{junze.zhang@uq.net.au} \hskip 0.25cm
 $^\ddagger$\textsf{yzz@maths.uq.edu.au}}
\end{center}
\vskip  1cm
\hrule

\begin{abstract}
\vskip 0.15cm
\noindent 
 In this work, we refine recent results on the explicit construction of polynomial algebras associated with commutants of subalgebras in enveloping algebras of Lie algebras by considering an additional grading with respect to the subalgebra. It is shown that such an approach simplifies and systematizes the explicit derivation of the Lie--Poisson brackets of elements in the commutant, and several fundamental properties of the grading are given. The procedure is illustrated by revisiting three relevant reduction chains associated with the rank-two complex simple Lie algebra $\mathfrak{sl}(3,\mathbb{C})$. Specifically, we analyze the reduction chains $\mathfrak{so}(3) \subset \mathfrak{su}(3)$, corresponding to the Elliott model in nuclear physics, the chain $\mathfrak{o}(3) \subset \mathfrak{sl}(3,\mathbb{C})$ associated with the decomposition of the enveloping algebra of $\mathfrak{sl}(3,\mathbb{C})$ as a sum of modules, and the reduction chain $\mathfrak{h} \subset \mathfrak{sl}(3,\mathbb{C})$ connected to the Racah algebra $R(3)$. In addition, a description of the classification of the centralizer with respect to the Cartan subalgebra $\mathfrak{h}$ associated with the classical series $A_n$ in connection with its root system is reconsidered. As an illustration of the procedure, the case of $S(A_3)^\mathfrak{h}$ is considered in detail, which is connected with the rank-two Racah algebra for specific realizations of the generators as vector fields. This case has attracted interest with regard to orthogonal polynomials.

\end{abstract}
\vskip 0.35cm
\hrule

\tableofcontents

\section{Introduction}

The interplay between algebraic structures and dynamical systems has long been a topic of utmost importance in mathematical physics, providing a deep understanding of the behavior of complex physical systems \cite{MR1184379,MR1255424}. Among these algebraic structures, finitely generated polynomial algebras stand out for their versatility and robustness in modeling various physical phenomena \cite{MR1173277,MR1306244,MR1806263,MR2804560}. In this context, the study of superintegrable systems (i.e., systems possessing more integrals of motion than degrees of freedom) has garnered significant attention, due to their rich mathematical properties and physical applications \cite{MR3493688,MR3797912,Correa2020}. For example, quadratic, cubic and higher-order polynomial deformations of Lie algebras have played a significant role in the construction of superintegrable systems \cite{MR1806263,MR3227700,MR3119484,MR2492581,MR2566882,MR4256391,MR4131326, Latini_2019, LATINI2021168397, Latini_2021}, and have shown a deep connection with the theory of orthogonal polynomials and generalized special functions \cite{marquette2023infinite, bernard2023bethe}.

Recently, it was pointed out how the general labeling problem related to subalgebra chains is of significance in several applications, such as nuclear physics, the decomposition of enveloping algebras, or the embedding problem of higher-rank Racah algebras. All these examples are, in fact, connected with polynomial Poisson algebras \cite{MR0411412,MR2276736,MR4710584,campoamor2023algebraic}.  It was shown that the missing-label problem, \cite{MR4411095} when considering labeling operators generically, leads to non-Abelian algebras taking the form of finitely generated polynomial Poisson algebras. This construction method is based on the centralizer subalgebra (i.e. commutant) in the symmetric algebra  associated with a Lie algebra. The computation of indecomposable polynomials generating the commutant is carried out in the context of Poisson brackets.  Although the construction has been known since the 1950s in nuclear physics models and Lie algebra representations, the underlying algebraic structure remains largely unexplored. Recent studies reveal that these structures correspond to finitely generated polynomial algebras. These algebras can be structures with three generators or more complex, often labeled higher-rank algebras \cite{MR4660510}. It has been shown that computational methods can be made easier using Poisson algebras and dual space variables. This simplification hinges on the robust link between the universal enveloping algebra and the symmetric space associated with dual-space variables within the Poisson-Lie framework. The symmetrization map transitions between these are akin to the classical/quantum correspondence. Despite this, it remains challenging to close these algebraic relations. As the dimension of the Lie algebra grows, so does the degree of labeling operators, complicating the construction of all monomials.


 In this paper, we re-examine the subalgebra chains that are studied from $3$ distinct examples, which are from physical models, the decomposition of the enveloping algebra, and analyze recent examples pertaining to the Cartan commutant. Specifically,  we show the closure of these polynomial Poisson algebras through the grading of their generators. The structure of the paper is as follows: in Section $\ref{2}$, we discuss the properties and construction of commutants in the context of Lie algebras and their corresponding symmetric algebras, and also illustrate a way to find the linearly independent and indecomposable generators of the centralizer. Next, in Section $\ref{3}$, we focus on the construction of polynomial algebras using subalgebra centralizers. It emphasizes the terminology and conditions necessary to predict allowed polynomials in non-trivial brackets. The construction relies on grading by degrees and identifying specific forms of monomials, which facilitates determining the total count of possible terms in non-zero commutator relations. With the general terminologies of Section $\ref{3}$, in Section $\ref{4}$  we apply the grading method to identify potentially permissible monomials in the expansion of a non-trivial bracket. To some extent, by determining the grading of each generator in polynomial algebras, the construction of the compact form is simplified. We consider cases related to Cartan generators, simple roots, and $\mathfrak{so}(3)$ tensor operators.

Finally, in Section \ref{5}, we undertake a comprehensive and detailed analysis of the centralizer concerning the Cartan subalgebra within the symmetric algebra. This examination includes a classification that meticulously outlines the allowable monomials constrained by the presence of non-trivial brackets. A thorough exploration is conducted into both non-decomposable and decomposable terms, distinguishing cases where Cartan elements are involved from those where they are not, within non-trivial brackets. The study provides explicit forms for the expansion in each scenario. 

\section{Preliminaries}
\label{2}

Throughout this work, let $\mathbb{F}$ be either $\mathbb{R}$ or $\mathbb{C},$ and let $\mathbb{N}$ be the ring of integers.

\medskip

Let $\mathfrak{g}$ be a $n$-dimensional semi-simple or reductive Lie algebra over a field $\mathbb{F}$ with a non-trivial commutator $[\cdot,\cdot]$ satisfying the relations \begin{equation}
[X_i,X_j]= \sum_{k=1}^n C_{ij}^k X_k  \qquad 1 \leq i,j \leq n .
\end{equation} Here $C_{ij}^k \in \mathbb{F} $ are the structure constants of $\mathfrak{g}.$ Let $\left(U(\mathfrak{g}),[\cdot,\cdot]\right)$ be the enveloping algebra of $\mathfrak{g}.$ Suppose that $\beta_\mathfrak{g} = \{X_1,\ldots,X_n\}$ is an ordered basis for $\mathfrak{g}$.   The dual space $\mathfrak{g}^*$ admits a Poisson-Lie bracket in terms of linear coordinates $\{x_1,...,x_n\}$ determined by a dual basis of $\beta_\mathfrak{g}$, for which one has:   \begin{equation}
\{x_i,x_j\}= \sum_{k=1}^n C_{ij}^k x_k \qquad 1 \leq i,j \leq n .
\end{equation} The symmetric algebra $S(\mathfrak{g}) \cong \mathbb{F}[\mathfrak{g}^*]$ becomes a Poisson polynomial algebra under the same Lie-Poisson bracket $\{\cdot,\cdot\}$ defined on $\mathfrak{g}^*$. That is, for any $p,q \in S(\mathfrak{g}),$ a Poisson-Lie bracket $\{\cdot,\cdot\}:S(\mathfrak{g}) \times S(\mathfrak{g}) \rightarrow S(\mathfrak{g})$ is defined by \begin{equation}
\{p,q\}= \sum_{  i,j,k = 1}^n C_{ij}^k x_k \dfrac{\partial p }{\partial x_i }\dfrac{\partial q}{\partial x_j } .
\end{equation}

 We now construct the symmetric algebra for an arbitrary subalgebra $\mathfrak{a}$ of $\mathfrak{g}$.  Without loss of generality, we denote the generators of $\mathfrak{a}$ by $X_{\ell_1},\ldots,X_{\ell_s}$, where $s = \dim \mathfrak{a}$ and $\{\ell_1,\ldots,\ell_s\} \subset \{1,\ldots,n\}$. We now look at the adjoint action of the subalgebra $\mathfrak{a}$ on the enveloping algebra and the symmetric algebra, respectively. From the universal property of $U(\mathfrak{g}),$ the adjoint action of $\mathfrak{a}$ on $U(\mathfrak{g})$ preserves the same commutator defined on $\mathfrak{g}.$ For the symmetric algebra, we can then show the following statement.

 \begin{proposition}
 \label{p2.1}
     The coadjoint action of $\mathfrak{a}$ on $S(\mathfrak{g})$ preserves the Poisson-Lie bracket $\{\cdot,\cdot\}$ defined on $S(\mathfrak{g}).$ That is, for all $X_m \in \mathfrak{a},$ and $p,q \in S(\mathfrak{g}), $ \begin{align*}
         \mathrm{ad}_{X_m}^* (\{p,q\}) = \{ \mathrm{ad}_{X_m}^*(p),q\} + \{p, \mathrm{ad}_{X_m}^*(q)\} \, , \text{ } m = \ell_1,\ldots,\ell_s.
     \end{align*}
 \end{proposition}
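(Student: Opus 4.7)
The plan is to reduce the derivation identity to the Jacobi identity for the Lie--Poisson bracket on $S(\mathfrak{g})$. The central observation is that, once extended to the whole symmetric algebra by the Leibniz rule, the coadjoint action of $X_m$ coincides (up to a fixed sign depending on conventions) with the Hamiltonian derivation $\{x_m,\cdot\}$ associated with the linear polynomial $x_m\in S(\mathfrak{g})$.

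First, I would verify this identification on the linear generators. On a coordinate $x_i$, the coadjoint action produces $\mathrm{ad}_{X_m}^* x_i = \pm\sum_k C_{mi}^k x_k$, while the bracket formula immediately yields $\{x_m,x_i\} = \sum_k C_{mi}^k x_k$. Both $\mathrm{ad}_{X_m}^*$ and $\{x_m,\cdot\}$ extend uniquely to derivations of the commutative associative algebra $S(\mathfrak{g})$: the former by construction when the coadjoint action on $\mathfrak{g}^*$ is lifted to $S(\mathfrak{g})\cong\mathbb{F}[\mathfrak{g}^*]$, the latter because the Poisson bracket is a biderivation in each slot. Since two derivations that agree on a generating set agree on the whole algebra, one obtains $\mathrm{ad}_{X_m}^* p = \pm\{x_m,p\}$ for every $p\in S(\mathfrak{g})$.

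Under this identification, the desired identity becomes
\begin{equation*}
\{x_m,\{p,q\}\} = \{\{x_m,p\},q\} + \{p,\{x_m,q\}\},
\end{equation*}
which is precisely the Jacobi identity for $\{\cdot,\cdot\}$ on $S(\mathfrak{g})$. The Jacobi identity in turn follows from the Jacobi identity for the structure constants $C_{ij}^k$ of $\mathfrak{g}$, using that the Lie--Poisson bracket is a biderivation, so it suffices to check it on the generators $x_1,\ldots,x_n$, where a direct computation with the explicit formula for $\{\cdot,\cdot\}$ yields the desired cancellation.

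The only genuine obstacle is bookkeeping: matching the sign convention for $\mathrm{ad}^*$ with the one fixed in the definition of $\{\cdot,\cdot\}$ so that the signs cancel consistently on both sides of the identity. No deeper idea is required, since conceptually the proposition simply expresses the general fact that every Hamiltonian derivation of a Poisson algebra is itself a derivation of the Poisson bracket; the restriction $m\in\{\ell_1,\ldots,\ell_s\}$ plays no role in the proof and the statement actually holds for arbitrary $X_m\in\mathfrak{g}$.
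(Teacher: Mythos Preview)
Your proof is correct and is precisely the natural argument: identify $\mathrm{ad}_{X_m}^*$ on $S(\mathfrak{g})$ with the Hamiltonian derivation $\{x_m,\cdot\}$ and invoke the Jacobi identity. The paper in fact states Proposition~\ref{p2.1} without proof; the identification you use is exactly what the paper adopts immediately afterwards in Definition~\ref{2.3}, where the coadjoint action on $S(\mathfrak{g})$ is written as $p\mapsto\{x_m,p\}=\tilde{X}_m(p)$, so your argument is fully aligned with the paper's setup. Your closing observation that the restriction $m\in\{\ell_1,\dots,\ell_s\}$ is irrelevant to the proof is also correct.
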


The proposition $ \ref{p2.1}$ motivates the following definition.

\begin{definition}
\label{2.3} \cite{MR0760556}
The coadjoint orbits of $\mathfrak{a}$ on the symmetric algebra $S(\mathfrak{g})$ are given by \begin{align}
      p(x_1,\ldots,x_n) \in S(\mathfrak{g})   \mapsto \{x_m,p  \} = \tilde{X}_m(p ) = \sum_{  k,l =1}^n C_{mk}^lx_l \dfrac{\partial p}{\partial x_k} \in S(\mathfrak{g}), \text{ } m = \ell_1,\ldots,\ell_s \label{eq:dual}
  \end{align} where $\tilde{X}_m = \sum_{  k,l =1}^n C_{mk}^l x_l \dfrac{\partial}{\partial x_k}$ are vector field realizations of the generators of $\mathfrak{a}.$
\end{definition}

According to Definition $\ref{2.3},$ our primary focus is to explore the kernel of the coadjoint action of $\mathfrak{a}$ on $S(\mathfrak{g})$.

\begin{definition}
\label{2.1}
 The $\textit{commutant}$ (or $\textit{centralizer subalgebra}$) $ S(\mathfrak{g})^\mathfrak{a}$ is defined as the centralizer of $\mathfrak{a}^*$ in $S(\mathfrak{g})$:
\begin{align*}
      S(\mathfrak{g})^\mathfrak{a}   =& \left\{p \in S(\mathfrak{g}): \text{ } \{x,p\} = 0  \quad \forall x \in \mathfrak{a}^*\right\},
  \end{align*} where $p$ is polynomial in terms of $x_j$ for all $1 \leq j \leq n$.
\end{definition}
\begin{remark}
\label{2.2}
 (i) The $\textit{Poisson center}$ of $\left(S(\mathfrak{g}),\{\cdot,\cdot\}\right)$ is the set of all $\mathfrak{g}$-invariant polynomials, i.e., \begin{align*}
    S(\mathfrak{g})^\mathfrak{g}  =\left\{ p \in S(\mathfrak{g}) : \text{ } \{p,x\} = 0 \quad \forall x \in \mathfrak{g}^*\right\}.
\end{align*} These elements can be identified with the (polynomial) solutions of the differential operators in \eqref{eq:dual}  (see \cite{MR4355741,MR1451138} for details).

 (ii) We note that for any finite-dimensional Lie algebras, a linear basis in $S(\mathfrak{g})^\mathfrak{a}$ is not necessarily standard. In other terms, one cannot guarantee that the Poisson algebra is finitely-generated. However,  if $\mathfrak{g}$ is semisimple or reductive, it can be shown that $S(\mathfrak{g})$ is Noetherian, from which we can conclude that the centralizer $S(\mathfrak{g})^\mathfrak{a}$ is finitely-generated \cite[Chapter 2]{MR1451138}. This implies that, once a maximal set of indecomposable polynomials $\left\{p_{k_1},\dots ,p_{k_n}\right\}$ has been found, there always exists some integer $\zeta \in \mathbb{N}$ such that $p_{\zeta+k_j}$ is decomposable for all $j \geq 1$. By saying that a polynomial $p \in S(\mathfrak{g})$ is decomposable, we indicate that there exists another polynomial $p' \in S(\mathfrak{g})$ of a lower degree such that $p \equiv 0 \mod p'$, which means that $p'$ is a divisor of $p$. Additionally, it is important to note that the elements in the generating set within the centralizer of a subalgebra do not necessarily imply their algebraic independence.
\end{remark}

For any $h \in \mathbb{N}_0$, we define
\begin{equation*}
U_h(\mathfrak{g}) = \mathrm{span} \{X_1^{i_1} \ldots X_n^{i_n}: i_1+ \ldots + i_n \leq  h\}
\end{equation*}
as the linear subspace of $U(\mathfrak{g})$ spanned by polynomials of degree at most $h$ in the (noncommutative) generators of $\mathfrak{g}$. The degree $\delta$ of an arbitrary element $P \in U(\mathfrak{g})$ is defined as $\delta := {\rm inf}\{k: P \in U_{k}(\mathfrak{g})\}$.  Furthermore, there is a natural filtration in $U(\mathfrak{g})$ given by the following relations \cite{MR1451138}
\begin{align}
    U_{0}(\mathfrak{g}) = \mathbb{F}, \quad U_\ell(\mathfrak{g})U_k(\mathfrak{g}) \subset U_{\ell+k}(\mathfrak{g}), \quad U_\ell(\mathfrak{g}) \subset U_{\ell+k}(\mathfrak{g}),\quad \forall k, \ell\geq 1.
\end{align}
Clearly, for each $k\geq 1$ we can define the blocks $ U^0(\mathfrak{g}) = \mathbb{F}$ and  $U^k(\mathfrak{g})=U_k(\mathfrak{g})/U_{k-1}(\mathfrak{g})$,  from which it follows that we have a graded algebra $\mathrm{gr} \, U(\mathfrak{g})  := \bigoplus_{k\geq 0} U^k(\mathfrak{g})$. Here we set $U^0 (\mathfrak{g}) = \{0\}$. From the Poincar\'e-Birkhoff-Witt (PBW in short) theorem, it can be easily deduced that the dimension of each gradation block is
\begin{align}
    \dim  U^k(\mathfrak{g}) = \dim \frac{U_k(\mathfrak{g})}{ U_{k-1}(\mathfrak{g})} = \binom{\dim \mathfrak{g} +k-1}{k}.
\end{align} Now, back to the symmetric algebra $S(\mathfrak{g})$. By definition, we deduce the decomposition
\begin{align}
   S(\mathfrak{g}) = \bigoplus_{k \geq 0} S^k(\mathfrak{g})  ,
\end{align} where \begin{align*}
     S^k(\mathfrak{g}) : = \mathrm{span} \left\{x_1^{a_1} \cdots x_n^{a_n}: a_1 + \ldots + a_n = k, \quad a_j \in \mathbb{N}_0 : =\mathbb{N} \cup \{0\}\right\} 
 \end{align*} is a subalgebra of $S(\mathfrak{g})$ consisting of all the degree $k$ polynomials. It follows that, for any $p \in S(\mathfrak{g})$, the polynomial decomposes as $p = \sum_{k \geq 0} p^{(k)}$, where $p^{(k)} \in S^k(\mathfrak{g})$ for all $k \geq 0$.  A linear isomorphism $\Lambda:S(\mathfrak{g})\rightarrow U(\mathfrak{g})$ that commutes with the adjoint action is easily obtained through the symmetrization map
\begin{equation}
\Lambda\left(x_{j_1}\cdots x_{j_k}\right)=\frac{1}{k!} \sum_{\sigma\in S_{k}} X_{j_{\sigma(1)}}\cdots X_{j_{\sigma(k)}},\label{eq:syma}
\end{equation}
where $S_k$ denotes the symmetric group of order $k!$, where $k! $ is the factorial of $k \in \mathbb{N}.$ Note that $\Lambda$ defines a vector space isomorphism. In particular, for any $p \in S^{k}(\mathfrak{g}) $ and $q \in S^{\ell}(\mathfrak{g}),$ we have \begin{align*}
    \deg \left( \Lambda(pq)\right) = k + \ell , \qquad   \Lambda(pq) -  \Lambda(p) \Lambda(q) \in U_{k+ \ell-1}(\mathfrak{g}).
\end{align*}  Note that \eqref{eq:syma} induces an algebra isomorphism $\tilde{\Lambda}: S(\mathfrak{g}) \rightarrow \mathrm{gr} \, U(\mathfrak{g})$. It follows that $U^k(\mathfrak{g})= \tilde{\Lambda}_k\left(S^k(\mathfrak{g})\right)$ is an algebra isomorphism,  where $\tilde{\Lambda}_k := \tilde{\Lambda}\vert_{S^k(\mathfrak{g})}$.

We now focus on $\mathfrak{a}$-invariant homogeneous polynomial spaces within $\left(S(\mathfrak{g}),\{\cdot,\cdot\}\right)$.   For similar constructions, we refer to \cite{MR191995,MR1520346, marquette2023algebraic} and the citations therein.
Define the vector space of $\mathfrak{a}$-invariant $k$-homogeneous polynomials as
\begin{align*}
 S^k(\mathfrak{g})^{\mathfrak{a}}  = \left\{p^{(k)} \in S^k(\mathfrak{g}): \{x,p^{(k)}\} = 0  \quad \forall x \in \mathfrak{a}^*\right\},
\end{align*}
where $p^{(k)}(x_1,\ldots,x_n)$ is a homogeneous polynomial of degree $k \geq 0$ with the generic form
\begin{align}
    p^{(k)}(x_1,\ldots,x_n) = \sum_{i_1 + \cdots + i_n = k} \Gamma_{i_1,\ldots, i_n}\, x_1^{i_1} \cdots x_n^{i_n}, \qquad  \Gamma_{i_1,\ldots,  i_n} \in \mathbb{F}. \label{eq:ci}
\end{align} 
By definition, in order to find a finite generating set for centralizer subalgebras, all $\mathfrak{a}$-invariant linearly independent and indecomposable homogeneous polynomial solutions of the system of partial differential equations (PDEs)
\begin{align}
    \tilde{X}_m\left(p^{(k)}\right)(x_1,\ldots,x_n) = \left\{x_m,p^{(k)}\right\}  = \sum_{1 \leq l,i \leq n} C_{mi}^lx_l \dfrac{\partial p^{(k)}}{\partial x_i} = 0, \text{ }\quad m = \ell_1,\ldots,\ell_s  \label{eq:func}
\end{align}
must be found.  Note that it is shown that, if $\mathfrak{a} = \mathfrak{g},$ the maximal number of functionally independent solutions of $\eqref{eq:func}$ is known to be given by \cite{MR0411412,MR0204094}
\begin{align}
    \mathcal{N} (\mathfrak{g}) = \dim  \mathfrak{g} - \mathrm{rank}(A_{ij} ), \text{ } 1 \leq  i,j  \leq n,
\end{align} where $A_{ij} :=\sum_{l=1}^n C_{ij}^l x_l$ represents the matrix of the commutator table of the Lie algebra $\mathfrak{g}$ over the given basis. In this context, as $\mathfrak{a}$ is a subalgebra of $\mathfrak{g}$, we will consider the labeling problem where the functions (not necessarily polynomials) satisfy the system of PDEs $\eqref{eq:ci}$. It can be shown that the number of functionally independent solutions in the system $\eqref{eq:ci}$ is exactly
\begin{align}
    \mathcal{N}(\mathfrak{a}) = \dim \mathfrak{g} - \dim \mathfrak{a} + \ell_0.
\end{align}
Here $\ell_0$ is the number of $\mathfrak{g}$-invariant polynomials in $S(\mathfrak{a} )$  (for more details, see \cite{MR2276736,MR4660510,MR2515551} and \cite[Chapter 12, Section 12.1.5]{gtp} and references therein).

Typically, finding a polynomial for the centralizer in relation to a subalgebra can be approached by two methods: solving systems of  PDEs $\eqref{eq:func}$ directly, using the method of characteristics, or employing a polynomial ansatz. In this case, we apply the polynomial ansatz, as $\mathfrak{g}$ is assumed to be reductive, and the commutant can be expressed as a polynomial in the dual space variables. This simplifies the analysis into solving sets of linear equations. Note that, in the case of non-semisimple Lie algebras, the solution may be expressed as rational or even transcendental functions.

\subsection{Construction of commutant and related polynomial algebra: The polynomial ansatz}
Let us consider polynomials within the framework of the Poisson bracket.  Recall that, by construction, the $\mathfrak{a}$-invariant homogeneous polynomial in $S^k(\mathfrak{g})^\mathfrak{a}$ takes the form of $\eqref{eq:ci}.$ Subsequently, a list of all polynomials is compiled for each degree, and we examine decomposability, that is, the degree up to which all polynomials can be expressed in terms of polynomials of lower degrees.  If indecomposability is achieved up to the degree $\zeta$, the set of polynomials that form the commutant is described by

\[ \textbf{q}_1 : = \left\{ p_u^{(1)},\quad u=1,...,l_1 \right\}; \]
\[ \textbf{q}_2 : = \left\{ p_u^{(2)},\quad u=1,...,l_2 \right\};\]
\[ \vdots  \]
\[  \textbf{q}_\zeta : = \left\{p_u^{(\zeta)},\quad u=1,...,l_\zeta\right\}. \]
Here $p_u^{(k)} $ is an indecomposable $\mathfrak{a}$-invariant homogeneous polynomial of degree $k \in \{1,\ldots,\zeta\}$.

Let $\textbf{Q}_\zeta := \left\{p_u^{(1)},\ldots, p_u^{(\zeta)}  \right\} =   \textbf{q}_1 \sqcup \textbf{q}_2 \sqcup \ldots \sqcup \textbf{q}_\zeta $ be a finite set consisting of all indecomposable polynomials up to degree $\zeta$, and let $\textbf{Alg} \left\langle \textbf{Q}_\zeta \right\rangle $ denote the algebra generated by the set $\textbf{Q}_\zeta.$ It is clear that $\textbf{Alg}\langle\textbf{Q}_\zeta\rangle$ is infinite-dimensional as a vector space.  Notice that the elements in $\textbf{Q}_\zeta$ are not necessarily functionally (a.k.a. algebraically) independent. Hence, they are not freely generated, which means that there may exist non-trivial polynomial relations among these generators. Now, for any $p_u^{(l)} \in \textbf{q}_l$ and $p_v^{(\ell)} \in \textbf{q}_\ell$, there exist some coefficients $ \Gamma^{s_1,...,s_r}_{uv}  \in \mathbb{F}$ such that the Poisson-Lie bracket $\{\cdot,\cdot\}  :  \textbf{Alg} \langle  \textbf{Q}_\zeta \rangle \times \textbf{Alg} \langle  \textbf{Q}_\zeta \rangle\rightarrow \textbf{Alg} \langle  \textbf{Q}_\zeta \rangle$ is given by \begin{equation}
     \left\{p_u^{(l)} ,p_v^{(\ell)} \right\}= \sum_{k_1+\ldots + k_r =\ell+l-1} \Gamma^{s_1,...,s_r}_{uv} p_{s_1}^{(k_1)} \cdots p_{s_r}^{(k_r)}  . \label{eq:comm}
\end{equation} Here $k_1,\ldots,k_r  \leq \zeta.$  It is a straightforward calculation to confirm that the Leibniz rule holds true in this context. Subsequently, the algebraic structure denoted by $\textbf{Alg}\left\langle \textbf{Q}_\zeta\right\rangle$, when equipped with the Poisson-Lie bracket $\{\cdot,\cdot\}$, in conjunction with additional polynomial relations $ P(\textbf{q}_1,\ldots,\textbf{q}_\zeta) = 0$, constitutes a finitely-generated polynomial Poisson algebra. This algebraic framework thus upholds the properties characteristic of Poisson algebras, ensuring it is defined by a finite set of generators.   Let \[ d := \max_{1 \leq k_j \leq \zeta} \sum_{\begin{matrix}
    j \in I \\
    I = \left\{1,\ldots,r: p_{s_j}^{(k_j)} \notin \mathcal{Z}\right\}
\end{matrix}} k_j \] denote the degree of this polynomial Poisson algebra. Here $\mathcal{Z} := \{p \in \textbf{Alg}\left\langle \textbf{Q}_\zeta\right\rangle :  \{p,q\} = 0, \text{ } \forall q \in \textbf{Alg}\left\langle \textbf{Q}_\zeta\right\rangle \}$ is the center of $\textbf{Alg}\left\langle \textbf{Q}_\zeta\right\rangle.$   In the following, we denote $\mathcal{Q}_\mathfrak{g}(d)  := \left(\textbf{Alg} \left\langle \textbf{Q}_\zeta \right\rangle,\{\cdot,\cdot\} \right)$ for simplicity. By construction, we observe that \begin{align*}
    \mathcal{Q}_\mathfrak{g}(d) = \mathfrak{t} \oplus \bigoplus_{k \in \Omega} \mathcal{Q}_k
\end{align*}  is a graded polynomial algebra, where $\mathfrak{t} :={ \mathfrak{g}^*}^\mathfrak{a}$ is the centralizer of $\mathfrak{a}^*$ in $\mathfrak{g}^*$, $\Omega \subset \mathbb{N}$ is an ordered set and $\mathcal{Q}_k  $ is the vector space consisting of $\mathfrak{a}$-invariant polynomials of degree $k$. Note that $\mathfrak{t}$ is a subaglebra of $\mathcal{Q}_\mathfrak{g}(d)$. It is clear that $\mathfrak{t} = \mathfrak{a}^*$ if $\mathfrak{a}^*$ is Abelian. Thus, if $\mathfrak{a}^*$ is Abelian, it follows that $\mathfrak{t}$ is also Abelian; however, the reverse implication is generally invalid. Moreover, we observe the following fact regarding to the center element of $\mathcal{Q}_\mathfrak{g}(d)$.  
\begin{proposition}
    Let $\mathcal{Q}_\mathfrak{g}(d)$ be the same as defined above. Then for any $\mathfrak{t} \neq \{0\}$, $\mathfrak{t} \subset \mathcal{Z}$ if and only if $\mathfrak{t}$ is Abelian.
\end{proposition}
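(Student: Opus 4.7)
The plan is to prove each direction of the equivalence by reducing, via the Leibniz rule for the Poisson bracket, the verification to the finite set of indecomposable generators $\textbf{Q}_\zeta$. Since $\mathcal{Z}$ is defined as the set of elements of $\textbf{Alg}\langle \textbf{Q}_\zeta \rangle$ that Poisson-commute with everything in that algebra, showing $\mathfrak{t} \subset \mathcal{Z}$ amounts to checking $\{t, p_u^{(k)}\} = 0$ for every $t \in \mathfrak{t}$ and every generator $p_u^{(k)}$.

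The forward direction follows immediately from the definitions. If $\mathfrak{t} \subset \mathcal{Z}$ and $t_1, t_2 \in \mathfrak{t}$, then $t_1 \in \mathcal{Z}$ while $t_2 \in \mathfrak{t} \subset \textbf{Alg}\langle \textbf{Q}_\zeta \rangle$, so $\{t_1, t_2\} = 0$ by the defining property of $\mathcal{Z}$. Hence $\mathfrak{t}$ is Abelian.

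For the converse, assume $\mathfrak{t}$ is Abelian and fix $t \in \mathfrak{t}$. By Leibniz it suffices to verify $\{t, p_u^{(k)}\} = 0$ for every generator $p_u^{(k)} \in \textbf{Q}_\zeta$. When $k = 1$, $p_u^{(1)} \in \textbf{q}_1 = \mathfrak{t}$, and the vanishing is precisely the Abelian hypothesis. When $k \geq 2$, the generator $p_u^{(k)} \in S^k(\mathfrak{g})^{\mathfrak{a}}$ satisfies $\{x_m, p_u^{(k)}\} = 0$ for each $m \in \{\ell_1, \ldots, \ell_s\}$. Writing $t = \sum_i t_i x_i$ and invoking the defining property of $\mathfrak{t}$ (that the corresponding element of $\mathfrak{g}$ lies in the centralizer of $\mathfrak{a}$), the Hamiltonian vector field $\{t, \cdot\}$ becomes a linear combination of the $\tilde{X}_m$ in the principal setting $\mathfrak{t} = \mathfrak{a}^*$ (e.g.\ when $\mathfrak{a}$ is a Cartan subalgebra of a semisimple $\mathfrak{g}$), and $\{t, p_u^{(k)}\} = 0$ then follows directly from the $\mathfrak{a}$-invariance of $p_u^{(k)}$.

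The main difficulty will be the $k \geq 2$ step. Expressing $\{t, \cdot\}$ as a combination of the $\tilde{X}_m$ is transparent when $\mathfrak{t} \subset \mathfrak{a}^*$, the configuration relevant to the examples treated here, but requires further structural input in more general situations where $\mathfrak{t}$ properly extends $\mathfrak{a}^*$. In such cases one must instead rely on the Jacobi identity, which shows at least that $\{t, p_u^{(k)}\}$ is itself $\mathfrak{a}$-invariant, together with finer information on the indecomposable components of $\textbf{Q}_\zeta$ to conclude the vanishing. This is the step into which most of the genuine work of the proof will go, and where the Cartan-type hypotheses implicit in the applications become essential.
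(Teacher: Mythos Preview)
Your forward direction is correct and matches the paper's. For the converse you reduce via Leibniz to the generators and dispatch the $k\ge 2$ case by noting that if $t\in\mathfrak a^*$ then $\{t,\cdot\}$ is one of the operators $\tilde X_m$ and annihilates every $\mathfrak a$-invariant. This works whenever $\mathfrak t\subset\mathfrak a^*$, i.e.\ when $\mathfrak t=Z(\mathfrak a)^*$, and that hypothesis does hold in all three of the paper's examples (including the non-Abelian $\mathfrak a=\mathfrak o(3)$, where $\mathfrak t=\mathbb C e_{13}=Z(\mathfrak o(3))^*$). The paper argues differently: it reads $\{f,p\}(\xi)$ as the derivative of $p$ along the coadjoint flow $\phi_t=\mathrm{Ad}^*(\exp(tf))$ and claims that, since $[f,\mathfrak a]=0$, this flow sends each $A$-orbit to itself, so the $A$-invariant $p$ is constant along it.

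Your hesitation about the case $\mathfrak t\not\subset\mathfrak a^*$ is well founded, and the paper's flow argument has the same gap you suspected: from $[f,X]=0$ one obtains only $\phi_t\circ\mathrm{Ad}^*(a)=\mathrm{Ad}^*(a)\circ\phi_t$, so the flow \emph{permutes} $A$-orbits but need not fix each one. In fact the implication fails in that regime. Take $\mathfrak g=\mathfrak{sl}(3,\mathbb C)$ and $\mathfrak a=\mathbb C H_1$: then $\mathfrak t=\mathrm{span}\{h_1,h_2\}$ is Abelian, yet $e_{13}e_{23}$ has $H_1$-weight $1+(-1)=0$ and is an indecomposable degree-two element of $S(\mathfrak g)^{\mathfrak a}$, while $\{h_2,e_{13}e_{23}\}=(1+2)\,e_{13}e_{23}\neq 0$; hence $h_2\notin\mathcal Z$ and $\mathfrak t\not\subset\mathcal Z$. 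So your argument is complete precisely under the hypothesis $\mathfrak t\subset\mathfrak a^*$, and no amount of ``further structural input'' will rescue the general statement without it.
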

\begin{proof}
We first assume that $\mathfrak{t} \subset \mathcal{Z}$. Then by definition, $\{f, p\} = 0$ for all $f  \in \mathfrak{t}  $ and $p \in \mathcal{Q}_\mathfrak{g}(d)$. Without loss of generality, let $y := p \vert_\mathfrak{t}$. Since $y \in \mathfrak{t} \subset \mathcal{Q}_\mathfrak{g}(d)$, $\{f,y\} = 0$. Hence, $\mathfrak{t}$ is Abelian.

Conversely, assume that $\mathfrak{t}$ is Abelian. Hence, for any $p \in \mathcal{Q}_\mathfrak{g}(d)$ with $\deg p = 1,$ $\{\mathfrak{t},p\} = \{0\}$. For any $f \in \mathfrak{t}$ and $p \in \mathcal{Q}_\mathfrak{g}(d)$ with $\deg p \geq 2$, we aim to show that $ \{f,p\} = 0$. By definition, for any $\xi \in \mathfrak{g}^*$, there exists a vector field $\tilde{X}_f$ such that \begin{align}
    \tilde{X}_f(p)(\xi) = \{f,p\} (\xi) = \langle \xi, [df, dp]\rangle, \label{eq:def1}
\end{align} where $df, dp \in \mathfrak{g}$, and $\langle \cdot,\cdot \rangle$ is a dual pair between $\mathfrak{g}$ and $\mathfrak{g}^*$. Note that $f: \mathfrak{g}^* \rightarrow \mathbb{R} $ is a linear $\mathfrak{a}$-invariant polynomial. Since $\mathfrak{g}^*$ can be also viewed as a vector space, and its tangent space at any point is naturally identified with $\mathfrak{g}^*$ itself.  It follows, with abuse of notation, that $df(\xi) = f$ for all $\xi \in \mathfrak{g}^*$. Hence \eqref{eq:def1} becomes \begin{align}
      \tilde{X}_f(p)(\xi) = \{f,p\} (\xi) = \langle \xi, [f, dp]\rangle . \label{eq:def2}
\end{align}  Let $\phi_t(\xi) = \mathrm{Ad}^*(\exp(t f)) \xi $ be a flow defined on $\mathfrak{g}^*$, where $ t \in \mathbb{R}$, and $\mathrm{Ad}^*: A \times \mathfrak{g}^*\rightarrow \mathfrak{g}^*$ represents a coadjoint action. Here $A$ is a connected Lie group such that $\mathfrak{a} = Lie (A)$. Then by chain rule and definition of co-adjoint operator, \begin{align}
    \left.\dfrac{d}{dt} \right\vert_{t = 0} p\left(\phi_t(\xi)\right) = \left\langle d p(\phi_t(\xi)), \mathrm{ad}^*(f)(\phi_t(\xi))\right\rangle =  \left\langle \phi_t(\xi), [f,d p(\phi_t(\xi)) ]\right\rangle. \label{eq:def3}
\end{align} Here $\mathrm{ad}^*$ is the derivative of $\mathrm{Ad}^*.$ Back to \eqref{eq:def2}, we observe that \begin{align}
      \left.\dfrac{d}{dt} \right\vert_{t = 0} p\left(\phi_t(\xi)\right) =  \{f,p\} (\phi_t(\xi)) =  \tilde{X}_f(p)(\phi_t(\xi)). \label{eq:def4}
\end{align} Therefore, the directional derivative of $p$ along the flow of $f$ is exactly given by the Poisson bracket evaluated at $\phi_t$. Now, to show $f$ and $p$ are Poisson commutative, from \eqref{eq:def4}, it is equivalent to show that $ \left.\dfrac{d}{dt} \right\vert_{t = 0} p\left(\phi_t(\xi)\right) = 0$. Note that both $p$ and $f$ are $\mathfrak{a}$-invariant, which means that they are constant on each $A$-orbits. Now, if the flow $\phi_t$ of $\tilde{X}_f$ maps each $A$-orbit to itself, then the point $\xi$ along the trajectory of the flow stays within the same $A$-orbit. Hence, its value remains unchanged as time evolves. To conclude the statement, we only need to show that $ \phi_t$ is an orbit-preserving map. For any $X \in \mathfrak{a}$ and $s \in \mathbb{R}$, direct computation shows that \begin{align*}
    \phi_t \left( \mathrm{Ad}^*(\exp(sX)) \xi  \right) = & \, \mathrm{Ad}^*(\exp(tf)) \mathrm{Ad}^*(\exp(sX)) \xi  \\
    = & \,  \mathrm{Ad}^*(\exp(sX)) \mathrm{Ad}^*(\exp(tf))\xi \quad \text{ (since $ f$ is $\mathfrak{a}$-invariant)} \\
    = & \,\mathrm{Ad}^*(\exp(sX)) \phi_t(\xi) \in \mathcal{O}_{\phi_t(\xi)}  := \phi_t(\mathcal{O}_\xi).
\end{align*} Here $\mathcal{O}_\xi := \{\mathrm{Ad}^*(\exp(tf))\xi: \xi \in \mathfrak{g}^*\}$ is a $A$-orbit in $\mathfrak{g}^*$. Hence, $p$ is constant along the flow, and \eqref{eq:def4} vanishes. This concludes the proof.
\end{proof}




The polynomial algebra $\textbf{Alg} \langle\textbf{Q}_\zeta\rangle$ admits the following filtration\begin{equation*}
\mathcal{Q}_0 := \mathbb{F} \subset \textbf{Alg}\langle\textbf{Q}_1 \rangle:= \mathfrak{t} \subset \dots \subset \textbf{Alg}\langle\textbf{Q}_\zeta\rangle. 
\end{equation*} Also, we could set \begin{align*}
    \dim_{FL}\textbf{Alg}\langle\textbf{Q}_\zeta\rangle= l_1 + \cdots +  l_\zeta.
\end{align*} Here $\dim_{FL}$ denotes the number of indecomposable monomials that generate $\textbf{Alg}\langle\textbf{Q}_\zeta\rangle$.  Note that $ \dim_{FL}$ should be viewed as an upper bound for the rank of a finitely generated algebra since the generators of $\textbf{Alg}\langle\textbf{Q}_\zeta\rangle$ are not guaranteed to be algebraically independent. That is, $\dim_{FL} \mathcal{Q}_\mathfrak{g}(d) \geq \dim_{KL} \mathcal{Q}_\mathfrak{g}(d),$ where $\dim_{KL}$ is the Krull  dimension of the algebra (see e.g. \cite{MR1322960}).


Determining the explicit form of the polynomial relations in $\eqref{eq:comm}$ can be computationally challenging, even in the dual-space context, and additional constraints may be applied to limit the form of monomials that constitute the polynomials. The decomposition of polynomials to a certain degree implies that the Poisson-Lie bracket $\{\cdot,\cdot\}$ in $\eqref{eq:comm}$ produces higher-degree polynomials, eventually leading to a polynomial that decomposes into algebraically independent polynomials. Now, we introduce the terminology proposed in \cite{MR4660510} to describe the expansion of the Poisson brackets, encompassing all possible combinations of polynomials to various degrees.  For a degree $\zeta$ polynomial, the bilinear operation $\{\cdot,\cdot\}$ of these \textit{compact forms} up to degree $\zeta$ are denoted as follows: 
\begin{align}
\nonumber
    \{\textbf{q}_1,\textbf{q}_2\} \sim & \text{ } \textbf{q}_2 + \textbf{q}_1^2  \\
    \nonumber
    \{\textbf{q}_2,\textbf{q}_2\} \sim & \text{ } \textbf{q}_3 + \textbf{q}_1 \{\textbf{q}_1 ,\textbf{q}_2\} \\
      \{\textbf{q}_2,\textbf{q}_3\} \sim & \text{ } \textbf{q}_4 +  \textbf{q}_2^2 + \textbf{q}_1 \{\textbf{q}_2,\textbf{q}_2\} \label{eq:compact} \\
      \nonumber
       \{\textbf{q}_2,\textbf{q}_4\} \sim & \text{ } \textbf{q}_5 +  \textbf{q}_2 \textbf{q}_3 + \textbf{q}_1 \{\textbf{q}_2,\textbf{q}_3\}\\
       \nonumber
       \{\textbf{q}_2,\textbf{q}_5\} \sim & \text{ } \textbf{q}_6 + \textbf{q}_2 \textbf{q}_4 +  \textbf{q}_2^3 +  \textbf{q}_3^2 + \textbf{q}_1 \{\textbf{q}_2,\textbf{q}_4\} \\
       & \vdots  \nonumber
\end{align} For any $\textbf{q}_k,\textbf{q}_l$ with $1 \leq k,l \leq \zeta$ and $k+l \geq 2$,  the Poisson bracket in terms of the compact term is given as follows: \begin{align}
    \{\textbf{q}_k,\textbf{q}_l\} \sim \{\textbf{q}_g,\textbf{q}_h\}\sim &\, \textbf{q}_{k+l-1} + \textbf{q}_2 \textbf{q}_{k+l-3} + \textbf{q}_3 \textbf{q}_{k+l-4} + \ldots + \prod_{\ 
         j_1  + \ldots + j_\zeta = k+l-1
   } \textbf{q}_{j_1} \cdots \textbf{q}_{j_\zeta} + \ldots + \textbf{q}_1 \{\textbf{q}_k,\textbf{q}_{l-1}\} \nonumber\\
    \sim &\, \sum_{l_1 a_1 + l_2 a_2 + \ldots + l_\zeta a_\zeta = k + l } \textbf{q}_1^{a_1} \textbf{q}_2^{a_2} \cdots \textbf{q}_\zeta^{a_\zeta}. \label{eq:cp}
\end{align} Here $\textbf{q}_k^{a_k} = \left(p_1^{(k)}\right)^{w_1} \cdots \left(p_{l_k}^{(k)}\right)^{w_{l_k}} $ with $w_1 + \ldots + w_{l_k} = a_k $ and $k + l = g + h \leq \zeta$. Notably, in the scope of Equation \eqref{eq:cp}, Poisson brackets of identical degree are indistinguishable in terms of their compact forms. To illustrate, consider the scenario where $\deg \{\textbf{q}_2, \textbf{q}_2\}= \deg \{\textbf{q}_1, \textbf{q}_3\}$. Under such circumstances, we can ascertain that the expansion in $\{\textbf{q}_2, \textbf{q}_2\}$ is equivalent to that of $\{\textbf{q}_1, \textbf{q}_3\}$, which further translates to an expansion resembling $\textbf{q}_3 + \textbf{q}_1 \{\textbf{q}_1, \textbf{q}_2\}$. Note that $\textbf{q}_t = \{0\}$ for any $t \leq 0,$ and the number of the allowed polynomials in $\textbf{q}_k^{a_k}$ is $\binom{m_k + a_k - 1 }{a_k}$ for all $1 \leq k  \leq \zeta$. Clearly, depending on the embedding chain of Lie algebras, some of the coefficients appearing above will be zero. However, without further insight into the structure of the polynomial generators, there is a large number of monomials in the polynomials $p_k^{(l)}$ at a given degree. This means that finding the coefficient $\Gamma^{s_1,...,s_r}_{uv}$ and determining the allowed monomials in $\textbf{q}_1^{a_1} \textbf{q}_2^{a_2} \cdots \textbf{q}_\zeta^{a_\zeta}$ are difficult tasks. In the following section, we will introduce a method which allows to simplify the number of terms appearing in the Poisson brackets, i.e., in polynomial expansions of a given degree. 

\section{The grading of polynomials in $\mathcal{Q}_\mathfrak{g}(d)$}

\label{3}

In Section $\ref{2}$, we presented a comprehensive framework for the development of polynomial algebras through a centralizer subalgebra. In this section, we shall elaborate on the specific terminology and concepts that enable us to anticipate and subsequently reduce the permissible set of polynomials appearing in non-trivial Poisson brackets. This involves a meticulous analysis of the structural properties inherent in these algebraic constructions. We also emphasize that the indices presented here differ from those in Section \ref{2}, and it is crucial to avoid confusing the two sets.  In the following,  assume that $\mathfrak{g} = \bigoplus_{r \in J} \mathfrak{g}_r $ with $[\mathfrak{g}_r,\mathfrak{g}_s] \subset \mathfrak{g}_t$,  ($r,s,t \in J$), where $J \subset \mathbb{N}$ is a finite index set, and $\mathfrak{g}_1, \mathfrak{g}_2$ are subalgebras of $\mathfrak{g}$. Then $S(\mathfrak{g}) \cong \bigotimes_{r \in J} S(\mathfrak{g}_r)$. Recall that    $\mathcal{Q}_\mathfrak{g}(d) = \mathfrak{t}  \oplus \bigoplus_{k \in \Omega}  \mathcal{Q}_k$ is a graded polynomial algebra with a Poisson bracket $\{\cdot,\cdot\}$.  For any homogeneous polynomial $p^{(k)} \in \mathcal{Q}_k$, the construction involves using the degree for grading and identifying monomials of the form \begin{align*}
 p^{(i_1 + \ldots+i_m)} =  x_1^{a_1}\cdots x_{s_1}^{a_{s_1}}\cdots  \underbrace{x_{{s_{w-1}}+1}^{a_{s_{w-1}+1}}\cdots x_{s_w}^{a_{s_w}}}_{\text{elements in $\mathfrak{g}_w$}}  \cdots  x_{s_{m-1}+1}^{a_{s_{m-1}+1}} \cdots x_{s_m}^{a_{s_m}} \in \mathcal{Q}_{i_1 + \ldots + i_m}.
\end{align*} Here $i_1 + \ldots + i_m = k $, $x_{{s_{w-1}}+1}^{a_{s_{w-1}}+1}\cdots x_{s_w}^{a_{s_w}} $ with $a_{s_{w-1}+1} + \ldots + a_{s_w} = i_w$ is a monomial in $S^{i_w}(\mathfrak{g}) $ for all $ 1\leq w \leq m$. This induces the following definition.

\begin{definition}
\label{grading}
  A $\textit{grading of a monomial}$ $\mathcal{G}$ assigns to each monomial $p \in \mathcal{Q}_{i_1+\ldots+i_m}/\{0\}$ an ordered tuple $(i_1, \ldots, i_m) \in \mathbb{N}_0^m$ where each $i_j$ is an element of $\mathbb{N}_0$, symbolizing the quantity of times that the generator $\mathfrak{g}_j$ appears in the monomial.  
\end{definition}
\begin{remark}
\label{re2.4}
(i)  It is clear that the grading $\mathcal{G}$ defined above is not a function, as the grading of a monomial may correspond to many distinct polynomials in $\mathcal{Q}_{i_1+\ldots+i_m}/\{0\}$. We will see lots of examples in Section \ref{4} to illustrate this point.

 (ii)   Observe that the grading of a monomial can be established in any polynomial algebra. For example, let $\mathcal{P} \subset S(\mathfrak{g}) $ be a graded non-Abelian polynomial algebra that admits a PBW basis. That is, $\mathcal{P} = \bigoplus_{k \in I} \mathcal{P}_k$ with an ordered basis generated by the power of the generators of $\mathfrak{g}$, where $\mathcal{P}_k$ consists of all degree $k$ polynomials. Here $I \subset \mathbb{N}$ is a non-empty ordered set of indices. The grading of a monomial is described by $\mathcal{G}: \mathcal{P}_k/\{0\} \rightarrow \mathbb{N}_0 \times \cdots \times \mathbb{N}_0$.
 
  (iii) Observe that if $p$ remains constant, we define $\mathcal{G}(p)$ as $(0, \ldots, 0)$. However, based on our construction of polynomial algebras in Section $\ref{2}$, we exclude $\mathcal{Q}_0 = \mathbb{F}$. Consequently, we will not consider the scenario where $\mathcal{G}(p) = (0, \ldots, 0)$.
\end{remark}



We then have a look at some properties of $\mathcal{G}$. Starting with the grading of two monomials.

\begin{lemma}
\label{mul}
  For any non-zero monomials $p ,q \in \mathcal{Q}_k,$ then $\mathcal{G}(pq) = \mathcal{G}(p) + \mathcal{G}(q)$.
\end{lemma}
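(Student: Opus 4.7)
The proof is essentially a routine unpacking of Definition \ref{grading}, exploiting the commutativity of $S(\mathfrak{g})$ as a polynomial ring. The plan is as follows.

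First, I would write both $p$ and $q$ in the grouped monomial form introduced just before Definition \ref{grading}, using the tensor product decomposition $S(\mathfrak{g}) \cong \bigotimes_{r \in J} S(\mathfrak{g}_r)$:
\[
p = \prod_{w=1}^m \prod_{l=s_{w-1}+1}^{s_w} x_l^{a_l}, \qquad q = \prod_{w=1}^m \prod_{l=s_{w-1}+1}^{s_w} x_l^{b_l}.
\]
By Definition \ref{grading}, this gives $\mathcal{G}(p) = (i_1,\ldots,i_m)$ with $i_w = \sum_{l=s_{w-1}+1}^{s_w} a_l$, and analogously $\mathcal{G}(q) = (j_1,\ldots,j_m)$ with $j_w = \sum_{l=s_{w-1}+1}^{s_w} b_l$, since $i_w$ records precisely how many coordinate functions attached to the block $\mathfrak{g}_w$ appear (with multiplicity) in the monomial.

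Second, I would invoke the commutativity of the $x_l$'s in $S(\mathfrak{g})$ to rewrite $pq$ in the same grouped form, with exponents $a_l + b_l$ in each slot. The $w$-th component of the grading of the resulting monomial is then $\sum_{l=s_{w-1}+1}^{s_w} (a_l + b_l) = i_w + j_w$, so tuple-wise $\mathcal{G}(pq) = (i_1+j_1,\ldots,i_m+j_m) = \mathcal{G}(p)+\mathcal{G}(q)$. A brief sanity check ensures that $pq$ is itself a non-zero monomial so that $\mathcal{G}(pq)$ is defined: this holds because $S(\mathfrak{g})$ is a polynomial ring over a field, hence an integral domain, so the product of two non-zero monomials is a non-zero monomial.

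There is no real obstacle here — the statement is essentially the observation that the multi-degree of a monomial, refined block-by-block according to the decomposition $\mathfrak{g} = \bigoplus_{r \in J} \mathfrak{g}_r$, is additive under multiplication, just as in elementary commutative algebra. The only subtlety is notational bookkeeping of the nested block/exponent indexing, together with the mild point that although the lemma states $p,q \in \mathcal{Q}_k$, the product $pq$ need not lie in $\mathcal{Q}_k$; the identity is an equality of tuples in $\mathbb{N}_0^m$, valid in view of Remark \ref{re2.4}(ii), which extends the definition of $\mathcal{G}$ to arbitrary non-zero monomials in $S(\mathfrak{g})$.
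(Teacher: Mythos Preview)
Your proof is correct and follows essentially the same approach as the paper's own argument, which simply assigns gradings $\mathcal{G}(p)=(i_1,\ldots,i_m)$, $\mathcal{G}(q)=(j_1,\ldots,j_m)$ and states that $\mathcal{G}(pq)=(i_1+j_1,\ldots,i_m+j_m)$ by direct computation. Your version is more detailed---explicitly invoking commutativity of $S(\mathfrak{g})$, the integral-domain property, and the observation that $pq$ need not lie in $\mathcal{Q}_k$---but these are refinements of the same idea rather than a different route.
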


\begin{proof}
Without loss of generality, assume that $\mathcal{G}(p) = (i_1,\ldots,i_m)$ and $\mathcal{G}(q) = (j_1,\ldots,j_m)$. Here $i_1 + \ldots + i_m = j_1 + \ldots + j_m = k$. A direct computation shows that $$\mathcal{G}(pq) = (i_1 +j_1,\ldots,i_m + j_m) = (i_1,\ldots,i_m) + (j_1,\ldots,j_m) = \mathcal{G}(p) + \mathcal{G}(q).$$
\end{proof}

\begin{remark}
\label{3.4}
(i) Note that if a monomial $p \in \mathcal{Q}_\mathfrak{g}(d)$ is decomposable, then, by definition, $p = \prod_{l=1}^k p_l$ where each $p_l$ is an indecomposable monomial in $\mathcal{Q}_\mathfrak{g}(d)$. It is possible that there exists an indecomposable monomial $p' \in \mathcal{Q}_\mathfrak{g}(d)$ such that $\mathcal{G}(p) = \sum_{l=1}^k \mathcal{G}(p_l) = \mathcal{G}(p')$. Therefore, to avoid confusion, we will represent the grading of all decomposable polynomials as a sum. That is, $\mathcal{G}(p) = \mathcal{G}(p_1) + \ldots + \mathcal{G}(p_k)$.

 (ii) Consider an indecomposable polynomial $p$ expressed as $p =   c_1 p_1 + \ldots + c_k p_k$, where $c_l \in \mathbb{F}$ for each $l \in \{1,\ldots,k\}$. We define the operation $\tilde{+}$ such that $\mathcal{G}(p) = \mathcal{G}(p_1) \tilde{+} \ldots \tilde{+} \mathcal{G}(p_k)$. Obviously, the operation $\tilde{+}$ is both associative and commutative. We shall refer to such $\mathcal{G}(p)$ as an in-homogeneous grading, whereas the grading within each $p_l$ is homogeneous. Notably, if there are indices $s \neq l \in \{1,\ldots,k\}$ such that $\mathcal{G}(p_s) = \mathcal{G}(p_l)$, then $$\mathcal{G}(p) = \underbrace{\mathcal{G}(p_1) \tilde{+} \ldots  \tilde{+} \mathcal{G}(p_s) \tilde{+} \ldots  \tilde{+} \mathcal{G}(p_{l-1}) \tilde{+} \mathcal{G}(p_{l+1}) \tilde{+} \ldots \tilde{+} \mathcal{G}(p_k)}_{k-1\text{-terms}}.$$ This imposes limitations on the permitted generators in the Poisson bracket.
\end{remark}


We now provide two examples demonstrating the process of computing the grading of polynomial algebra generators. These examples illustrate how the decomposition of a Lie algebra influences the grading of the generators. 

\subsection{Lie algebra with decomposition into two subalgebras}

\label{3.1}

Now, we will present some calculations on the grading of a monomial in the symmetric algebra $S(\mathfrak{g})$ induced from a Lie algebra that decomposes into two subalgebras. Let $\mathfrak{g}_1$ and $\mathfrak{g}_2$ be subalgebras in $\mathfrak{g}$ such that $\mathfrak{g}_1 \cap \mathfrak{g}_2 = \{0\}$. In this way, $\mathfrak{g}$ has the vector space decomposition given by $\mathfrak{g} = \mathfrak{g}_1 \oplus \mathfrak{g}_2,$ which satisfies the following commutator relations:
\begin{equation}
[\mathfrak{g}_1,\mathfrak{g}_1] \subset \mathfrak{g}_1 , \quad[ \mathfrak{g}_2, \mathfrak{g}_2 ] \subset \mathfrak{g}_1,  \quad [\mathfrak{g}_1, \mathfrak{g}_2 ] \subset \mathfrak{g}_2.  \label{eq:Z2commu}
\end{equation}  In the following, $\oplus$ denotes a vector space direct sum.  Assume that $\mathfrak{g}_1 = \mathrm{span} \{x_1,\ldots,x_u\}$ and $\mathfrak{g}_2 = \mathrm{span} \{x_{u+1},\ldots,x_{u+v}\}$ such that $\mathfrak{g} = \mathfrak{g}_1 \oplus\mathfrak{g}_2$. Then $S(\mathfrak{g}) \cong S(\mathfrak{g}_1) \otimes S(\mathfrak{g}_2) $ is a vector space isomorphism. We further assume that a polynomial with degree $i_1+i_2$ in $\mathcal{Q}_\mathfrak{g}(d)$ has the form of \begin{align*}
    p^{(i_1+i_2)}=   \underbrace{x_1^{a_1} \cdots x_u^{a_u}}_{\text{elements in $\mathfrak{g}_1$}} \underbrace{x_{u+1}^{a_{u+1}} \cdots x_{u+v}^{a_{u+v}}}_{\text{elements in $\mathfrak{g}_2$}} , \quad \begin{matrix}
           a_1 + \ldots + a_u = i_1, \\
           a_{u+1} + \ldots + a_{u+v} = i_2
      \end{matrix}
  \end{align*} with  $  a_1,\ldots,a_{u+v} \in \mathbb{N}_0 $. From Definition $\ref{grading}$, for any non-zero $p \in \mathcal{Q}_{i_1+i_2}$, $\mathcal{G}(p) = (i_1,i_2)$ where $i_1$ and $i_2$ are the numbers of elements that belong to $\mathfrak{g}_1$ and $\mathfrak{g}_2$, respectively.   

  \begin{lemma}
  \label{3.6}
      Let $\mathfrak{g}$ be a Lie algebra with an ordered basis $\beta_{\mathfrak{g}}$ satisfying the commutator relations in $\eqref{eq:Z2commu}$, and let $\mathcal{Q}_{\mathfrak{g}}(d) \subset S(\mathfrak{g})^{\mathfrak{g}_1}$. For any non-zero indecomposable monomial $p,q \in \mathcal{Q}_\mathfrak{g}(d)$, suppose that $\mathcal{G}(p) = (i_1,i_2)$  and $\mathcal{G}(q) = (i_1',i_2')$.   Then a (non-vanishing) Poisson bracket will have the effect of
\begin{align}
     \mathcal{G}\left(\{p,q\}\right)= (i_1+i_1'-1,i_2+i_2' ) \tilde{+} (i_1+i_1'+1,i_2+i_2'-2) .
\end{align}
  \end{lemma}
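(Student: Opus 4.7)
My plan is to expand the Poisson bracket $\{p,q\}$ via its coordinate formula
\begin{equation*}
\{p,q\} \;=\; \sum_{\alpha,\beta,\gamma} C^{\gamma}_{\alpha\beta}\, x_\gamma\, \frac{\partial p}{\partial x_\alpha}\, \frac{\partial q}{\partial x_\beta},
\end{equation*}
and to analyse how the bi-grading $\mathcal{G}$ transforms under each of the three elementary operations appearing in each summand: differentiation with respect to $x_\alpha$, differentiation with respect to $x_\beta$, and multiplication by $x_\gamma$. By Lemma~\ref{mul} the grading is additive on products, so differentiation with respect to a $\mathfrak{g}_1$-variable shifts $\mathcal{G}$ by $(-1,0)$ and differentiation with respect to a $\mathfrak{g}_2$-variable shifts it by $(0,-1)$, while multiplication by $x_\gamma$ contributes $(+1,0)$ or $(0,+1)$ depending on the subalgebra to which $x_\gamma$ belongs.

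Next, I would invoke the commutator relations \eqref{eq:Z2commu} to restrict the admissible triples $(x_\alpha,x_\beta,x_\gamma)$ with $C^{\gamma}_{\alpha\beta}\neq 0$ to three patterns: $(\mathfrak{g}_1,\mathfrak{g}_1;\mathfrak{g}_1)$, $(\mathfrak{g}_1,\mathfrak{g}_2;\mathfrak{g}_2)$ together with its $\alpha\leftrightarrow\beta$ image, and $(\mathfrak{g}_2,\mathfrak{g}_2;\mathfrak{g}_1)$. Adding the corresponding shifts to $\mathcal{G}(p)+\mathcal{G}(q)=(i_1+i_1',\,i_2+i_2')$ produces exactly two possible outputs: the first two patterns each contribute $(i_1+i_1'-1,\,i_2+i_2')$, while the third contributes $(i_1+i_1'+1,\,i_2+i_2'-2)$. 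Collecting like homogeneous components and applying the in-homogeneous notation $\tilde{+}$ from Remark~\ref{3.4}(ii) then delivers the stated identity.

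The main subtlety, rather than a genuine obstacle, is to treat correctly the boundary configurations in which one of the two candidate gradings cannot actually be realised: if $i_1=i_1'=0$ no $\mathfrak{g}_1$-derivative is available, while if $i_2+i_2'\leq 1$ the third pattern is empty and the grading $(i_1+i_1'+1,\,i_2+i_2'-2)$ even has a negative component and must be discarded. Following the spirit of Remark~\ref{re2.4}(i), the lemma should be interpreted as enumerating the only gradings that can possibly appear in the expansion of $\{p,q\}$; individual monomials with those gradings may still vanish due to the specific coefficients of $p$ and $q$, but no other grading can enter. Once this interpretive point is fixed, the argument is essentially a compact bookkeeping of how the shifts $(\pm 1,0)$ and $(0,\pm 1)$ combine under the constraints imposed by \eqref{eq:Z2commu}.
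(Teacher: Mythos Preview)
Your argument is correct. Where the paper factors $p=A_1A_2$, $q=B_1B_2$ according to the splitting $\mathfrak{g}=\mathfrak{g}_1\oplus\mathfrak{g}_2$ and expands $\{A_1A_2,B_1B_2\}$ by iterated Leibniz, you instead work directly with the coordinate formula $\{p,q\}=\sum C^{\gamma}_{\alpha\beta}x_\gamma\,\partial_\alpha p\,\partial_\beta q$ and track the grading shifts of each factor. Both computations are bookkeeping exercises driven by the same input \eqref{eq:Z2commu}, so the content is the same; your organisation is simply flatter.

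One point worth noting: the paper invokes the hypothesis $\mathcal{Q}_{\mathfrak{g}}(d)\subset S(\mathfrak{g})^{\mathfrak{g}_1}$ to kill the branch $\{A_1,B_1B_2\}A_2$ (each generator of $A_1$ lies in $\mathfrak{g}_1$ and $q$ is $\mathfrak{g}_1$-invariant). Your approach never uses this hypothesis, and indeed it is not needed for the stated conclusion: the patterns $(\mathfrak{g}_1,\mathfrak{g}_1;\mathfrak{g}_1)$ and $(\mathfrak{g}_1,\mathfrak{g}_2;\mathfrak{g}_2)$ that would come from that branch already land in the grading $(i_1+i_1'-1,i_2+i_2')$, so nothing new would appear even if the centralizer condition failed. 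Your proof therefore establishes a slightly more general statement than the lemma as written. Your discussion of the degenerate boundary cases (no $\mathfrak{g}_1$-derivative available, or $i_2+i_2'\le 1$) is also a useful clarification that the paper leaves implicit.
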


  \begin{proof}
  Without loss of generality, assume that $p = A_1A_2, q = B_1B_2 $, where\begin{align*}
     &  A_1 = x_1^{a_1} \cdots x_u^{a_u},\text{ }    \text{ } B_1 = x_1^{a_1'} \cdots x_u^{a_u'} ,  \\
      & A_2 = x_{u+1}^{a_{u+1}} \cdots x_{u+v}^{a_{u+v}}, \text{ } B_2 = x_{u+1}^{a_{u+1}'} \cdots x_{u+v}^{a_{u+v}'}.
  \end{align*} Then using the Leibniz rule
  \begin{align*}
       \left\{A_1A_2,B_1B_2  \right\} =  \{A_1,B_1B_2\}A_2 + A_1\{A_2,B_1B_2\}.
  \end{align*}
  Observe that, by repeatedly applying the Leibniz rule and the definition of the centralizer subalgebra,
  \begin{align*}
      \{A_1,B_1B_2\} = \sum_{m=1}^u x_1^{a_1} \cdots x_{m-1}^{a_{m-1}} \{x_m,B_1B_2\} x_{m+1}^{a_{m+1}} \cdots x_u^{a_u} =0.
  \end{align*} Hence $\left\{A_1A_2,B_1B_2  \right\} =   A_1\{A_2,B_1B_2\}$. On the other hand, the second term has the following expansion $ \left\{A_2,B_1B_2  \right\} =   \{A_2,B_1\} B_2 + B_1 \{A_2,B_2\} $. In particular,   \begin{align*}
      \{A_2,B_1\} =    \sum_m \sum_k x_1^{a_1'}\cdots x_{k-1}^{a_{k-1}'} \left\{x_m^{a_m},x_k^{a_k'}\right\} x_{k+1}^{a_{k+1}'} \cdots x_u^{a_u'} x_{u+1}^{a_{u+1}} \cdots x_{m-1}^{a_{m-1}'}x_{m+1}^{a_{m+1}} \cdots x_{u+v}^{a_{u+v}},
  \end{align*}
  where
  \begin{align*}
      \left\{x_m^{a_m},x_k^{a_k'}\right\} = \sum_{r=1}^{a_k'-1} \sum_{s=1}^{a_m-1} x_k^r x_m^s \{x_m,x_k\}x_k^{a_k'-1-r} x_m^{a_m-1-s}.
  \end{align*}

 We now compute the grading for $\{A_1A_2,B_1B_2\}$. The gradings for these terms are
 \begin{align}
      \mathcal{G}\left(A_1\right) = (i_1,0), \quad    \mathcal{G}\left(A_2\right) = (0,i_2),\quad \mathcal{G}\left(B_1\right) = (i_1',0),\quad \mathcal{G}\left(B_2\right) = (0,i_2').
  \end{align}
  Hence,  using the commutator relation $\eqref{eq:Z2commu}$, we deduce that
  \begin{align*}
      \mathcal{G}\left( \left\{A_2,B_1  \right\}\right) = \, & \left(a_1' + \ldots + a_{k-1}' + (a_k'-1) + a_{k+1}' + \ldots + a_u'\right)  \\
      & + \left(a_{u+1} + \ldots + a_{m+1} + (a_m -1) + \ldots + a_{u+v} \right) +1\\
      = \,& \left(i_1-1 , (i_2'-1) +1\right) =\left( i_1  -1, i_2' \right) .
  \end{align*}
   It follows that
  \begin{align*}
      \mathcal{G} \left(A_1\{A_2,B_1 \}B_2\right) = (0,i_2') +  \mathcal{G} \left(\left\{A_2,B_1  \right\}\right) = (i_1,i_2') + (i_1'-1, i_2) = (i_1+i_1'-1,i_2+i_2').
  \end{align*}
 Similarly to the previous cases, the grading of the rest of the cases is given by \begin{align*}
   \mathcal{G} \left(A_1B_1\{A_2,B_2\} \right) = & \,  (i_1+i_1',0)+ \mathcal{G}\left(\{A_2,B_2\}\right) = (i_1+i_1'+1,i_2+i_2'-2) .
  \end{align*}  In conclusion, using Remark $\ref{3.4}$ (ii), we deduce that $\mathcal{G} \left(\{A_1A_2,B_1B_2\}\right) = (i_1 +i_1'-1,i_2+i_2')    \tilde{+} (i_1+i_1'+1,i_2+i_2'-2) $ as required.
  \end{proof}




\begin{corollary}
\label{3.7}
    Let $\mathcal{Q}_\mathfrak{g}(d) \subset S(\mathfrak{g})^{\mathfrak{g}_1}$. For any non-zero indecomposable monomials $p,q \in \mathcal{Q}_\mathfrak{g}(d)$, assume that $\mathcal{G}(p) = (i_1,0)$ and $\mathcal{G}(q) = (i_1',i_2')$ with $i_1',i_2',i_1  \neq 0 $ or $\mathcal{G}(p) = (i_1,i_2)$ and $\mathcal{G}(q) =(i_1',0)$. Then $\left\{p,q\right\} =0 $.
\end{corollary}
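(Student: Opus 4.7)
My strategy is to derive this corollary directly from the first step of the proof of Lemma \ref{3.6}, rather than by manipulating the grading formula it produces. The key observation is that a non-zero monomial with grading $(i_1,0)$ lies entirely in the subalgebra $S(\mathfrak{g}_1)$ of $S(\mathfrak{g})$, and the hypothesis $\mathcal{Q}_\mathfrak{g}(d)\subset S(\mathfrak{g})^{\mathfrak{g}_1}$ will then force the bracket to vanish via the Leibniz rule alone; no grading arithmetic is needed.

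First, I would reduce the second case, in which $\mathcal{G}(q)=(i_1',0)$, to the first by antisymmetry of the Poisson bracket, $\{p,q\}=-\{q,p\}$; it therefore suffices to treat the case $\mathcal{G}(p)=(i_1,0)$. Writing $p=x_1^{a_1}\cdots x_u^{a_u}$ with $a_1+\cdots+a_u=i_1$, a repeated application of the Leibniz rule gives
\begin{equation*}
\{p,q\}=\sum_{m=1}^{u} a_m\, x_1^{a_1}\cdots x_m^{a_m-1}\cdots x_u^{a_u}\,\{x_m,q\},
\end{equation*}
which is precisely the calculation that produced $\{A_1,B_1B_2\}=0$ inside the proof of Lemma \ref{3.6}. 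Since $q\in\mathcal{Q}_\mathfrak{g}(d)\subset S(\mathfrak{g})^{\mathfrak{g}_1}$, every generator $x_m$ of $\mathfrak{g}_1^*$ satisfies $\{x_m,q\}=0$ by Definition \ref{2.1}, so each summand vanishes and hence $\{p,q\}=0$.

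There is no real obstacle here: the content of the corollary is just the observation that $\mathfrak{g}_1$-invariance of the ambient centralizer propagates through the Leibniz rule to annihilate any bracket in which one factor sits in $S(\mathfrak{g}_1)$. I would also note that the non-vanishing conditions $i_1,i_1',i_2'\neq 0$ stated in the hypothesis play no active role in the argument; they merely separate this case from the generic situation of Lemma \ref{3.6}, where both monomials carry nontrivial $\mathfrak{g}_2$-content and the bracket instead has in-homogeneous grading $(i_1+i_1'-1,i_2+i_2')\,\tilde{+}\,(i_1+i_1'+1,i_2+i_2'-2)$.
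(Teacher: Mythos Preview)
Your proposal is correct and aligns with the paper's intent: the corollary is stated without proof as an immediate consequence of Lemma~\ref{3.6}, and the vanishing is precisely the step $\{A_1,B_1B_2\}=0$ established inside that lemma's proof via the Leibniz rule and the centralizer condition, which you have isolated and applied directly. Your observation that the nonvanishing hypotheses on $i_1,i_1',i_2'$ are not actually needed for the argument is also accurate.
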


In line with Lemma $\ref{3.6}$, once the grading of the Poisson brackets has been set, the particular structures of the monomials on the right-hand side of the non-trivial brackets must encompass all related monomials with matching grading. This approach will facilitate the determination of the total count of all possible terms. For instance, replace $p$ by $B_{i_1i_2} $ and $q$ by $B_{i_1'i_2'}$. Based on the grading of $\{p,q\}$, using Lemma $\ref{3.6},$ the allowed terms that match the grading of $\{p,q\}$ are given by

\[  \{B_{i_1i_2},B_{i_1'i_2'}\}=  a_1 B_{i_1+i_1'-1,i_2+i_2'} + b_1 B_{i_1+i_1'+1,i_2+i_2'-2} \]
\[ +  \sum_{k+m=i_1+i_1'-1,l+n=i_2+i_2'} a_{klmn}  B_{kl} B_{mn}  +  \sum_{k+m=i_1+i_1'+1,l+n=i_2+i_2'-2} b_{klmn}  B_{kl} B_{mn}  \]
\[ +   \sum_{s+k+m=i_1+i_1'-1,l+n+t=i_2+i_2'} a_{klmnst}  B_{kl} B_{mn} B_{st}  +  \sum_{s+k+m=i_1+i_1'+1,l+n+t=i_2+i_2'-2} b_{klmnst}  B_{kl} B_{mn} B_{st} + .... \]
Here $a_1,b_1,\ldots,b_{klmnst}$ are arbitrary coefficients that can be determined by explicit Poisson relations of $\mathfrak{g}^*$.



\subsection{Lie algebras with a decomposition into three subalgebras}

Consider now that $\mathfrak{g}$ has more than two layers. Let $\mathfrak{g}_1,\mathfrak{g}_2,\mathfrak{g}_3$ be the subalgebras of $\mathfrak{g}$ such that $\mathfrak{g} = \mathfrak{g}_1 \oplus \mathfrak{g}_2 \oplus \mathfrak{g}_3 $ with the following commutator relations  \begin{align}
    [\mathfrak{g}_1,\mathfrak{g}_1] = & \,\{0\}, \text{ }
    [\mathfrak{g}_1,\mathfrak{g}_2] \subset \mathfrak{g}_2, \text{ }
    [\mathfrak{g}_1,\mathfrak{g}_3]  \subset  \mathfrak{g}_3, \label{eq:commutator3}\\
    \nonumber
     [\mathfrak{g}_2,\mathfrak{g}_2]  \subset  & \, \mathfrak{g}_2, \text{ }
    [\mathfrak{g}_2,\mathfrak{g}_3]  \subset  \mathfrak{g}_1, \text{ }
    [\mathfrak{g}_3,\mathfrak{g}_3]  \subset  \mathfrak{g}_3.
\end{align}
Then $S(\mathfrak{g}) \cong S(\mathfrak{g}_1) \otimes S(\mathfrak{g}_2) \otimes S(\mathfrak{g}_3)$.  As an example, consider that $\mathfrak{g}$ is a complex semisimple Lie algebra with a triangular decomposition. It is clear that $\mathfrak{g}$ admits the commutator relations in $\eqref{eq:commutator3}$.  Without loss of generality, assume that $\mathfrak{g}_1 = \mathrm{span} \{x_1,\ldots,x_u\}$, $\mathfrak{g}_2 = \mathrm{span} \{x_{u+1},\ldots,x_{u+v}\}$, and $ \mathfrak{g}_3  = \mathrm{span} \{x_{u+v+1},\ldots,x_{u+v+w}\} $. Let $\mathfrak{a}$ be a subalgebra of $\mathfrak{g}$ such that $\mathcal{Q}_\mathfrak{g}(d)$ is a polynomial algebra with respect to the subalgebra $\mathfrak{a}$. We further assume that a monomial with degree $i_1+i_2 +i_3$ in $\mathcal{Q}_{i_1+i_2 +i_3} \subset \mathcal{Q}_\mathfrak{g}(d)$ has the form of \begin{align*}
    p^{(i_1+i_2 +i_3)}=   \underbrace{x_1^{a_1} \cdots x_u^{a_u}}_{\text{elements in $\mathfrak{g}_1$}} \underbrace{x_{u+1}^{a_{u+1}} \cdots x_{u+v}^{a_{u+v}}}_{\text{elements in $\mathfrak{g}_2$}}\underbrace{x_{u+v+1}^{a_{u+v+1}} \cdots x_{u+v+w}^{a_{u+v+w}}}_{  \text{elements in $\mathfrak{g}_3$}} , \quad \begin{matrix}
           a_1 + \ldots + a_u = i_1 \\
           a_{u+1} + \ldots + a_{u+v} = i_2, \\
           a_{u+v+1} + \ldots + a_{u+v+w} = i_3
      \end{matrix}
  \end{align*} $ \text{ with } a_1,\ldots,a_{u+v+w} \in \mathbb{N}_0 $. By definition, for any non-zero $p \in \mathcal{Q}_{i_1+i_2 +i_3}$, we may write that $\mathcal{G}(p) = (i_1,i_2 ,i_3)$.

\begin{lemma}
\label{3.9}
  Let $\mathfrak{g}^*$ be its dual admitting the same relations as in $\eqref{eq:commutator3}$ in a Poisson-Lie bracket $\{\cdot,\cdot\}$. For any non-zero indecomposable monomials $p \in \mathcal{Q}_{i_1+i_2+i_3}$ and $q \in \mathcal{Q}_{i_1'+i_2'+i_3'}$, the following holds:

(i) Let $\mathcal{Q}_\mathfrak{g}(d_1) \subset S(\mathfrak{g})^{\mathfrak{g}_2}$. Then  \begin{align*}
    \mathcal{G} \left(\{p,q\}\right) = (i_1+i_1'-1,i_2+i_2',i_3+i_3') \tilde{+}(i_1+i_1'+1,i_2+i_2'-1,i_3+i_3'-1) \tilde{+}(i_1+i_1',i_2+i_2',i_3+i_3'-1)
\end{align*}

(ii) Let $\mathcal{Q}_\mathfrak{g}(d_2) \subset S(\mathfrak{g})^{\mathfrak{g}_1}$. Then \begin{align*}
    \mathcal{G} \left(\{p,q\}\right) = & \, (i_1+i_1'-1,i_2+i_2',i_3+i_3') \tilde{+}(i_1+i_1'+1,i_2+i_2'-1,i_3+i_3'-1) \tilde{+}(i_1+i_1',i_2+i_2',i_3+i_3'-1) \\
   & \, \tilde{+}(i_1+i_1',i_2+i_2'-1,i_3+i_3') .
\end{align*}
\end{lemma}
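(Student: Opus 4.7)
The plan is to mimic the derivation of Lemma \ref{3.6}, this time carrying three indices through the Leibniz rule. Given the decomposition $\mathfrak{g}=\mathfrak{g}_1\oplus\mathfrak{g}_2\oplus\mathfrak{g}_3$ with generators assigned to each block as in the paragraph preceding the statement, I would first write
\[
p=A_1A_2A_3,\qquad q=B_1B_2B_3,
\]
where $A_j\in S^{i_j}(\mathfrak{g}_j)$ and $B_j\in S^{i_j'}(\mathfrak{g}_j)$, so that
\[
\mathcal{G}(A_j)=(0,\dots,i_j,\dots,0),\qquad \mathcal{G}(B_j)=(0,\dots,i_j',\dots,0)
\]
carry weight only in the $j$-th slot. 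Then expand $\{p,q\}$ by the Leibniz rule in both arguments, producing the nine elementary brackets $\{A_i,B_j\}$ $(i,j=1,2,3)$ dressed by the remaining factors.

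The next step is to kill the brackets that vanish by the invariance hypothesis. In case (i), since $p,q\in S(\mathfrak{g})^{\mathfrak{g}_2}$ and the $\mathfrak{g}_2$-derivation acts by $\{x_m,\cdot\}$ for $x_m\in\mathfrak{g}_2$, one obtains $\{A_2,q\}=0$, eliminating the three $\{A_2,B_j\}$ contributions; symmetrically $\{p,B_2\}=0$. Analogously in case (ii), $\{A_1,q\}=0$ kills the three $\{A_1,B_j\}$ terms. In addition, the relation $[\mathfrak{g}_1,\mathfrak{g}_1]=\{0\}$ from \eqref{eq:commutator3} automatically forces $\{A_1,B_1\}=0$ in both cases.

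For each surviving elementary bracket I would record the grading using the commutator table \eqref{eq:commutator3}: each invocation of $[\mathfrak{g}_r,\mathfrak{g}_s]\subset\mathfrak{g}_t$ removes one unit from slots $r$ and $s$ and adds one to slot $t$. Concretely,
\[
\mathcal{G}(\{A_1,B_2\})=(i_1-1,i_2',0),\quad \mathcal{G}(\{A_1,B_3\})=(i_1-1,0,i_3'),\quad \mathcal{G}(\{A_2,B_2\})=(0,i_2+i_2'-1,0),
\]
\[
\mathcal{G}(\{A_2,B_3\})=(1,i_2-1,i_3'-1),\quad \mathcal{G}(\{A_3,B_3\})=(0,0,i_3+i_3'-1),
\]
and likewise for the remaining ones by Lemma \ref{mul}. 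Multiplying each by the grading of the accompanying $A$- and $B$-factors yields, after cancellations,
\[
\text{case (i):}\quad (i_1{+}i_1'{-}1,\,i_2{+}i_2',\,i_3{+}i_3'),\ (i_1{+}i_1'{+}1,\,i_2{+}i_2'{-}1,\,i_3{+}i_3'{-}1),\ (i_1{+}i_1',\,i_2{+}i_2',\,i_3{+}i_3'{-}1),
\]
and in case (ii) the same three triples plus the additional $(i_1+i_1',\,i_2+i_2'-1,\,i_3+i_3')$ coming from $\{A_2,B_2\}$ (which no longer vanishes under $\mathfrak{g}_1$-invariance). Combining these distinct gradings with the inhomogeneous sum $\tilde{+}$ introduced in Remark \ref{3.4}(ii) gives precisely the expressions of the statement.

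The only delicate bookkeeping, and hence the main obstacle, is checking that several different elementary brackets collapse to the \emph{same} triple (e.g.\ in case (i) both $\{A_1,B_2\}$ and $\{A_3,B_1\}$ land in the first slot after dressing), so that the final list, read modulo the idempotency rule of $\tilde{+}$ mentioned in Remark \ref{3.4}(ii), contains exactly three (resp.\ four) distinct gradings. Once that combinatorial collapse is verified case by case, the lemma follows.
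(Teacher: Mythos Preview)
Your proposal is correct and follows essentially the same route as the paper: write $p=A_1A_2A_3$, $q=B_1B_2B_3$, expand by Leibniz into the nine elementary brackets $\{A_i,B_j\}$, use the centralizer hypothesis to kill the row $\{A_2,q\}$ (case~(i)) or $\{A_1,q\}$ (case~(ii)), compute the grading of each survivor via \eqref{eq:commutator3}, and collapse duplicates under $\tilde{+}$. The paper carries this out explicitly only for part~(i) and leaves (ii) to the reader, exactly as you do.

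One caution: your remark ``symmetrically $\{p,B_2\}=0$'' is true but must \emph{not} be used to discard the individual terms $\{A_i,B_2\}$. The relation $\{p,B_2\}=0$ only says that the \emph{sum} $\sum_i \{A_i,B_2\}\prod_{k\neq i}A_k$ vanishes; the summands need not. In particular, in case~(i) the term $\{A_3,B_2\}$ (dressed) is precisely what produces the grading $(i_1+i_1'+1,\,i_2+i_2'-1,\,i_3+i_3'-1)$, so dropping it would lose that triple. Since you later keep $\{A_1,B_2\}$ among your survivors, you evidently did not make this mistake, but the sentence invites it and should be tightened in a written version.
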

\begin{proof}
   For any non-zero monomials $p  \in \mathcal{Q}_{i_1+i_2+i_3}$ and $q \in \mathcal{Q}_{i_1'+i_2'+i_3'},$ without loss of generality, we may write that $p = A_1A_2A_3$ and $q = B_1 B_2 B_3,$ where \begin{align*}
         &  A_1 = x_1^{a_1} \cdots x_u^{a_u},    \text{ } A_2 = x_{u+1}^{a_{u+1}} \cdots x_{u+v}^{a_{u+v}}, \text{ }  A_3 =  x_{u+v+1}^{a_{u+v+1}} \cdots x_{u+v+w}^{a_{u+v+w}} ; \\
      & B_1 = x_1^{a_1'} \cdots x_u^{a_u'} ,\text{ } B_2 = x_{u+1}^{a_{u+1}'} \cdots x_{u+v}^{a_{u+v}'} , \text{ }  B_3 =  x_{u+v+1}^{a_{u+v+1}'} \cdots x_{u+v+w}^{a_{u+v+w}'}.
    \end{align*} Here \begin{align*}
       & a_1 + \ldots + a_u = i_1, \quad  a_{u+1} + \ldots + a_{u+v} = i_2 , \quad  a_{u+v+1} + \ldots + a_{u+v+w} = i_3, \\
       & a_1' + \ldots + a_u' = i_1', \quad  a_{u+1}' + \ldots + a_{u+v}' = i_2' ,\quad a_{u+v+1}' + \ldots + a_{u+v+w}' = i_3'.
    \end{align*}  A direct computation shows that \begin{align}
     \{p,q\} = \{A_1,q\}A_2A_3 + A_1\{A_2,q\}A_3 + A_1 A_2 \{A_3,q\}. \label{eq:Leibniz}
 \end{align}

We will only show part (i) as the similar argument holds for the second part. By Leibniz's rule, the equation $\eqref{eq:Leibniz}$ becomes \begin{align}
 \nonumber
     \{p,q\}  
      = & \, \{A_1,B_1\} B_2B_3 A_2 A_3 + \{A_1,B_2\} B_1B_3A_2 A_3 + \{A_1,B_3\} B_1B_2 A_2 A_3   \\
      & \, +  A_1 A_2 \{A_3,B_1\}B_2 B_3 +  A_1 A_2 \{A_3,B_2\}B_1 B_3 +  A_1 A_2 \{A_3,B_3\}B_2 B_1 .  \label{eq:grading3}
 \end{align}   By definition, the grading of $\{p,q\}$ in $\eqref{eq:grading3}$ is equal to the grading of each of the components. From the commutator relations $\eqref{eq:commutator3}$, $\{A_1,B_1\} = 0$. We can then discard this term in $\eqref{eq:grading3}$. For the rest of the components in $\eqref{eq:grading3}$, we will compute them case by case. Starting from the term $ \{A_1,B_2\} B_1B_3A_2 A_3$, a direct computation shows that \begin{align*}
     \mathcal{G} \left(\{A_1,B_2\} B_1B_3 A_2 A_3\right) = & \, \mathcal{G} \left(\{A_1,B_2\}\right) + \mathcal{G} (B_1) + \mathcal{G} (B_3) + \mathcal{G} (A_2) +\mathcal{G} (A_3)  \\
     = & \, (i_1-1,i_2',0) + (i_1', 0,0 ) + (0,0,i_3')+ (0,i_2,0)+ (0,0,i_3 ) \\
     = & \, (i_1 +i_1'-1,i_2' +i_2,i_3'+i_3) .
 \end{align*} Similarly, we have \begin{align*}
     \mathcal{G} \left(\{A_1,B_3\} B_1B_2 A_2 A_3 \right) = & \,   (i_1 +i_1'-1,i_2' +i_2,i_3'+i_3)  =   \mathcal{G} \left(A_1 A_2 \{A_3,B_1\}B_2 B_3\right) ; \\
      \mathcal{G} \left(A_1 A_2 \{A_3,B_2\}B_1 B_3 \right) = & \, (1,i_2'-1,i_3'-1) + (i_1'+i_1', 0,0 ) + (0,i_2 ,i_3')   = (i_1+i_1'+1,i_2+i_2'-1,i_3+i_3'-1);\\
       \mathcal{G} \left(A_1 A_2 \{A_3,B_3\}B_2 B_1\right) = & \, (0,0,i_3'+i_3-1) + (i_1', i_2' ,0 )   + ( i_1  ,i_2,0 ) =  (i_1+i_1' ,i_2+i_2' ,i_3+i_3'-1).
 \end{align*} Summing all the terms together, we deduce that $\mathcal{G}\left(\{p,q\}\right)=  (i_1+i_1'-1,i_2+i_2',i_3+i_3') \tilde{+}(i_1+i_1'+1,i_2+i_2'-1,i_3+i_3'-1) \tilde{+}(i_1+i_1',i_2+i_2',i_3+i_3'-1)$,  as required.
\end{proof}

From the construction above, we can see that the grading of the polynomials in the Poisson bracket is heavily based on the commutator relations and the decomposition of Lie algebras. We now propose a generic way on finding the grading in the non-trivial bracket relations of a polynomial algebra. Assume that $\mathfrak{g} = \mathfrak{g}_1 \oplus \ldots \oplus \mathfrak{g}_m$ with $[\mathfrak{g}_s,\mathfrak{g}_r] \subset \mathfrak{g}_t$, where $1 \leq s,r,t\leq m $ with $m >3$. Let $p,q \in \mathcal{Q}_\mathfrak{g}(d)$ be the generators with $\mathcal{G}(p) = (i_1,\ldots,i_m)$ and $\mathcal{G}(q) = (i_1',\ldots,i_m') $. In this context, $i_w$ represents the count of generators of $\mathfrak{g}_w$ within $p,$ and $i_w'$ follows the same definition. By direct calculation, it turns out that there exists a sequence $(a_{1k},\ldots,a_{mk})$ with $1\leq k \leq \xi$ and $a_{1k},\ldots,a_{mk} \in \{-2,-1,0,1,2\}$ such that \begin{align}
\mathcal{G} \left(\{p,q\}\right) = \underbrace{\left(i_1 + i_1' + a_{11}, \ldots,i_m + i_m' + a_{mk}\right) \tilde{+} \ldots \tilde{+}\left(i_1 + i_1' + a_{1 \xi}, \ldots,i_m + i_m' + a_{m \xi}\right)}_{\text{contains $k$-terms of grading }}. \label{eq:gradinglong}
\end{align} Here $\xi < m$ is a finite integer. To this extent, using the grading in \eqref{eq:gradinglong}, we are able to reduce some terms in the compact forms given in \eqref{eq:compact}. Given the varying commutator relations from different Lie algebras, the specific grading will also vary depending on the particular algebra. In the following section, we show how the grading \eqref{eq:gradinglong} simplifies the components into a reduced compact form in different examples.



\section{Constructing polynomial Poisson algebras from subalgebras of $\mathfrak{sl}(3,\mathbb{C})$ via the grading method}

\label{4}

In this section, we will provide specific examples to illustrate the application of the grading method described in Section $\ref{3}$, aimed at identifying the potentially permissible monomials in the Poisson bracket relations. Specifically, we will examine the following reduction chains within the complex semisimple Lie algebra $\mathfrak{sl}(3,\mathbb{C})$ and its compact real form $\mathfrak{su}(3)$: $\mathfrak{so}(3) \subset \mathfrak{su}(3 )$, $\mathfrak{o}(3) \subset \mathfrak{sl}(3,\mathbb{C}) $, and $\mathfrak{h} \subset \mathfrak{sl}(3,\mathbb{C})$. Here, $\mathfrak{h}$ represents the Cartan subalgebra of $\mathfrak{sl}(3,\mathbb{C})$.  It can be demonstrated that identifying the grading for each generator of polynomial algebras simplifies the construction of the compact form. 

\subsection{The reduction chain $\mathfrak{so}(3) \subset \mathfrak{su}(3)$}
\label{4.1}
The first example that we consider is related to one of the best studied missing label problems, so-called Elliott chain $\mathfrak{so}(3) \subset \mathfrak{su}(3)$ relevant to the study of the Elliot model in Nuclear Physics \cite{MR4411095,MR4660510,MR92620,MR94178}. The generators of the Lie algebra $\mathfrak{su}(3)$ admit the Gell-Mann basis \cite{MR4411095}, and the subalgebra $\mathfrak{so}(3)$ is spanned by three orbital angular momentum operators $\textbf{L} = (L_{-1},L_0,L_{+1})$, where  \[ L_0=E_{11}-E_{22} \] \[ L_{+1}= -E_{13}-E_{32} \] \[ L_{-1}= E_{31} + E_{23}. \]    Here $E_{ij}$ is the $3\times 3$ elementary matrix with entries $(E_{ij})_{kl}=\delta_{ik}\delta_{jl}$, where we introduced the Kronecker delta.  In terms of these matrices, the generators are explicitly given by
\[ J_0 = \sqrt{\frac{3}{2}}  Y, \text{ with } Y= \frac{1}{2} (E_{11}+E_{22}-2 E_{33}) \]
\[ J_{+1}= \sqrt{2} (E_{32} -E_{13} ), \quad J_{-1}= \sqrt{2}  (E_{31}-E_{23} ) \] \[ J_{+2} = E_{12} , \quad J_{-2} = E_{21}. \]

In this way, we can find a linear basis of $\mathfrak{su}(3)$ given by $\{L_0,L_{\pm 1}, J_0,J_{\pm 1},J_{\pm 2}\}$. It admits the linear decomposition $\mathfrak{su}(3) = \mathfrak{g}_1 \oplus \mathfrak{g}_2 ,$ where \begin{align}
    \mathfrak{g}_1 = \mathrm{span} \{L_0,L_{+1},L_{-1}\} \text{ and } \mathfrak{g}_2 = \mathrm{span} \{J_0,J_{+1},J_{-1},J_{+2},J_{-2}\} . \label{eq:basu3}
\end{align} Here $L_0$ plays the role of a generator of the Cartan subalgebra. The following commutator relations are satisfied: 

\begin{table}[h]
\centering
    \begin{tabular}{|c|c|c|c|c|c|c|c|c|}
        \hline
        $[\cdot,\cdot]$ & $L_0$ & $L_{+1}$ & $L_{-1}$  & $J_0$ & $J_{+1}$ & $J_{-1}$ & $J_{+2}$ & $J_{-2}$ \\  \hline
        $L_0$ & $0$ & $L_{+1}$ & $-L_{-1}$ &  $0$ & $J_{+1}$ & $-J_{-1}$ & $2J_{+2}$ & $-2J_{-2}$ \\ \hline
        $L_{+1}$ & $-L_{+1}$ & $0$  & $-L_0$   & $-\frac{3\sqrt{3}}{4} J_{+1}$ & $-2\sqrt{2} J_{+2}$ & $-\frac{4}{\sqrt{3}} J_0$ & $0$ &$-\frac{1}{\sqrt{2}} J_{-1}$ \\ \hline
        $L_{-1}$  & $L_{-1}$ & $L_0$ & $0$    & $\frac{3\sqrt{3}}{4} J_{-1}$ & $\frac{4}{\sqrt{3}}J_0$ & $2\sqrt{2}J_{-2}$ &$\frac{1}{\sqrt{2}} J_{+1}$ & $0$ \\ \hline
        $J_0$ & $0$ & $\frac{3\sqrt{3}}{4} J_{+1}$ & $-\frac{3\sqrt{3}}{4} J_{-1}$ & $0$  & $\frac{3\sqrt{3}}{2} L_{+1}$ & $-\frac{3\sqrt{3}}{2}L_{-1}$ & $0$ & $0$ \\ \hline
        $J_{+1}$ & $-J_{+1}$ & $2\sqrt{2} J_{+2}$ & $-\frac{4}{\sqrt{3}}J_0$   & $-\frac{3\sqrt{3}}{2} L_{+1}$ & $0$  & $-2L_0$ & $0$ & $\sqrt{2}L_{-1}$ \\ \hline
       $J_{-1}$ & $J_{-1}$& $\frac{4}{\sqrt{3}} J_0$ & $-2\sqrt{2} J_{-2}$   & $\frac{3\sqrt{3}}{2} L_{-1}$ &$2L_0$ &$0$  & $- \sqrt{2} L_{+1}$ & $0$ \\ \hline
        $J_{+2}$ & $-2J_{+2}$ & $0$ &$ -\frac{1}{\sqrt{2}}J_{+1}$  &$0$ & $0$ & $ \sqrt{2}L_{+1}$  & $0$  & $L_0$ \\ \hline
       $J_{-2}$ & $2J_{-2}$ & $\frac{1}{\sqrt{2}}J_{-1}$ & $0$   & $0$ & $-\sqrt{2} L_{-1}$ &$0$ & $-L_0$ & $0$  \\ \hline
    \end{tabular}

    \quad

    \caption{Commutator relations of $\mathfrak{su}(3)$}
    \label{tab:placeholder_label}
\end{table}

Analogously, in the dual space $\mathfrak{su}^*(3),$ we consider the basis $\beta_{\mathfrak{su}^*(3)} = \{l_0, l_{\pm 1}, j_0, j_{\pm 1}, j_{\pm 2} \}$ with the similar non-trivial Poisson bracket defined in Table \ref{tab:placeholder_label}. Later, we will determine the finitely-generated polynomial algebra in the centralizer $S(\mathfrak{su}(3))^{\mathfrak{so}(3)}$.

\subsubsection{Elliott chain and construction of the generators via a weight zero type criteria}
\label{4.1.1}

To determine elements in the commutant, we use the Cartan generator (which we denote $L_0$ for elements in the Lie algebra and $l_0$ for the dual space) of $\mathfrak{g}_2$ and the related eigenvalue for the variables of the dual space of $ \mathfrak{g}_2$, but also for the variable of the dual space of $ \mathfrak{g}_1$. If the construction of the Poisson centralizer is not based on basis-dependent results, for example, the construction of a Casimir invariant, certain properties can be highlighted on a certain basis. We propose labeling the generators of $\mathfrak{g}_2$ as $L_t$ ($l_t$ for the dual space) where $t$ are the numbers of the eigenvalue relative to $L_0$ and denoting the element of $\mathfrak{g}_2$ as $J_t$ ($j_t$ for the dual space). Namely, we have 
 \begin{equation}
\{l_0,l_t\}= \lambda_t l_t,  \quad \quad \{l_0,j_t\}= \chi_t j_t.
\end{equation}
We now look for monomials such that their eigenvalue relative to $l_0$ is 0. This reduces the dimensionality of the problem and facilitates the construction of the commutant. In the basis from Table $\ref{tab:placeholder_label}$ (also valid in the Poisson setting), all terms have weight zero with respect to taking the sum of the index in each monomial of the different polynomials. From the first column of Table $\ref{tab:placeholder_label}$, the eigenvalues of all the generators are 
\begin{align}
    |L_0|=& \, 0,\quad |L_{+1}|=1,\quad |L_{-1}|=-1; \\
     |J_0|=& \, 0,\quad |J_{\pm1}|=\pm 1,\quad |J_{\pm 2}|=\pm 2 \, .
\end{align} This points out how an appropriate basis can be used not only to constrain the number of monomials by the degree in the PBW basis of the Lie algebras, but also to use criteria such as the eigenvalue relative to certain Cartan generators or certain gradings. Using symbolic computing packages, we are able to find that the generators (linearly independent and indecomposable polynomials) of the Poisson centralizer
$S(\mathfrak{su}(3))^{\mathfrak{so}(3)}$
 are given by
\begin{align*}
 M_1 &= l_0^2 - 2 l_{-1} l_{+1} ; \\
 M_2 &= j_0^2 - \frac{9}{8} j_{-1} j_{+1} + \frac{9}{2} j_{-2} j_{+2}; \\
 M_3 &= l_0^2 j_0 + l_{-1} l_{+1} j_0 - \frac{3}{4} \sqrt{3} l_0 l_{+1} j_{-1} +  \frac{3}{2} \sqrt{\frac{3}{2}} l_{+1}^2 j_{-2} -   \frac{3}{4} \sqrt{3} l_0 l_{-1} j_{+1} +  \frac{3}{2} \sqrt{\frac{3}{2}} l_{-1}^2 j_{+2}; \\
 M_4 &= -  \frac{16}{27} j_0^3 + j_0 j_{-1} j_{+1} - \frac{3}{2}  \sqrt{\frac{3}{2}} j_{-2} j_{+1}^2 - \frac{3}{2} \sqrt{\frac{3}{2}} j_{-1}^2 j_{+2} + 8 j_0 j_{-2} j_{+2}; \\
 M_5 &= \frac{16}{3}  l_{-1} l_{+1} j_0^2 - \frac{4}{\sqrt{3}} l_0 l_{+1} j_0 j_{-1}  + \frac{3}{2} l_{+1}^2 j_{-1}^2   -  8  \sqrt{\frac{2}{3}} l_{+1}^2 j_0 j_{-2} - \frac{4}{\sqrt{3}}  l_0 l_{-1} j_0 j_{+1}  +  l_0^2 j_{-1} j_{+1} \\
  &- 5 l_{-1} l_{+1} j_{-1} j_{+1} + 6 \sqrt{2} l_0 l_{+1} j_{-2} j_{+1} +  \frac{3}{2} l_{-1}^2 j_{+1}^2 - 8 \sqrt{\frac{2}{3}} l_{-1}^2 j_0 j_{+2}  + 6 \sqrt{2} l_0 l_{-1} j_{-1} j_{+2} \\
  & - 16 l_0^2 j_{-2} j_{+2} + 8 l_{-1} l_{+1} j_{-2} j_{+2}; \\
 M_6 &=  \frac{1}{4 \sqrt{2}}  l_0 l_{+1}^2 j_0 j_{-1}^2 - \frac{1}{16} \sqrt{\frac{3}{2}} l_{+1}^3 j_{-1}^3 - \frac{2}{\sqrt{3}}  l_0 l_{+1}^2 j_0^2 j_{-2}   + \frac{1}{2} l_{+1}^3 j_0 j_{-1} j_{-2} -  \frac{1}{8} \sqrt{\frac{3}{2}} l_0^2 l_{+1} j_{-1}^2 j_{+1}\\
 &- \frac{1}{16} \sqrt{\frac{3}{2}} l_{-1} l_{+1}^2 j_{-1}^2 j_{+1} + l_0^2 l_{+1} j_0 j_{-2} j_{+1} +  \frac{1}{2} l_{-1} l_{+1}^2 j_0 j_{-2} j_{+1} + \frac{1}{8} \sqrt{3} l_0 l_{+1}^2 j_{-1} j_{-2} j_{+1} - \frac{1}{2} \sqrt{\frac{3}{2}} l_{+1}^3 j_{-2}^2 j_{+1} \\
&-
\frac{1}{4 \sqrt{2}} l_0 l_{-1}^2 j_0 j_{+1}^2+ \frac{1}{8} \sqrt{\frac{3}{2}} l_0^2 l_{-1} j_{-1} j_{+1}^2  + \frac{1}{16} \sqrt{\frac{3}{2}} l_{-1}^2 l_{+1} j_{-1} j_{+1}^2 - \frac{1}{8}  \sqrt{3} l_0^3 j_{-2} j_{+1}^2 \\
&- \frac{3}{8} \sqrt{3} l_0 l_{-1} l_{+1} j_{-2} j_{+1}^2  + \frac{1}{16} \sqrt{\frac{3}{2}} l_{-1}^3 j_{+1}^3 + \frac{2}{\sqrt{3}}  l_0 l_{-1}^2 j_0^2 j_{+2} - l_0^2 l_{-1} j_0 j_{-1} j_{+2} \\
& - \frac{1}{2} l_{-1}^2 l_{+1} j_0 j_{-1} j_{+2} +   \frac{1}{8} \sqrt{3} l_0^3 j_{-1}^2 j_{+2}  +  \frac{3}{8} \sqrt{3} l_0 l_{-1} l_{+1} j_{-1}^2 j_{+2}  - \sqrt{\frac{3}{2}} l_0^2 l_{+1} j_{-1} j_{-2} j_{+2} \\
& - \frac{1}{2} \sqrt{\frac{3}{2}} l_{-1} l_{+1}^2 j_{-1} j_{-2} j_{+2} + \sqrt{3} l_0 l_{+1}^2 j_{-2}^2 j_{+2} - \frac{1}{2} l_{-1}^3 j_0 j_{+1} j_{+2} \\
&- \frac{1}{8} \sqrt{3} l_0 l_{-1}^2 j_{-1} j_{+1} j_{+2} + \sqrt{\frac{3}{2}} l_0^2 l_{-1} j_{-2} j_{+1} j_{+2} + \frac{1}{2} \sqrt{\frac{3}{2}} l_{-1}^2 l_{+1} j_{-2} j_{+1} j_{+2} \\
& + \frac{1}{2} \sqrt{\frac{3}{2}} l_{-1}^3 j_{-1} j_{+2}^2 - \sqrt{3} l_0 l_{-1}^2 j_{-2} j_{+2}^2.
\end{align*}

  Using the notation of Section $\ref{2},$ we deduce that $\textbf{Q}_6 = \textbf{q}_2 \sqcup \textbf{q}_3\sqcup \textbf{q}_4\sqcup \textbf{q}_6,$ where $\textbf{q}_2 = \{M_1,M_2\}$, $\textbf{q}_3 = \{M_3,M_4\}$, $\textbf{q}_4 = \{M_5\}$ and $\textbf{q}_6 = \{M_6\}$.  It is important to note that, in the following sections, the symbol $\textbf{q}_j$ will be used consistently in all illustrative examples. However, it should be observed that they may vary in cardinality from one context to another, and thus do not maintain a uniform length throughout. In the compact form, the relations can be reformulated as follows
  \begin{align}
  \nonumber
      \{\textbf{q}_2,\textbf{q}_2\} \sim & \, \textbf{q}_3   \\
      \nonumber
      \{\textbf{q}_2,\textbf{q}_3\} \sim &  \,  \textbf{q}_2^2 + \textbf{q}_4  \\
      \nonumber
       \{\textbf{q}_2,\textbf{q}_4\} \sim & \,  \textbf{q}_2\textbf{q}_3    \\
       \nonumber
       \{\textbf{q}_2,\textbf{q}_6\} \sim & \,  \textbf{q}_3\textbf{q}_4 + \textbf{q}_2^2\textbf{q}_3  \\
    \{\textbf{q}_3,\textbf{q}_4\} \sim & \,  \textbf{q}_2^3 + \textbf{q}_3^2 + \textbf{q}_6 \label{eq:compact1}  \\
    \nonumber
     \{\textbf{q}_3,\textbf{q}_6\} \sim & \,  \textbf{q}_2^4 + \textbf{q}_4^2 + \textbf{q}_2 \textbf{q}_6 + \textbf{q}_2\textbf{q}_3^2  + \textbf{q}_2^2\textbf{q}_4 \\
     \nonumber
        \{\textbf{q}_4,\textbf{q}_6\} \sim & \,  \textbf{q}_3^3 +    \textbf{q}_2^3\textbf{q}_3 +  \textbf{q}_2\textbf{q}_3 \textbf{q}_4 + \textbf{q}_3 \textbf{q}_6.
  \end{align}

We then limit the permissible monomials from $\eqref{eq:compact1}$ by employing the grading method.  Following the definition, we see that the generators derived from the Elliott chain admit the following homogeneous grading   \begin{align}
    \begin{matrix}
        \mathcal{G}\left(M_1\right)= (2,0),\quad \mathcal{G}\left(M_2\right)=(0,2) ,\quad \mathcal{G}\left(M_3\right)= (2,1)  \\
  \mathcal{G}\left(M_4\right)=(0,3),\quad \mathcal{G}\left(M_5\right)=(2,2), \quad \mathcal{G}\left(M_6\right)= (3,3).
    \end{matrix} \label{eq:grad1}
\end{align}

Using the original notations, we aim to close the bracket relations in the following forms \[ \{M_k,M_l\}= \sum_r \Gamma^r_{kl} M_r + \sum_{s,t} \Gamma^{st}_{kl} M_s M_t +  \sum_{u,v,w}  \Gamma^{uvw}_{kl} M_u M_v M_w \] such that \begin{align}
         \mathcal{G}\left(\{M_k,M_l\}\right)= \mathcal{G} \left(M_r\right) \tilde{+} \mathcal{G}\left(M_s M_t\right)  \tilde{+}\mathcal{G}\left(M_u M_v M_w\right).
\end{align} Here $\Gamma^r_{kl}, \Gamma^{st}_{kl}$ and $\Gamma^{uvw}_{kl}$ are arbitrary constants with indices running from $1$ up to $6$.

From Corollary $\ref{3.7}$, it is sufficient to conclude that $\{M_1,M_l\} = 0$ for all $1 \leq l \leq 6$, therefore $M_1$ is the central element of the polynomial algebra.  We now start with $M_2$, and calculate the grading of the term $\{M_2,M_l\}$ for all $3 \leq l \leq 6$. Using Lemma $\ref{3.6}$, we deduce
\begin{align}
    \mathcal{G} (\{M_2,M_3\}) = & \, (1,3) \tilde{+} (3,1), \text{ }\mathcal{G} (\{M_2,M_4\}) = (0,4) \tilde{+} (1,3),  \nonumber \\
    \mathcal{G} (\{M_2,M_5\}) = & \, (2,3) \tilde{+} (3,2), \text{ }\mathcal{G} (\{M_2,M_6\}) = (3,4) \tilde{+} (4,3).\label{eq:grad2}
\end{align}
By Lemma $\ref{mul},$ $(i_1,i_2)+(i_1',i_2')=(i_1+i_1',i_2+i_2')$. Together with $\eqref{eq:grad1},$ the permissible polynomials from the homogeneous gradings in \eqref{eq:grad2} are 
\begin{align}
    \{M_2,M_3\} = & \, 0 \nonumber\\
    \{M_2,M_4\} = & \, \Gamma_{24}^{22} M_2^2 \nonumber \\
    \{M_2,M_5\} = & \,\Gamma_{25}^{14} M_1M_4 + \Gamma_{25}^{23} M_2M_3\label{eq:grad3} \\
    \{M_2,M_6\} =& \, \Gamma_{26}^{35} M_3M_5 + \Gamma_{26}^{114} M_1^2 M_4 + \Gamma_{26}^{123} M_1 M_2 M_3. \nonumber
\end{align}
Here $\Gamma_{24}^{22}, \Gamma_{25}^{14},\Gamma_{25}^{23}, \Gamma_{26}^{35} , \Gamma_{26}^{114} ,\Gamma_{26}^{123} \in \mathbb{R}$  are arbitrary constants that could be zero. Note that $\{M_2,M_3\} = 0$ is deduced from the absence of any polynomials in $\mathcal{Q}_{\mathfrak{su}(3)}(d)$ with the grading $(1,3)$ or $(3,1)$. We subsequently proceed to examine the Poisson brackets $\{M_3,M_4\}, \{M_3,M_5\}$ and $\{M_3,M_6\}$. A direct calculation shows that the grading of these terms are
\begin{align*}
    \mathcal{G} (\{M_3,M_4\}) = & \, (1,4) \tilde{+} (3,2), \\
    \mathcal{G} (\{M_3,M_5\}) = & \,(3,3) \tilde{+} (5,1), \\
    \mathcal{G} (\{M_3,M_6\}) = & \, (4,4) \tilde{+} (6,2)   .
\end{align*}
Each of the Poisson brackets above is spanned as follows
\begin{align}
    \{M_3,M_4\} = & \, 0,\text{ } \quad \{M_3,M_5\} = \Gamma_{35}^6 M_6  \nonumber \\
    \{M_3,M_6\} = & \, \Gamma_{36}^{115} M_1^2 M_5 + \Gamma_{36}^{125} M_1M_2M_5 + \Gamma_{36}^{1122} M_1^2 M_2^2 + \Gamma_{36}^{1112} M_1^3 M_2 + \Gamma_{36}^{234} M_2 M_3M_4 \nonumber  \\
    & + \Gamma_{36}^{134} M_1M_3M_4 + \Gamma_{36}^{55} M_5^2 +\Gamma_{36}^{233} M_2 M_3^2+\Gamma_{36}^{1122} M_1^2 M_2^2.
\end{align}
Here $\Gamma_{35}^6, \Gamma_{36}^{115},\ldots,\Gamma_{36}^{1122} \in \mathbb{R}$ are constants. Next, we present the grading of the rest of the commutator relations. A direct calculation gives us that
\begin{align}
    \mathcal{G} (\{M_4,M_5\}) = & \, (1,5) \tilde{+} (3,3), \nonumber \\
    \mathcal{G} (\{M_4,M_6\}) = & \,(2 ,6) \tilde{+} (4,4),\label{eq:gardin3} \\
    \mathcal{G} (\{M_5,M_6\}) = & \, (4,5) \tilde{+} (6,3)  \nonumber .
\end{align}
From the grading in $\eqref{eq:gardin3}$, the permissible polynomials in each Poisson bracket are given by
\begin{align*}
    \{M_4,M_5\} = & \, \Gamma_{45}^6 M_6 \\
    \{M_4,M_6\} = & \, \Gamma_{46}^{55} M_5^2+ \Gamma_{46}^{125} M_1 M_2 M_5+ \Gamma_{46}^{134} M_1 M_3 M_4+ \Gamma_{46}^{233} M_2 M_3^2 + \Gamma_{46}^{1122} M_1^2 M_2^2+ \Gamma_{46}^{144} M_1 M_4^2 \\
    & + \Gamma_{46}^{225}  M_2^2 M_5+ \Gamma_{46}^{234} M_2  M_3 M_4\\
     \{M_5,M_6\} = & \,  \Gamma_{56}^{135} M_1 M_3 M_5    + \Gamma_{56}^{145}  M_1 M_4 M_5+ \Gamma_{56}^{235}  M_2 M_3 M_5+   \Gamma_{56}^{333} M_3^3+ \Gamma_{56}^{334} M_3^2 M_4 \\
     & + \Gamma_{56}^{1114} M_1^3 M_4+ \Gamma_{56}^{1123} M_1^2 M_2 M_3+ \Gamma_{56}^{1124} M_1^2 M_2 M_4+ \Gamma_{56}^{1223} M_1 M_2^2 M_3,
\end{align*}
where $\Gamma_{45}^6, \Gamma_{46}^{55},\ldots,\Gamma_{46}^{234}, \Gamma_{56}^{135}\ldots,\Gamma_{56}^{1223} \in \mathbb{R}$ are arbitrary constants.

With this information, the compact reformulation adopts the form 
\begin{align}
  \nonumber
      \{\textbf{q}_2,\textbf{q}_2\} \sim & \, 0   \\
      \nonumber
      \{\textbf{q}_2,\textbf{q}_3\} \sim  & \,  \textbf{q}_2^2   \\
      \nonumber
       \{\textbf{q}_2,\textbf{q}_4\} \sim & \,  \textbf{q}_2\textbf{q}_3    \\
       \{\textbf{q}_2,\textbf{q}_6\} \sim  & \,     \textbf{q}_3\textbf{q}_4 + \textbf{q}_2^2\textbf{q}_3 \label{eq:compact11} \\
       \nonumber
    \{\textbf{q}_3,\textbf{q}_4\} \sim & \,   \textbf{q}_6\\
    \nonumber
    \{\textbf{q}_3,\textbf{q}_6\} \sim   & \, \textbf{q}_2^4   +  \textbf{q}_2\textbf{q}_6 + \textbf{q}_2\textbf{q}_3^2  + \textbf{q}_2^2\textbf{q}_4  \\
     \nonumber
        \{\textbf{q}_4,\textbf{q}_6\} \sim & \,  \textbf{q}_3^3 +  \textbf{q}_2 \textbf{q}_3\textbf{q}_4  + \textbf{q}_3\textbf{q}_6 .
  \end{align} In order to visually demonstrate the efficacy of the grading method, we present an extended table that facilitates a detailed comparison of the number of allowed polynomials within non-trivial Poisson brackets. Specifically, in the following Table \ref{tab:my_label0}, the second column reveals the number of expected polynomials from the compact forms, whereas the first one indicates the number of permissible terms obtained via the grading method. For instances, from \ref{eq:compact1}, the polynomials allowed in the Poisson bracket $\{\textbf{q}_2,\textbf{q}_2\} $ are a linear combination of the elements in $\textbf{q}_3$. Hence, without using grading method, the number of allowed polynomials from the Poisson bracket $\{\textbf{q}_2,\textbf{q}_2\}$ is $2$. On the other hand, the last column illustrates the maximum number of polynomials allowed in each of the Poisson brackets in the compact form obtained through the grading method. We observe that in the $\{\textbf{q}_2,\textbf{q}_2\}$ case, we obtain zero terms, which means that everything commutes. This is an expected result, as $M_1$ is the central element. To provide another explicit example, let us consider the compact form $\{\textbf{q}_2,\textbf{q}_3\}$, which contains $\{M_1,M_3\}, \{M_1,M_4\}$, $\{M_2,M_3\}$, and $\{M_2,M_4\}.$ Using \eqref{eq:grad3}, we see that the number of polynomials allowed in these non-trivial Poisson brackets is either $0$ or $1.$ Hence, as shown in Table \ref{tab:my_label0} below, the maximum number of permissible polynomials appearing in the expansion of $\{\textbf{q}_2,\textbf{q}_3\}$ by the grading method is $1$. It is important to note that the interpretation of all the comparison tables presented below can be approached in a consistent manner. 
    \begin{table}[h]
     \centering
     \begin{tabular}{|c|c|c|c|}
     \hline
     \textsf{ Poisson brackets}    & $\begin{matrix}
         \textsf{   No. of polynomials without} \\
         \textsf{  using the grading method }
      \end{matrix}$  & $\begin{matrix}
          \textsf{Maximum No. of polynomials after applying} \\
          \textsf{the grading method} 
      \end{matrix}$    \\
          \hline 
  $\{\textbf{q}_2,\textbf{q}_2\}$       & $ 2 $ & $0 $ \\
   \hline 
  $\{\textbf{q}_2,\textbf{q}_3\}$       & $4 $ & $1 $  \\
          \hline 
 $\{\textbf{q}_2,\textbf{q}_4\}$       & $4 $ & $2 $  \\
          \hline 
    $\{\textbf{q}_2,\textbf{q}_6\}$       & $8 $ & $7 $   \\
           \hline 
  $\{\textbf{q}_3,\textbf{q}_4\}$       & $ 8 $ & $1 $ \\
   \hline 
  $\{\textbf{q}_3,\textbf{q}_6\}$       & $17 $ & $9 $ \\
          \hline 
 $\{\textbf{q}_4,\textbf{q}_6\}$       & $18 $ & $9 $  \\
          \hline  
     \end{tabular}
     
     \quad
     
     \caption{Comparison of the number of polynomials }
     \label{tab:my_label0}
 \end{table}

Now, using the commutator relations in Table $\ref{tab:placeholder_label}$, together with $\eqref{eq:compact11}$, we can further determine the unknown coefficients given in the Poisson brackets above. This enables a change of notation by applying their respective gradations. We then have
\[ \{B_{20},B_{i_1i_2}\}=  \, 0 ,\quad i_1i_2 \in \{02,21,03,22,33\} ;   \quad \text{ }  \{B_{02},B_{i_1i_2}\}=   \, 0,\quad  i_1i_2 \in \{21,03,22,33\}; \quad \{B_{21},B_{03}\}=  \,0;   \] and the non-trivial brackets are
\begin{align*}
 \{B_{21},B_{22}\}=  & \,-36 \sqrt{2} B_{33} \\
 \{B_{21},B_{33}\}= & \, - \frac{16}{27} \sqrt{2} B_{02}^2 B_{02}^2 + \frac{8}{27} \sqrt{2} B_{02} B_{21}^2 - \frac{1}{\sqrt{2}} B_{20} B_{21} B_{03}  - \frac{1}{\sqrt{2}} B_{20} B_{02} B_{22} - \frac{3}{16 \sqrt{2}} B_{22}^2\\
 \{B_{03},B_{22}\}= & \, 72 \sqrt{2} B_{33} \\
 \{B_{03},B_{33}\}= & \,  \frac{32}{27} \sqrt{2} B_{20}^2 B_{02}^2 - \frac{16}{27} \sqrt{2} B_{02} B_{21}^2 + \sqrt{2} B_{20} B_{21} B_{03} +\sqrt{2} B_{20} B_{02} B_{22} + \frac{3}{8 \sqrt{2}} B_{22}^2 \\
 \{B_{22},B_{33}\}=  & \, - \frac{16}{9} \sqrt{2} B_{20}^2 B_{02} B_{21} - \frac{128}{243} B_{20} B_{02}^2 B_{21} + \frac{32}{27} \sqrt{2} B_{21}^3  - \sqrt{2} B_{20}^3 B_{03} + \frac{8}{9} \sqrt{2} B_{20}^2 B_{02} B_{03}\\
 & \, - \frac{16}{9} \sqrt{2} B_{21}^2 B_{03} - \sqrt{2} B_{20} B_{21} B_{22} - \frac{16}{27} \sqrt{2} B_{02} B_{21} B_{22} + \frac{1}{\sqrt{2}} B_{20} B_{03} B_{22}.
\end{align*}  Note that $B_{20}$ lies in the center of this polynomial algebra.  Hence, we claim that $\textbf{Alg} \left\langle \textbf{Q}_6 \right\rangle$ with a Poisson bracket $\{\cdot,\cdot\}$ forms a cubic Poisson algebra $\mathcal{Q}_{\mathfrak{su}(3)}(3)$.

\subsection{The reduction chain $\mathfrak{o}(3) \subset \mathfrak{sl}(3,\mathbb{C})$}
\label{4.2}
In this Subsection \ref{4.2}, we analyze the reduction chain $\mathfrak{o}(3) \subset \mathfrak{sl}(3,\mathbb{C})$.  Instead of using a two-step decomposition, we consider a triangular decomposition such that \[\mathfrak{sl}(3,\mathbb{C}) =    \mathrm{span}_\mathbb{C} \{E_{11} -E_{22}, E_{22} - E_{33}\} \oplus \bigoplus_{j\neq k=1}^3 \mathrm{span}_\mathbb{C} \{E_{jk}\},\] where $\mathrm{span}_\mathbb{C} \{E_{11} -E_{22}, E_{22} - E_{33}\} $ is the Cartan subalgebra. We observe that, with this choice, $E_{12}$ and $E_{23}$ are the generators corresponding to the simple roots $\alpha_1,\alpha_2,$ and $E_{13}$ is associated with the sum $\alpha_1 + \alpha_2$.  The commutator relations are given in the following table:

\begin{table}[ht]
\centering
\begin{tabular}{|l|c|c|c|c|c|c|c|c|}
\hline
\([\cdot, \cdot]\) & \( H_1 \) & \( H_2 \) & \( E_{12} \) & \( E_{21} \) & \( E_{23} \) & \( E_{32} \) & \( E_{13} \) & \( E_{31} \) \\
\hline
\( H_1 \) & \( 0 \) & \( 0 \) & \( 2E_{12} \) & \( -2E_{21} \) & \( -E_{23} \) & \( E_{32} \) & \( E_{13} \) & \( -E_{31} \) \\
\hline
\( H_2 \) & \( 0 \) & \( 0 \) & \( E_{12} \) & \( -E_{21} \) & \( 2E_{23} \) & \( -2E_{32} \) & \( E_{13} \) & \( -E_{31} \) \\
\hline
\( E_{12} \) & \( -2E_{12} \) & \( -E_{12} \) & \( 0 \) & \( H_1 \) & \( E_{13} \) & \( 0 \) & \( 0 \) & \( 0 \) \\
\hline
\( E_{21} \) & \( 2E_{21} \) & \( E_{21} \) & \( -H_1 \) & \( 0 \) & \( 0 \) & \( 0 \) & \( 0 \) & \( 0 \) \\
\hline
\( E_{23} \) & \( E_{23} \) & \( -2E_{23} \) & \( -E_{13} \) & \( 0 \) & \( 0 \) & \( H_2 \) & \( 0 \) & \( 0 \) \\
\hline
\( E_{32} \) & \( -E_{32} \) & \( 2E_{32} \) & \( 0 \) & \( 0 \) & \( -H_2 \) & \( 0 \) & \( 0 \) & \( 0 \) \\
\hline
\( E_{13} \) & \( -E_{13} \) & \( -E_{13} \) & \( 0 \) & \( 0 \) & \( 0 \) & \( 0 \) & \( 0 \) & \( H_1 + H_2 \) \\
\hline
\( E_{31} \) & \( E_{31} \) & \( E_{31} \) & \( 0 \) & \( 0 \) & \( 0 \) & \( 0 \) & \( - (H_1 + H_2) \) & \( 0 \) \\
\hline
\end{tabular}

\quad

\caption{Commutator Table of \( \mathfrak{sl}(3, \mathbb{C}) \)}
\label{tab:sl3_commutator}
\end{table}


In what follows, we take $\mathfrak{g}_1 = \mathfrak{h}$, $\mathfrak{g}_2 = \mathfrak{g}^+$, and $\mathfrak{g}_3 = \mathfrak{g}^-.$ According to the commutator relations in  Table $\ref{tab:sl3_commutator}$, the nilpotent subalgebra $\mathfrak{g}^+ \cong \mathfrak{o}(3)$ of $\mathfrak{sl}(3,\mathbb{C})$ is spanned by the basis elements $\{E_{12},E_{23},E_{13}\}$. Consider the embedding chain $\mathfrak{o}(3) \subset \mathfrak{sl}(3,\mathbb{C})$. This reduction typically appears in the problem of decomposition of the enveloping algebra of a semi-simple Lie algebra, which has already been considered in \cite{MR4710584}. In this Subsection, we will construct the polynomial Poisson algebra in $S(\mathfrak{sl}(3,\mathbb{C}))^{\mathfrak{o}(3)}$, looking for the algebraic structure of the commutant via the grading method.

Recall that the coordinate in $\mathfrak{sl}^*(3,\mathbb{C})$ is given by $\textbf{x} = (h_1,h_2,e_{12},e_{13},e_{23},e_{21},e_{31},e_{32}).$ By construction, there are $6$ indecomposable (polynomial solutions) to the system of PDEs \begin{align*}
    \{e_{12},S(\mathfrak{g})\} = \{e_{23},S(\mathfrak{g})\} = \{e_{13},S(\mathfrak{g})\} = 0,
\end{align*} that are given by \begin{align}
\nonumber
    A_1 =& \, e_{13}, \quad A_2 = 3 e_{12}e_{23} + (h_1 -h_2) e_{13}, \\
    \nonumber
    A_3 =& \, h_1^2 + h_2^2 + h_1h_2 + 3(e_{12} e_{21} + e_{23}e_{32} + e_{13}e_{31} ), \label{eq:comm2}\\
    A_4 = & \, e_{12}e_{23}^2 + e_{13} \left(h_1 e_{23} - e_{13} e_{21}\right) , \\
    \nonumber
     A_5 = & \, e_{13} \left(e_{13}e_{32} + h_2 e_{12}\right) - e_{12}^2 e_{23}, \\
     \nonumber
     A_6 = & \, \frac{1}{2} \left(2 h_1^3+3 h_2 h_1^2-3 \left(h_2^2-3 e_{12} e_{21}+6 e_{23} e_{32}-3 e_{13} e_{31}\right) h_1 \right. \\
     \nonumber
     & \left.-2 h_2^3+27 (e_{13} e_{21} e_{32}+e_{12} e_{23} e_{31})+9 h_2 (2 e_{12} e_{21}-e_{23} e_{32}-e_{13} e_{31})\right).
\end{align} From the construction in Section $\ref{2},$ we deduce that $\textbf{Q}_3 = \textbf{q}_1 \sqcup \textbf{q}_2 \sqcup \textbf{q}_3$ with $\textbf{q}_1 = \{A_1\}$, $ \textbf{q}_2 = \{A_2,A_3\}$ and $ \textbf{q}_3 = \{A_4,A_5,A_6\}.$ The commutator relations in terms of the compact forms are \begin{align}
\nonumber
    \{\textbf{q}_1, \textbf{q}_2\} \sim  & \,\textbf{q}_2 + \textbf{q}_1^2 \\
    \nonumber
     \{\textbf{q}_1, \textbf{q}_3\} \sim  & \,  \{\textbf{q}_2,\textbf{q}_2\}  \sim  \textbf{q}_3 + \textbf{q}_1\textbf{q}_2 + \textbf{q}_1^3 \\
       \{\textbf{q}_2, \textbf{q}_3\} \sim  & \,\textbf{q}_2^2  + \textbf{q}_1^2 \textbf{q}_2  + \textbf{q}_1\textbf{q}_3 + \textbf{q}_1^4  \label{eq:compact2}\\
       \nonumber
    \{\textbf{q}_3, \textbf{q}_3\} \sim  & \,\textbf{q}_1\textbf{q}_2^2   + \textbf{q}_1^2\textbf{q}_3 + \textbf{q}_2\textbf{q}_3 + \textbf{q}_1^5 .
\end{align}

We now apply the grading method to reduce the possible terms from the compact form $\eqref{eq:compact2}.$ By definition, each term admits a non-homogeneous gradings listed as follows: \begin{align}
\nonumber
& \mathcal{G} (A_1 ) = (0,1,0) , \\
\nonumber
& \mathcal{G}(A_2) = (1,1,0) \tilde{+} (0,2,0) , \\
& \mathcal{G}(A_3) = (2,0,0) \tilde{+} (0,1,1), \label{eq:regr} \\
\nonumber
&  \mathcal{G} (A_4 ) = \mathcal{G}(A_5) = (1,2,0) \tilde{+} (0,3,0) \tilde{+} (0,2,1), \\
\nonumber
&  \mathcal{G} (A_6 ) = (3,0,0) \tilde{+} (1,1,1) \tilde{+} (0,1,2) \tilde{+} (0,2,1).
 \end{align}

The grading in $\eqref{eq:regr}$ provides us with an example in which two distinct elements may have the same grading. Using a new notation, we may write  \begin{align*}
    A_1 = &\, B_{010}, \text{ }  A_2 =B_{110} + B_{020}, \text{ }A_3 = B_{200} + B_{011}, \\
A_4 = & \,B_{030} + B_{120} + B_{021} ,\text{ } A_5 = B_{030}' + B_{120}' + B_{021}',\\
A_6 = & \, B_{300} + B_{111}+ B_{021}'' + B_{012} .
\end{align*} Here $B_{abc}$, $B_{abc}'$ and $ B_{abc}''$ are the generators corresponding to distinct homogeneous gradings for any $0 \leq a,b,c \leq 3$. We now calculate the grading of each non-trivial bracket. By definition, $\{B_{010},A_u\} = 0$ for all $1 \leq u \leq 6.$ Hence, $A_1$ is a central element, and we omit the calculation on these gradings. Moreover, from this fact, we can further conclude that $\{B_{0b0},A_u\} = 0$ for all $b.$ Starting from $A_2.$ Note that $ \{A_2,A_3\} = \{B_{110} + B_{020},   B_{200} + B_{011}\} =\{B_{110} ,   B_{200} \}+ \{B_{110} ,B_{011}\} .$ Then \begin{align*}
    \mathcal{G} \left(\{A_2,A_3\}\right) = & \, \mathcal{G} \left(\{B_{110} ,   B_{200} \}\right) \tilde{+} \mathcal{G} \left(\{B_{110} ,B_{011}\}\right)  =  (2,1,0) \tilde{+} (1,2,0) \tilde{+} (0,2,1).
\end{align*} This implies that $\{A_2,A_3\} = \Gamma_{23}^{13} A_1A_3$ for any $\Gamma_{23}^{13} \in \mathbb{R}.$ The determination of the coefficients depends on the structure constants of the Lie algebra. Under the commutator relations in Table $\ref{tab:sl3_commutator}$ and the generators in $\eqref{eq:comm2}$, we observe that, indeed, $\Gamma_{23}^{13} =0$ so that these two elements commute. We proceed to calculating the grading of Poisson brackets $\{A_2,A_4\}, \{A_2,A_5\}$ and $\{A_2,A_6\}$:
\begin{align*}
    \mathcal{G}\left(\{A_2,A_4\}\right) = & \,\mathcal{G} \left(\{B_{110},B_{030}\}\right) \tilde{+} \mathcal{G} \left(\{B_{110},B_{120}\}\right) \tilde{+} \mathcal{G} \left(\{B_{110},B_{021}\}\right)  \\
    = & \, (0,4,0) \tilde{+} (1,3,0) \tilde{+} (0,3,1) \tilde{+} (2,2,0) \\
    =& \,  \mathcal{G}\left(\{A_2,A_5\}\right) ,\\
       \mathcal{G}\left(\{A_2,A_6\}\right) = & \,\mathcal{G} \left(\{B_{110},B_{300}\}\right)\tilde{+} \mathcal{G} \left(\{B_{110},B_{111}\}\right) \tilde{+} \mathcal{G} \left(\{B_{110},B_{012}\}\right) \tilde{+} \mathcal{G} \left(\{B_{110},B_{021}''\}\right)  \\
       = & \, (3,1,0) \tilde{+} (1,2,1)  \tilde{+} (2,2,0) \tilde{+} (0,3,1) \tilde{+} (1,3,0).
\end{align*}
Using the grading on each generator defined in $\eqref{eq:regr}$, we conclude that the permissible polynomials in each bracket are as follows: \begin{align}
\nonumber
    \{A_2,A_4\} = & \,\Gamma_{24}^{1111} A_1^4 + \Gamma_{24}^{112} A_1^2A_2 +   \Gamma_{24}^{113} A_1^2A_3+    \Gamma_{24}^{14} A_1 A_4 +    \Gamma_{24}^{15} A_1 A_5  \\
   \{A_2,A_5\} = & \,\Gamma_{25}^{1111} A_1^4 + \Gamma_{25}^{112} A_1^2A_2 +   \Gamma_{25}^{113} A_1^2A_3+    \Gamma_{25}^{14} A_1 A_4 +    \Gamma_{25}^{15} A_1 A_5  \label{eq:comma2} \\
   \nonumber
       \{A_2,A_6\} = & \, \Gamma_{26}^{23}  A_2A_3 + \Gamma_{26}^{113} A_1^2 A_3.
 \end{align}
 Here, $\Gamma_{24}^{1111},\ldots,\Gamma_{24}^{15},\Gamma_{25}^{1111},\ldots,\Gamma_{25}^{15}, \Gamma_{26}^{23},\Gamma_{26}^{113} \in \mathbb{R}$ are arbitrary constants. Analyzing Table $\ref{tab:sl3_commutator}$ in conjunction with the expressions for the generators provided in $\eqref{eq:comm2}$, it becomes evident that the term $e_{13}^n$, valid for any integer $n$ where $n \geq 2$, along with the terms $(h_1-h_2)e_{13}$ and $h_1h_2 e_{13}^2$, do not belong to the brackets $\{A_2,A_4\}$ or $\{A_2,A_5\}$. Consequently, the expression for terms of the form $A_1^n$ as well as those of the form $A_1^2 A_2$ and $A_1^2 A_3$ must necessarily vanish, leading to the conclusion that $$\Gamma_{24}^{1111} = \Gamma_{24}^{112} = \Gamma_{24}^{113}  = \Gamma_{25}^{1111}  = \Gamma_{25}^{112} = \Gamma_{25}^{113} = 0.$$ Furthermore, it is important to note that within the grading $\mathcal{G}\left(\{A_2,A_6\}\right)$, the sole term comprising the structure $(0,3,1)$ derives from $\mathcal{G} \left(\{B_{110},B_{012}\}\right)$. When considering the generators detailed in $\eqref{eq:comm2}$, it is evident that the expression $3(e_{12}e_{21} + e_{23}e_{32} + e_{13}e_{31})e_{13}^2 \notin \{(h_1-h_2)e_{13},  e_{12}e_{23}e_{31}\}$. This exclusion decisively results in $\Gamma_{26}^{113} = 0$. Similarly, it is essential to consider the fact that the element $(0,3,1)$ included in $\mathcal{G}(A_2A_3)$ takes the form of $e_{12}e_{23}(e_{12}e_{21} + e_{23}e_{32} + e_{13}e_{31})$, and this particular form does not reside within the aforementioned bracket $\{(h_1-h_2)e_{13},  e_{12}e_{23}e_{31}\}$, which confirms that $\Gamma_{26}^{23} = 0$ and, consequently, $\{A_2,A_6\} = 0$.

Now, consider the Poisson brackets $\{A_3,A_4\},\{A_3,A_5\}$ and $\{A_3,A_6\}$. A direct computation shows that \begin{align*}
    \mathcal{G} \left(\{A_3,A_4\}\right) = & \, \mathcal{G} \left(\{B_{200},B_{120}\}\right) \tilde{+} \mathcal{G} \left(\{B_{011},B_{120}\}\right) \tilde{+} \mathcal{G} \left(\{B_{200},B_{021}\}\right) \tilde{+} \mathcal{G} \left(\{B_{011},B_{021}\}\right)  \\
     = & \, (1,3,0) \tilde{+} (2,2,0)\tilde{+} (3,1,0) \tilde{+} ( 1,2,1) \tilde{+} (0,3,1) \\
     = & \, \mathcal{G} \left(\{A_3,A_5\}\right) ,  \\
   \mathcal{G} \left(\{A_3,A_6\}\right) = & \, \mathcal{G} \left(\{B_{200},B_{111}\}\right) \tilde{+} \mathcal{G} \left(\{B_{200},B_{012}\}\right) \tilde{+} \mathcal{G} \left(\{B_{200},B_{021}\}\right) \tilde{+} \mathcal{G} \left(\{B_{011},B_{300}\}\right) \\
   & \,\tilde{+} \mathcal{G} \left(\{B_{011},B_{012}\}\right) \tilde{+}\mathcal{G} \left(\{B_{011},B_{111}\}\right) \tilde{+}\mathcal{G} \left(\{B_{011},B_{021}\}\right) \\
   = & \, (2,1,1) \tilde{+} (3,1,0) \tilde{+} (1,1,2) \tilde{+} (1,2,1) \tilde{+} (2,2,0) \tilde{+} (0,3,1)\tilde{+} (0,2,2) .
\end{align*}
 The grading implies the following relations
\begin{align}
\nonumber
       \{A_3,A_4\} = & \,  \Gamma_{34}^{23} A_2 A_3 +  \Gamma_{34}^{113} A_1^2A_3   \\
   \{A_3,A_5\} = & \,  \Gamma_{35}^{23} A_2 A_3 +  \Gamma_{35}^{113} A_1^2A_3 \label{eq:a3aj}\\
   \nonumber
       \{A_3,A_6\} = & \, \Gamma_{36}^{113}  A_1^2A_3+ \Gamma_{36}^{16} A_1  A_6 +  \Gamma_{36}^{23}  A_2A_3
\end{align}
with $\Gamma_{34}^{23},\Gamma_{34}^{113},\Gamma_{35}^{23},\Gamma_{35}^{113}, \Gamma_{36}^{113},\Gamma_{36}^{16},\Gamma_{36}^{23} \in \mathbb{R}$ are arbitrary constants. Notice that the term $A_1^2 A_3$ is deduced from the grading of $\{B_{011},B_{120}\}.$ However, from the commutator relations in Table $\ref{tab:sl3_commutator}$ and the explicit form of the generators in $\eqref{eq:comm2}$, we observe that the term $(h_1^2 + h_2^2 + h_1h_2) e_{13}^2 \notin \{A_3,A_4\}.$ Hence $\Gamma_{34}^{113} = 0.$ For the same reason, we may conclude that $\Gamma_{35}^{113} =0.$ By examining the rest of the term in $\eqref{eq:a3aj},$ we deduce that $\{A_3,A_4\} = \{A_3,A_5\} = \{A_3,A_6\} = 0$.

For the last Poisson brackets, we have \begin{align*}
    \mathcal{G} \left(\{A_4,A_5\}\right) = & \,    \mathcal{G} \left(\{B_{120},B_{021}'\}\right) \tilde{+}\mathcal{G} \left(\{B_{120},B_{120}'\}\right)  \tilde{+} \mathcal{G} \left(\{B_{021},B_{021}'\}\right)\tilde{+} \mathcal{G} \left(\{B_{021},B_{120}'\}\right)  \\
    = & \, (1,4,0) \tilde{+} (0,4,1) \tilde{+} (1,3,1) \tilde{+} (2,3,0) \tilde{+} (0,5,0),\\
       \mathcal{G} \left(\{A_4,A_6\}\right)= & \,   \mathcal{G} \left(\{B_{120},B_{300}\}\right) \tilde{+} \mathcal{G} \left(\{B_{120},B_{111}\}\right) \tilde{+} \mathcal{G} \left(\{B_{120},B_{012}\}\right) \tilde{+}\mathcal{G} \left(\{B_{120},B_{021}''\}\right)\tilde{+} \mathcal{G} \left(\{B_{021},B_{111}\}\right) \\
    & \,\tilde{+} \mathcal{G} \left(\{B_{021},B_{300}\}\right) \tilde{+} \mathcal{G} \left(\{B_{021},B_{012}\}\right) \tilde{+} \mathcal{G} \left(\{B_{021},B_{021}''\}\right)  \\
    = & \,  (1,3,1) \tilde{+} (3,2,0) \tilde{+} (2,3,0) \tilde{+}(2,2,1) \tilde{+} (0,4,1) \tilde{+}(1,4,0) \tilde{+}(4,1,0)\tilde{+} (1,2,2) \tilde{+} (0,3,2) \\
    = & \, \mathcal{G}\left(\{A_5,A_6\}\right).
\end{align*}
The gradings on the generators indicate that  
\begin{align}
    \{A_4,A_5\} = & \, \Gamma_{45}^{11111} A_1^5 + \Gamma_{45}^{1112} A_1^3A_2   + \Gamma_{45}^{1113} A_1^3A_3 + \Gamma_{45}^{114} A_1^2A_4 + \Gamma_{45}^{115} A_1^2A_5 + \Gamma_{45}^{122}A_1A_2^2 \nonumber \\
    \{A_4,A_6\} = & \,\Gamma_{46}^{34} A_3A_4 + \Gamma_{46}^{35} A_3A_5   + \Gamma_{46}^{1113} A_1^3A_3 + \Gamma_{46}^{116} A_1^2A_6 + \Gamma_{46}^{221} A_2^2A_1 + \Gamma_{46}^{331} A_3^2A_1  + \Gamma_{46}^{123} A_1A_2A_3 \label{eq:4.16} \\
    \{A_5,A_6\} = & \,  \Gamma_{56}^{34} A_3A_4 + \Gamma_{56}^{35} A_3A_5   + \Gamma_{56}^{1113} A_1^3A_3 + \Gamma_{56}^{116} A_1^2A_6 + \Gamma_{56}^{221} A_2^2A_1 + \Gamma_{56}^{331} A_3^2A_1  + \Gamma_{56}^{123} A_1A_2A_3. \nonumber
\end{align}
with $ \Gamma_{45}^{11111},\ldots,\Gamma_{45}^{122},\Gamma_{46}^{34},\ldots,\Gamma_{46}^{123}, \Gamma_{56}^{34},\ldots,\Gamma_{56}^{123} \in \mathbb{R}$ arbitrary constants.  Using the preceding argument, we immediately conclude that $\Gamma_{45}^{11111}= \Gamma_{46}^{1113} = \Gamma_{56}^{1113} = 0.$ Again, the rest of the coefficients can also be obtained by analyzing the commutator relations in Table $\ref{tab:sl3_commutator}$. 

After applying the grading method, the terms in non-trivial Poisson brackets \eqref{eq:comma2}, \eqref{eq:a3aj} and \eqref{eq:4.16} can be written in compact forms as 
\begin{align}
    \{\textbf{q}_1,\textbf{q}_j\} \sim & \, 0, \text{ for all } 1 \leq j \leq 6, \nonumber \\
    \{\textbf{q}_2,\textbf{q}_2 \} \sim & \, \textbf{q}_1\textbf{q}_2 \label{eq:4.17} \\
    \{\textbf{q}_2,\textbf{q}_3 \} \sim & \, \textbf{q}_1^2\textbf{q}_2 + \textbf{q}_2^2 + \textbf{q}_1 \textbf{q}_3 \nonumber  \\
       \{\textbf{q}_3,\textbf{q}_3 \} \sim & \, \textbf{q}_1^3\textbf{q}_2 + \textbf{q}_1\textbf{q}_2^2 +  \textbf{q}_2\textbf{q}_3. \nonumber
 \end{align} Similar to Table \ref{tab:my_label0} in Subsection \ref{4.1.1}, to illustrate the effectiveness of the grading method, we now provide the comparison of the number of components in non-trivial brackets in the following Table \ref{tab:my_label1}: 

\qquad
 
  \begin{table}[h]
     \centering
     \begin{tabular}{|c|c|c|c|}
     \hline
        \textsf{ Poisson brackets}    & $\begin{matrix}
         \textsf{   No. of polynomials without} \\
         \textsf{  using the grading method }
      \end{matrix}$ & $\begin{matrix}
          \textsf{Maximum No. of polynomials after applying} \\
          \textsf{ the grading method} 
      \end{matrix}$     \\
          \hline 
  $\{\textbf{q}_1,\textbf{q}_2\}$       & $ 3 $ & $0 $    \\
   \hline 
  $\{\textbf{q}_2,\textbf{q}_2\}$       & $6 $ & $2 $   \\
          \hline 
 $\{\textbf{q}_2,\textbf{q}_3\}$       & $9 $ & $5 $   \\
          \hline 
    $\{\textbf{q}_3,\textbf{q}_3\}$       & $13 $ & $7 $   \\
          \hline       
     \end{tabular}
     
     \quad
     
     \caption{Comparison of the number of polynomials }

     \label{tab:my_label1}
 \end{table}
 
 We infer that the polynomial algebra derived from the Poisson centralizer $S(\mathfrak{sl}(3))^{\mathfrak{o}(3)}$ remains closed, as expected. Moreover, taking into account the Poisson relations of $\mathfrak{sl}^*(3,\mathbb{C})$ mentioned earlier, we conclude that the generators are closed under the non-trivial Poisson bracket $\{\cdot,\cdot\}$ in the subsequent form
 \begin{align}
    \{A_2,A_4\} = & \, -3 A_1A_4 , \text{ } \{A_2,A_5\} =  3 A_1A_5   \nonumber \\
    \{A_4,A_5\} = & \,-\frac{1}{3} \left( A_1^3A_3 - A_1A_2^2\right).
\end{align}
As $A_1,\ldots,A_6$ are functionally dependent, together with one algebraic relation, we determine that the algebra $\textbf{Alg} \left\langle\textbf{Q}_3 \right\rangle$ is finitely generated, which defines a polynomial Poisson algebra $\mathcal{Q}_{\mathfrak{sl}(3,\mathbb{C})}(2)$ endowed with the Poisson-Lie bracket $\{\cdot,\cdot\}$. It is worth noting that $A_1$, $A_3$ and $A_6$ form the center of this algebra. In other terms, $\mathcal{Q}_{\mathfrak{sl}(3,\mathbb{C})}(2)$ takes the form of a finitely generated quadratic Poisson algebra over $\mathbb{C}[A_1,A_3,A_6]$.

\subsection{Reduction chain $\mathfrak{h} \subset \mathfrak{sl}(3,\mathbb{C})$}

\label{4.3}

 In \cite{campoamor2023algebraic}, the commutant associated with the Cartan subalgebra of semi-simple Lie algebras of type $A_n$ was analyzed in detail. In this section, we apply the grading method to reconstruct the polynomial algebra in $S(\mathfrak{sl}(3,\mathbb{C}))^\mathfrak{h}.$ The indecomposable polynomial solutions of $\{\mathfrak{h}^*,S(\mathfrak{sl}(3,\mathbb{C}))\} = 0$ form the finite set of polynomials as follows:
 $$  \textbf{Q}_3 = \{h_1, h_2, p_{1,2},p_{1,3}, p_{2,3}, p_{1,2,3 },p_{1,3,2}\} = \textbf{q}_1 \sqcup \textbf{q}_2 \sqcup \textbf{q}_3. $$ Here $ \textbf{q}_1 = \{h_1,h_2\}$, $ \textbf{q}_2 = \{p_{1,2},p_{1,3}, p_{2,3}\}$ and $\textbf{q}_3 = \{p_{1,2,3 },p_{1,3,2}\},$ where $p_{i,j} = e_{ij}e_{ji}$ and $p_{i,j,k}=e_{ij} e_{jk}e_{ki}$ for any $ 1 \leq i \neq j \neq k \leq 3.$  Note that $p_{i,j,k} = p_{j,k,i} = p_{k,i,j}.$ Let us remark that, in this Subsection \ref{4.3}, since each generator is a homogeneous polynomial, we will not relabel it by another letter such as $A_1 = h_1$, $A_2= h_2$, etc. Thus, the expression of the coefficients will be different from that in \eqref{eq:comm}.  Without involving the grading, the Poisson brackets in the compact form are closed as follows
\begin{align}
     \begin{matrix}
           \{\textbf{q}_2,\textbf{q}_2\} \sim \textbf{q}_3 + \textbf{q}_1 \textbf{q}_2 + \textbf{q}_1^3 \\
        \{\textbf{q}_2,\textbf{q}_3\} \sim    \textbf{q}_2^2 + \textbf{q}_1\textbf{q}_3 + \textbf{q}_1^2\textbf{q}_2 + \textbf{q}_1^4 \\
           \{\textbf{q}_3,\textbf{q}_3\} \sim \textbf{q}_2\textbf{q}_3  + \textbf{q}_1  \textbf{q}_2^2 + \textbf{q}_1^2\textbf{q}_3 + \textbf{q}_1^3\textbf{q}_2 + \textbf{q}_1^5.
     \end{matrix} \label{eq:ma67}
 \end{align} 

Recall that the $3$-graded Poisson algebra $\mathcal{Q}_{\mathfrak{sl}(3,\mathbb{C})}(2)$ in $S(\mathfrak{sl}(3,\mathbb{C}))^\mathfrak{h}$ is given in \cite{campoamor2023algebraic}. Instead of using direct computation, we apply the grading method to restrict the number of monomials allowed in each commutator relation from $\eqref{eq:ma67}$. Starting with the grading of each term, we have \begin{align}
\begin{matrix}
     \mathcal{G} (h_1) =   \,\mathcal{G}(h_2) = (1,0,0),  \\
    \mathcal{G} (p_{1,2}) =  \,\mathcal{G}(p_{1,3}) = \mathcal{G}(p_{2,3}) = (0,1,1), \\
    \mathcal{G} (p_{1,2,3 }) =  \, (0,2,1), \text{ } \quad \mathcal{G} (p_{1,3,2}) = (0,1,2).
\end{matrix} \label{eq:grad3}
\end{align}

By definition, we omit the grading of $ \{\textbf{q}_1,\textbf{q}_u\}  $ as $\{\textbf{q}_1,\textbf{q}_u\} = 0$ for all $1 \leq u \leq 3.$ Starting with the generators in $\textbf{q}_2$, using Lemma $\ref{3.9}$ (ii), we deduce that \begin{align}
    \mathcal{G} \left(\{p_{i,j},p_{j,k}\}\right) = & \, (0,1,2) \tilde{+} (0,2,1) , \nonumber   \\
     \mathcal{G} \left( \{p_{1,2},p_{1,2,3 }\}\right) =  & \,    (1,2,1) \tilde{+} (0,3,1) \tilde{+}(0,2,2), \label{eq:4.19}\\
      \mathcal{G} \left( \{p_{1,2},p_{1,3,2}\}\right)=& \,      (0,2,2) \tilde{+}(1,1,2) \tilde{+} (0,1,3). \nonumber
\end{align} Here $ 1 \leq i \neq j \neq k \leq 3.$ We remark that  \begin{align}
     \mathcal{G} \left( \{p_{1,2},p_{1,2,3 }\}\right) & =   \mathcal{G} \left( \{p_{1,3},p_{1,2,3 }\}\right) =    \mathcal{G} \left( \{p_{2,3},p_{1,2,3 }\}\right) , \label{eq:4.21}\\
     \mathcal{G} \left( \{p_{1,2},p_{1,3,2}\}\right) & =    \mathcal{G} \left( \{p_{1,3},p_{1,3,2}\}\right) =  \mathcal{G} \left( \{p_{2,3},p_{1,3,2}\}\right).
\end{align}  To help the identification of all the components in each of the Poisson brackets, we will now list all the permissible polynomials from each grading in $\eqref{eq:4.19}$ and $\eqref{eq:4.21}$ case by case as follows: \begin{align}
    (0,1,2) = & \, \{p_{1,3,2}\} , \text{ } (0,2,1) = \{p_{1,2,3}\}; \nonumber \\
    (1,2,1) = & \, \{h_1 p_{1,2,3}, h_2 p_{1,2,3}\}; \text{ } \text{ } (0,3,1) = \emptyset; \label{eq:4.24}\\
    (1,1,2) = & \, \{h_1 p_{1,3,2}, h_2 p_{1,3,2}\} ; \text{ } \text{ } (0,1,3) = \emptyset; \nonumber \\
     (0,2,2) = & \,  \{p_{1,2}p_{1,3}, p_{1,2}p_{2,3}, p_{1,3}p_{2,3}\} . \nonumber
\end{align} Using $\eqref{eq:4.24}$, we deduce that the allowed components in each non-trivial Poisson bracket are given as follows: \begin{align*}
\{p_{1,2},p_{2,3}\} = & \,  a_1 p_{1,2,3} + a_2 p_{1,3,2}  \\
\{p_{1,2},p_{1,3}\} =  & \, a_3 p_{1,2,3} + a_4 p_{1,3,2}   \\
\{p_{1,3},p_{2,3}\} =  & \, a_5 p_{1,2,3} + a_6 p_{1,3,2}   \\
\{p_{1,2},p_{1,2,3}\} =  & \, \left( b_1 h_1  + b_2 h_2 \right) p_{1,2,3} + c_1 p_{1,2} p_{1,3}  + c_2 p_{1,2} p_{2,3} + c_3 p_{1,3} p_{2,3}  \\
\{p_{1,3},p_{1,2,3}\} =  & \, \left( b_3 h_1  + b_4 h_2 \right) p_{1,2,3} + c_4 p_{1,2} p_{1,3}  + c_5 p_{1,2} p_{2,3} + c_6 p_{1,3} p_{2,3}  \\
\{p_{1,2},p_{1,2,3}\} =  & \, \left( b_5 h_1  + b_6 h_2 \right) p_{1,2,3} + c_7 p_{2,3} p_{1,3}  + c_8 p_{1,2} p_{2,3} + c_9 p_{1,3} p_{2,3}  \\
\{p_{1,2},p_{1,3,2}\} =  & \, \left( b_7 h_1  + b_8 h_2 \right) p_{1,2,3} + c_{10} p_{1,2} p_{1,3}  + c_{11} p_{1,2} p_{2,3} + c_{12} p_{1,3} p_{2,3}  \\
\{p_{1,2},p_{1,3,2}\} =  & \, \left( b_9 h_1  + b_{10} h_2 \right) p_{1,2,3} + c_{13} p_{1,2} p_{1,3}  + c_{14} p_{1,2} p_{2,3} + c_{15} p_{1,3} p_{2,3}  \\
\{p_{1,2},p_{1,3,2}\} =  & \, \left( b_{11} h_1  + b_{12} h_2 \right) p_{1,2,3} + c_{16} p_{1,2} p_{1,3}  + c_{17} p_{1,2} p_{2,3} + c_{18} p_{1,3} p_{2,3}.
\end{align*}  In this context, $a_1,\ldots,a_6,b_1,\ldots,b_{12},c_1,\ldots,c_{18}$ represent arbitrary coefficients.  Taking into account the Poisson relation provided in Table $\ref{tab:sl3_commutator}$, we are able to present the explicit expansion in each Poisson bracket. For example, the Poisson bracket $\left\{p_{1,2}, p_{1,2,3}\right\}$, developed via the Poisson relations $\{e_{12},e_{23}\}$ and $\{e_{12},e_{31}\}$, includes only the terms $p_{1,2} p_{1,3}$ and $p_{1,2} p_{2,3}$. Finally, for the term in $\textbf{q}_3,$ we have  $\mathcal{G} \left( \{p_{1,3,2},p_{1,2,3 }\}\right) =  (1,2,2)  \tilde{+} (0,3,2) \tilde{+} (0,2,3).$  Similar to the analysis above, the allowed polynomials in each homogeneous gradings in $\mathcal{G} \left( \{p_{1,3,2},p_{1,2,3 }\}\right) $ are given by    \begin{align}
\nonumber 
   (1,2,2)  = & \, \{h_1p_{1,2}p_{1,3}, h_1 p_{1,2}p_{2,3}, h_1p_{1,3}p_{2,3}, h_2p_{1,2}p_{1,3}, h_2p_{1,2}p_{2,3}, h_2p_{1,3}p_{2,3}\} ; \\
    (0,3,2) = & \, \{ p_{1,2,3} p_{1,2}, p_{1,2,3} p_{1,3} , p_{1,2,3} p_{2,3} \}; \label{eq:4.25} \\
    (0,2,3) =& \, \{ p_{1,3,2} p_{1,2}, p_{1,3,2} p_{1,3} , p_{1,3,2} p_{2,3} \}. \nonumber
\end{align} Then \begin{align*}
     \{p_{1,3,2},p_{1,2,3 } \}= & \, \left(d_1  p_{1,2}p_{1,3}+ d_2  p_{1,2}p_{2,3}+ d_3  p_{1,3}p_{2,3}\right)h_1 + \left(d_4  p_{1,2}p_{1,3}+ d_5  p_{1,2}p_{2,3}+ d_6 p_{1,3}p_{2,3} \right)h_2\\
     & \, + \left(  e_1 p_{1,2} + e_2  p_{1,3} +e_3  p_{2,3} \right) p_{1,2,3}  + \left( f_1   p_{1,2} + f_2   p_{1,3}+ f_3  p_{2,3}\right)p_{1,3,2} ,
\end{align*}
where $d_1,\ldots, d_6, e_1,e_2,e_3,f_1,f_2,f_3$ are constants. In this discussion, we will refrain from explicitly stating the intricate forms of these non-trivial brackets, as our intention is to employ the root system of $\mathfrak{sl}(n+1,\mathbb{C})$ in Section \ref{5}. This approach will enable us to systematically simplify and further reduce the complexity of the components involved. 

Summing up, we observe that the closed form in $\eqref{eq:ma67}$ becomes
\begin{align*}
     \begin{matrix}
         \{\textbf{q}_2,\textbf{q}_2\} \sim \textbf{q}_3 \\
            \{\textbf{q}_2,\textbf{q}_3\} \sim    \textbf{q}_2^2 +\textbf{q}_1\textbf{q}_3 \\
           \{\textbf{q}_3,\textbf{q}_3\} \sim \textbf{q}_2\textbf{q}_3 + \textbf{q}_1  \textbf{q}_2^2 .
     \end{matrix}
 \end{align*}   We provide the comparison on how the grading reduces the number of components with the one in the compact form \eqref{eq:ma67}: 
 
 \begin{table}[h]
     \centering
     \begin{tabular}{|c|c|c|c|}
     \hline
      \textsf{ Poisson brackets}   & $\begin{matrix}
         \textsf{   No. of polynomials without} \\
         \textsf{  using the grading method }
      \end{matrix}$  & $\begin{matrix}
          \textsf{Maximum No. of polynomials after applying} \\
          \textsf{ the grading method} 
      \end{matrix}$    \\
          \hline 
  $\{\textbf{q}_2,\textbf{q}_2\}$       & $ 12 $ & $2 $  \\
          \hline 
 $\{\textbf{q}_2,\textbf{q}_3\}$       & $24 $ & $6 $ \\
          \hline 
    $\{\textbf{q}_3,\textbf{q}_3\}$       & $42 $ & $12 $ \\
          \hline       
     \end{tabular}

     \quad
     
     \caption{Comparison of the number of polynomials }
     \label{tab:my_label}
 \end{table} 
 
 We observe that in Table \ref{tab:my_label}, even in the worst case, a large number of polynomials is eliminated by the grading method. Compared with the illustrative examples provided in Subsections \ref{4.1} and \ref{4.2}, the effectiveness of the grading method is demonstrated.

Building upon the work in \cite{campoamor2023algebraic}, there exists a linear isomorphism characterized by the following basis transformation: \begin{align*}
      \quad\text{ }  \begin{matrix}
        c_1 = \frac{1}{3}(2h_1+h_2),\;  c_2 = \frac{1}{3}(h_2 -h_1),\;  c_3 = - \frac{1}{3}(h_1 + 2 h_2),\; c_{ij} = p_{i,j}, \\
        f_{123} = \frac{1}{2}(p_{1,3,2}-p_{1,2,3}),\quad g_{123} = \frac{1}{2}(p_{1,3,2} + p_{1,2,3}).
    \end{matrix}
\end{align*} 
In this basis, the polynomials display different symmetry/antisymmetry properties. Our goal is to ascertain the grading associated with brackets in this Racah-type algebra. Here, $\tilde{\textbf{Q}}_3  := \tilde{\textbf{q}}_1 \sqcup \tilde{\textbf{q}}_2 \sqcup \tilde{\textbf{q}}_3 $ with $\tilde{\textbf{q}}_1 = \{c_1,c_2\}$, $\tilde{\textbf{q}}_2 = \{c_{ij}: 1 \leq i < j \leq 3 \}$, and $\tilde{\textbf{q}}_3 = \{f_{123},g_{123}\}$. The grading for monomials in both $\tilde{\textbf{q}}_1$ and $\tilde{\textbf{q}}_2$ is identical. In contrast, for $\tilde{\textbf{q}}_3$, we determine that $\mathcal{G}(f_{123}) = \mathcal{G}(g_{123}) = (0,1,2) \tilde{+} (0,2,1).$ Consequently, the grading of the Poisson brackets are \begin{align}
\nonumber
    \mathcal{G}\left(\{c_{ij},c_{jk}\}\right)=  &\,\mathcal{G}(f_{ijk}) = (0,1,2) \tilde{+} (0,2,1), \\
    \mathcal{G} \left(\{c_{jk},f_{ijk}\}\right) =& \, \mathcal{G} \left(\{c_{jk},g_{ijk}\}\right) = (0,2,2)   \tilde{+}   (1,2,1)   \tilde{+}   (1,1,2)  , \label{eq:grad33}   \\
    \nonumber
    \mathcal{G} \left(\{f_{ijk},g_{ijk}\}\right)  =  & \, (1,2,2)  \tilde{+} (0,2,3) \tilde{+}  (0,3,2)   
\end{align} with $ 1 \leq i \neq j \neq k \leq 3 .$    The allowed polynomials from each homogeneous gradings defined in \eqref{eq:grad33} are similar to what we had in \ref{eq:4.24} and \eqref{eq:4.25}, therefore we will omit the permissible polynomials in each of the non-trivial Poisson brackets.   

\section{Grading of monomials in $\mathcal{Q}_{A_n}(n)$}

\label{5}

 Within Subsection \ref{4.3} we have elucidated, with an illustrative example, the effectiveness of the grading technique when applied to the non-trivial brackets of the Cartan invariant generators within the symmetric algebra $S(\mathfrak{sl}(3,\mathbb{C}))$. In accordance with our exploration, this section will delve into the employment of root systems as a supplementary analytical tool to these generators, facilitating further reductions of the polynomial components within the non-trivial Poisson brackets. Consider the special linear algebra $\mathfrak{sl}(n+1,\mathbb{C}),$ which is a Lie algebra consisting of $(n+1) \times (n+1)$ matrices with trace zero, in its defining representation. In what follows, we denote $\mathfrak{sl}(n+1,\mathbb{C})$ by $A_n$. Let $E_{ij}$ with $1 \leq i,j \leq n+1$ be the generators of $A_n$ subjected to the constraint $\sum_{i=1}^{n+1} E_{ii}= 0$. Note that $  A_n $ admits a triangular decomposition $\mathfrak{h} \oplus\mathfrak{g}^+ \oplus \mathfrak{g}^- $ that also satisfies the commutator relations $\eqref{eq:commutator3},$ where $\mathfrak{h}$ is the Cartan subalgebra, $\mathfrak{g}^+$ consists of all positive root vectors, and $\mathfrak{g}^-$ contains all negative root vectors. In detail, the commutation relations are given by \begin{align}
    [E_{ij},E_{kl}] = \delta_{jk} E_{il} - \delta_{il} E_{kj}  \quad 1 \leq i,j,k,l \leq n +1. \label{eq:relat}
\end{align} In particular, the Cartan subalgebra is determined by \begin{align*}
    [E_{i,i+1} ,E_{i+1,i}] = E_{i,i} - E_{i+1, i+1} = H_i \quad 1 \leq i \leq n .
\end{align*} Let $\mathfrak{sl}^*(n+1,\mathbb{C})$ be the dual space of $A_n$ with the following lexicographically ordered coordinates \begin{align*}
  &  h_i, \hskip 0.86cm 1 \leq i \leq n; \\
   &   e_{i,i+a}, \quad 1 \leq i \leq n , \text{ } 1 \leq a \leq n+1 -i ; \\
& e_{i+a,i}, \quad 1 \leq i \leq n, \text{ } 1 \leq a \leq n+1 -i   . 
\end{align*} 


In this Section \ref{5}, we focus on the centralizer in $S(A_n)$ with respect to the Cartan subalgebra $\mathfrak{h}.$  It is established that Casimir invariants and commutants relative to Cartan belong to the weight zero space.  From \cite{MR1920389}, the generators of the Cartan commutant can be identified with a $k$-cycle in the symmetric group $S_{n+1}.$ Recall that the Cartan centralizer of $A_n$ is generated by \begin{align}
    \textbf{Q}_{n+1}= \left\{h_1,\ldots,h_n, p_{i_1,i_2},p_{i_1 ,i_2,i_3},\ldots,p_{i_1,\ldots, i_{n+1}}\right\}   \label{eq:finiteindic}
\end{align} with $1 \leq i_1 , i_2 , \ldots , i_{n+1} \leq n  +1.$ Here $ \left|\textbf{Q}_{n+1}\right| =  \sum_{r=1}^{n+1} \frac{(n+1)!}{(n+1-r)!r} -1$ and $p_{i_1,\ldots, i_{n+1} } = e_{i_1i_2} e_{i_2i_3} \cdots e_{i_{n}i_{n+1}} e_{i_{n+1}i_1}$. It had been shown in \cite{campoamor2023algebraic} that, for $n \geq 2$, $\textbf{Alg} \left\langle \textbf{Q}_{n+1}\right\rangle$ is a degree $n $ polynomial algebra, and is closed in the Poisson-Lie bracket $\{\cdot,\cdot\}$ with extra polynomial relations.  In the following, we will denote it as $\mathcal{Q}_{A_n}(n)$. For the sake of completeness, we mention that when $n = 1$ the polynomial algebra is Abelian. In the subsequent sections, we will perform a detailed calculation of the admissible monomials for each potential grading of the non-trivial brackets.


\subsection{Grading and root systems in $\mathcal{Q}_{A_n}(n)$}


We first provide the grading of the generators in $\textbf{Q}_{n+1}$, and then introduce some basic terminology that allows us to obtain the explicit expression of the components in each non-trivial bracket. Using Lemma $\ref{3.9}$ (ii), the following properties are deduced: 

\begin{proposition}
\label{5.1}
     Let $\mathcal{G}$ be a grading of a monomial, and let $p_{i_1, i_2, \dots, i_r} $ be a generator in $\mathcal{Q}_{A_n}(n)$. Then $\mathcal{G} (p_{i_1, i_2  \dots, i_r }) = \left\{(0,n_+,n_-) : n_+ + n_- = r, \text{ } n_\pm \neq r\right\} $ for any $2 \leq r \leq n+1.$ In particular, suppose that $p \in \textbf{q}_k$ with $\mathcal{G}(p) = \left(0,n_+^{(k)},n_-^{(k)}\right)$ and $q \in \textbf{q}_l$ with $\mathcal{G}(q) = \left(0,n_+^{(l)},n_-^{(l)}\right)$. Then \begin{align}
        \mathcal{G}\left(\{\textbf{q}_k,\textbf{q}_l\}\right) = & \, \left(1,n_+^{(k)} + n_+^{(l)} -1,n_-^{(k)} + n_-^{(l)}-1\right) \tilde{+}  \left(0,n_+^{(k)} + n_+^{(l)}  ,n_-^{(k)}   + n_-^{(l)}-1 \right)   \tilde{+} \left(0,n_+^{(k)} + n_+^{(l)} -1,n_-^{(k)} + n_-^{(l)}  \right). \label{eq:comm33}
    \end{align}
\end{proposition}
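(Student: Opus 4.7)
The plan is to prove Proposition \ref{5.1} in two steps, corresponding to the description of $\mathcal{G}(p_{i_1,\ldots,i_r})$ and to the computation of $\mathcal{G}(\{\textbf{q}_k,\textbf{q}_l\})$.

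First, I would analyze the grading of an arbitrary cyclic generator $p_{i_1,\ldots,i_r} = e_{i_1 i_2} e_{i_2 i_3}\cdots e_{i_r i_1}$. Since this is a product of $r$ non-Cartan root vectors, the first grading entry (counting Cartan generators) is $0$. Under the triangular decomposition $A_n = \mathfrak{h} \oplus \mathfrak{g}^+ \oplus \mathfrak{g}^-$, each factor $e_{i_s i_{s+1}}$ (indices read cyclically) lies in $\mathfrak{g}^+$ exactly when $i_s < i_{s+1}$, and in $\mathfrak{g}^-$ otherwise. Denoting the respective counts by $n_+$ and $n_-$, we have $n_+ + n_- = r$, so $\mathcal{G}(p_{i_1,\ldots,i_r}) = (0, n_+, n_-)$. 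The constraint $n_\pm \neq r$ follows from the cyclic nature of the index sequence: $n_+ = r$ would force $i_1 < i_2 < \cdots < i_r < i_1$, a contradiction, and $n_- = r$ is excluded symmetrically. To verify that every admissible $(n_+, n_-)$ is actually attained by some generator, I would exhibit explicit cyclic tuples — for instance choose $n_+$ strictly ascending indices followed by a descending run of length $n_-$, which is possible whenever $r \leq n+1$.

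Second, for the Poisson bracket part, I would apply Lemma \ref{3.9}(ii) with $\mathfrak{g}_1 = \mathfrak{h}$, $\mathfrak{g}_2 = \mathfrak{g}^+$, $\mathfrak{g}_3 = \mathfrak{g}^-$, noting that these satisfy exactly the commutator relations \eqref{eq:commutator3} by the standard properties of the triangular decomposition. For $p \in \textbf{q}_k$ and $q \in \textbf{q}_l$, the hypothesis that the first grading component vanishes means that the Cartan factors $A_1$ and $B_1$ in the decompositions $p = A_1 A_2 A_3$ and $q = B_1 B_2 B_3$ are scalars. Specializing Lemma \ref{3.9}(ii) to $i_1 = i_1' = 0$ produces four candidate terms, but the one with grading $(-1, n_+^{(k)} + n_+^{(l)}, n_-^{(k)} + n_-^{(l)})$ originates in the proof from the bracket $\{A_1, B_1\}$, which vanishes identically once $A_1 = B_1 = 1$. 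What remains are exactly the three terms asserted in \eqref{eq:comm33}.

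I anticipate that the main obstacle is the surjectivity direction in the first step: certifying that every triple $(0, n_+, n_-)$ with $n_+ + n_- = r$ and $n_\pm < r$ is realized by at least one cyclic generator requires a small combinatorial construction, which is most delicate at the boundary $r = n+1$ where all indices from $\{1,\ldots,n+1\}$ must appear exactly once. Once this is in place, both parts reduce to careful bookkeeping: the grading constraints follow directly from the root-space decomposition, and the bracket formula is a clean specialization of the general identity already established in Section \ref{3}.
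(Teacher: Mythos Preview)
Your proposal is correct and follows essentially the same line as the paper's own proof: the paper verifies the grading of $p_{i_1,\ldots,i_r}$ by inspection for small $r$ and then asserts the general pattern, and for the bracket formula it simply invokes Lemma~\ref{3.9}(ii) and omits details. Your version is strictly more careful --- you explain why $n_\pm \neq r$ via the cyclic ordering contradiction and address surjectivity, neither of which the paper makes explicit. One small imprecision: the spurious term $(-1,\,n_+^{(k)}+n_+^{(l)},\,n_-^{(k)}+n_-^{(l)})$ in Lemma~\ref{3.9}(ii) does not arise solely from $\{A_1,B_1\}$ but from all Leibniz summands containing a bracket with $A_1$ or $B_1$ (e.g.\ $\{A_1,B_2\}$, $\{A_3,B_1\}$, etc.); however, since $A_1=B_1=1$ forces every such bracket to vanish, your conclusion stands.
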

\begin{remark}
    The notation $n_+^{(k)}$ denotes the number of positive root vectors in a degree $k$ generator $p$ in the set $\textbf{q}_k,$ and $n_-^{(k)}$ represents the number of negative root vectors in $p.$ Similar explanations hold for the notations $n_+^{(l)}$ and $n_-^{(l)}$.
\end{remark}

\begin{proof}
Starting with $\deg p = 2$, by definition, we have $\mathcal{G}\left(p_{i_1 ,i_2}\right) = (0,1,1) $. We now assume that $\deg p \geq 3$. We then have \begin{align*}
    \mathcal{G}\left(p_{i_1 ,i_2,  i_3}\right) =& \,\{(0,1,2) , (0,2,1)\}, \\
    \mathcal{G}\left(p_{i_1, i_2,i_3 ,i_4  }\right) = & \, \{(0,1,3),(0,2,2),(0,3,1)\} , \\
    & \vdots \\
    \mathcal{G}\left(p_{i_1,\ldots,i_r}\right) =& \, \left\{(0,n_+,n_-) : n_+ + n_- = r , \text{ } n_\pm \neq r\right\}
\end{align*}
with all $1 \leq i_1 ,\ldots,i_r \leq n +1 ,$ where $n_+  $ is the number of positive roots  and $n_-  $ is the number of negative roots in $p_{i_1,i_2,\ldots,i_r}$ respectively. For the second part, the grading of $\{\textbf{q}_k,\textbf{q}_l\}$ follows directly by applying Lemma $\ref{3.9}$ part (ii). Hence, we omit the details here.
\end{proof}

\begin{corollary}
    For any $k,l \in \{1,\ldots,n+1\}$, suppose that $\mathcal{G} \left(\{\textbf{q}_k,\textbf{q}_l\}\right) = (t, n_+,n_-)$ such that $ t + n_+ + n_- = \deg\{\textbf{q}_k,\textbf{q}_l\}. $ Then $ t = 0,1.$
\end{corollary}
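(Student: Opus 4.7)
The plan is to derive the corollary directly from Proposition~\ref{5.1} after clearing away the trivial cases. First I would handle $k=1$ or $l=1$ separately: the set $\textbf{q}_1=\{h_1,\ldots,h_n\}$ consists of Cartan coordinates, and every element of $\textbf{Q}_{n+1}$ lies in the centralizer $S(A_n)^{\mathfrak{h}}$ by construction, so $\{h_i,q\}=0$ for all $q\in \textbf{Q}_{n+1}$. In this regime the bracket vanishes identically and the statement holds vacuously; henceforth I may assume $k,l\geq 2$.

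In the substantive range $k,l\geq 2$, the first half of Proposition~\ref{5.1} shows that every generator $p\in\textbf{q}_k$ has grading of the form $(0,n_+^{(k)},n_-^{(k)})$ with $n_+^{(k)}+n_-^{(k)}=k$ and $n_\pm^{(k)}\neq k$ — i.e.\ the Cartan coordinate vanishes on $p$. Applying the second half of Proposition~\ref{5.1} (equation~\eqref{eq:comm33}) gives
\begin{align*}
\mathcal{G}\!\left(\{\textbf{q}_k,\textbf{q}_l\}\right)
= &\, \bigl(1,\; n_+^{(k)}+n_+^{(l)}-1,\; n_-^{(k)}+n_-^{(l)}-1\bigr) \\
& \tilde{+}\, \bigl(0,\; n_+^{(k)}+n_+^{(l)},\; n_-^{(k)}+n_-^{(l)}-1\bigr) \\
& \tilde{+}\, \bigl(0,\; n_+^{(k)}+n_+^{(l)}-1,\; n_-^{(k)}+n_-^{(l)}\bigr),
\end{align*}
and the first coordinates of the three summands are $1$, $0$, $0$ respectively. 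Therefore every homogeneous component of $\mathcal{G}(\{\textbf{q}_k,\textbf{q}_l\})$ has $t\in\{0,1\}$, which is precisely the claim.

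I do not anticipate an obstacle, since all the structural work has already been done in Lemma~\ref{3.9}(ii) and Proposition~\ref{5.1}. The conceptual reason behind the bound, which I would emphasize in a short concluding remark, is the triangular commutation rule $[\mathfrak{g}^+,\mathfrak{g}^-]\subset\mathfrak{h}$: in the Leibniz expansion of $\{p,q\}$ each surviving term pairs at most one positive root variable with one negative root variable, so at most a single Cartan factor can be produced per term; the remaining factors are products of root variables and contribute $0$ to the first coordinate. This explains structurally why $t\leq 1$ and why the bound is sharp (it is attained by the $(1,\cdot,\cdot)$ summand arising from the cross brackets $\{A_2,B_3\}$ and $\{A_3,B_2\}$ in the proof of Lemma~\ref{3.9}).
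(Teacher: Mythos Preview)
Your proposal is correct and follows essentially the same approach as the paper, which states the corollary without proof as an immediate consequence of Proposition~\ref{5.1}: reading off the first coordinate in equation~\eqref{eq:comm33} gives $t\in\{0,1\}$, with the $k=1$ or $l=1$ case handled by centrality of the Cartan generators. Your added conceptual remark about $[\mathfrak{g}^+,\mathfrak{g}^-]\subset\mathfrak{h}$ producing at most one Cartan factor per Leibniz term is a nice touch that the paper does not make explicit.
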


It is important to note, as inferred from the outcome presented in Subsection $\ref{4.3},$ that a vast array of allowable generators exist for each homogeneous degree in equation $\eqref{eq:comm33}$. Our goal is to further decrease the number of generators within each degree of the Poisson brackets. In the forthcoming analysis, for any $p = \epsilon_{\beta_1} \cdots \epsilon_{\beta_r} 
 \in S(\mathfrak{g}^+ \oplus \mathfrak{g}^-)$, we denote the roots corresponding to each root vector contained in $p$ by $R(\epsilon_{\beta_j}) = \beta_j.$ Then $R(p) = \beta_1 + \ldots + \beta_r$. This notation allows us to systematically address the relations between the generators and the associated roots in their respective gradings. Moreover, from the terminologies in \cite{campoamor2024superintegrable}, we observe that $p  \in S(\mathfrak{g})^\mathfrak{h} $ is linearly independent and indecomposable if and only if there exists a root $\beta_k$ such that $ \mathrm{lgh}(\beta_k) = \max_{1 \leq j \leq r} \left\{\mathrm{lgh}(\beta_j)\right\}$ and $R(p) =\sum_{j=1}^r \beta_j = 0$, where $\mathrm{lgh}:\Phi \rightarrow \mathbb{N}_0$ is the length of a root in $\Phi$. For the rest of this section, let $\Phi_{A_n} : = \Phi$ be the root system of type $A_n.$

In this Subsection \ref{5.1}, with the help of the root system of $A_n,$ we will provide a classification of the allowed components appearing in the expansions of the Poisson brackets of a certain degree. For any non-zero generators $p, q \in \mathcal{Q}_{A_n}(n)$ consider, without loss of generality, that $p = \prod_{i \geq 1}^t e_{\beta_i}$ and $q = \prod_{j \geq 1}^r e_{\gamma_j}$ with $ 1\leq t,r \leq n+1$.  Here $  \beta_1 + \ldots +  \beta_t = 0$ and $ \gamma_1 + \ldots  +\gamma_r = 0 $. Using Leibniz's rule, we have

\begin{align}
\left\{\prod_{i \geq 1}^t e_{\beta_i}, \prod_{j \geq 1}^r e_{\gamma_j}\right\} & = \sum_{\begin{small}
\begin{matrix}
1 \leq i_b \leq t \\
1 \leq j_c \leq r
\end{matrix}
\end{small}} \left\{e_{\beta_{i_b}}, e_{\gamma_{j_c}}\right\} \prod_{i \ne i_b} e_{\beta_{i_b}} \prod_{j \ne j_c} e_{\gamma_{j_c}}. \label{eq:mulro}
\end{align}
Taking into account the expression on the right-hand side of $\eqref{eq:mulro}$ and the relations among the roots in the root decomposition, our focus narrows to the case where $\beta_{i_b} + \gamma_{j_c} \in \Phi$ for every pair $i_b , j_c$, as $\left\{e_{\beta_{i_b}}, e_{\gamma_{j_c}}\right\} = 0$ whenever $\beta_{i_b} + \gamma_{j_c}$ is not in $\Phi$. This fact motivates the following definition.

\begin{definition}
    For any $\alpha, \beta \in \Phi,$ we say that they are $\textit{connected}$ if $\alpha+ \beta \in \Phi$. We denote the connected roots by $(\alpha,\beta).$
\end{definition}


We now present several key observations that assist in identifying the components within $\{p,q\}$.

\begin{proposition}
 \label{root}
  Let $\Phi$ be the root system of $A_n$.  The following properties hold:
  
(i) For any roots $\alpha_1,\alpha_2 \in \Phi$, assume that $(\alpha_1,\alpha_2)$ is connected. We further assume that there exists a root $\beta \in \Phi$ such that both $(\alpha_1 ,\beta )$ and $(\alpha_2 ,\beta )$ are connected. Then $\alpha_1 + \alpha_2 + \beta = 0. $

(ii) Let $\alpha,\beta \in \Phi$ with $\alpha + \beta = 0.$ Then there does not exist a $\beta' \neq \beta$ in $\Phi$ such that $(\alpha,\beta')$ is connected.
\end{proposition}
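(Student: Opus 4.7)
The plan is to work throughout in the standard realization $\Phi_{A_n} = \{e_i - e_j : 1\le i\neq j\le n+1\}$ and translate the notion of "connected" into an elementary index condition: for $\mu = e_i - e_j$ and $\nu = e_k - e_l$ in $\Phi$, the sum $\mu+\nu$ lies in $\Phi$ iff exactly one of the equalities $j=k$, $i=l$ holds, because if both hold then $\mu+\nu=0\notin\Phi$ and if neither holds then $\mu+\nu$ has four nonzero coordinates and cannot be a root. With this criterion in hand, both parts of the proposition reduce to short finite case checks.

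For part (i), I fix $\alpha_1 = e_i - e_j$ and, by the criterion, take without loss of generality $\alpha_2 = e_j - e_l$ with $i\neq l$, so that $\alpha_1 + \alpha_2 = e_i - e_l \in \Phi$. Writing $\beta = e_a - e_b$, connectedness of $(\alpha_1,\beta)$ forces $a=j$ or $b=i$, and connectedness of $(\alpha_2,\beta)$ forces $a=l$ or $b=j$. Three of the four resulting combinations are immediately forbidden (they yield respectively $j=l$, $a=b$, or $i=j$), leaving only $a=l,\,b=i$, i.e.\ $\beta = e_l - e_i$. One then reads off $\alpha_1 + \alpha_2 + \beta = (e_i - e_j)+(e_j - e_l)+(e_l - e_i) = 0$, as claimed.

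For part (ii), my plan is a proof by contradiction in the same index framework. Writing $\alpha = e_i - e_j$, the hypothesis $\alpha+\beta = 0$ forces $\beta = e_j - e_i$, so that $\{\alpha,\beta\}$ already exhausts the two indices $i$ and $j$. Suppose that some $\beta' = e_a - e_b \in \Phi$ with $\beta'\neq \beta$ satisfied $\alpha + \beta' \in \Phi$; the connectedness criterion restricts $\beta'$ to the two families $\beta' = e_j - e_k$ with $k\neq i$ or $\beta' = e_k - e_i$ with $k\neq j$. I would then combine the constraint $\beta'\neq e_j - e_i$ with the role that $\beta'$ is forced to play as the partner of the sum-zero pair $\{\alpha,\beta\}$ appearing in the bracket expansion \eqref{eq:mulro} (so that part (i) can be applied to the triple $(\alpha,\beta,\beta')$), in order to collapse every surviving index case onto $\beta' = \beta$, producing the required contradiction.

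The main obstacle I anticipate is precisely this last collapsing step in part (ii): the bare connectedness criterion leaves many candidates $\beta' = e_j - e_k$ or $e_k - e_i$ with $k\notin\{i,j\}$, so the argument must carefully invoke the structural context in which (ii) is used (namely as a companion to (i) inside the bracket $\{p,q\}$) to rule them out. I would first verify the statement by hand for $A_2$ and $A_3$, where the small index set makes the conclusion transparent, and then formalize the implicit compatibility condition so that the general $A_n$ case follows from the same index bookkeeping used in part (i).
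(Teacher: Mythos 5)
Your part (i) is correct and is essentially the paper's own argument: put $\alpha_1,\alpha_2$ in chained form using the criterion that $\mu+\nu\in\Phi$ iff exactly one of the two index coincidences holds, then eliminate three of the four combinations for $\beta$ and read off $\alpha_1+\alpha_2+\beta=0$ from the surviving one. Nothing to add there.

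Part (ii) is where the proposal stops short, and the obstacle you flag is genuine: the statement as literally written is \emph{false} for arbitrary $\beta'\in\Phi$. Take $\alpha=\epsilon_1-\epsilon_2$, $\beta=\epsilon_2-\epsilon_1$, $\beta'=\epsilon_2-\epsilon_3$; then $\alpha+\beta'=\epsilon_1-\epsilon_3\in\Phi$, so $(\alpha,\beta')$ is connected with $\beta'\neq\beta$. No amount of index bookkeeping will collapse such a $\beta'$ onto $\beta$, so the ``collapsing step'' you defer cannot be carried out at the stated level of generality, and invoking part (i) on the triple $(\alpha,\beta,\beta')$ does not help (part (i) would only tell you $\alpha+\beta+\beta'=0$ \emph{if} $(\beta,\beta')$ were also connected, which need not hold). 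What the paper actually proves is the restricted assertion you correctly guessed is intended: both $\beta=\epsilon_{i_k}-\epsilon_{i_{k+1}}$ and $\beta'=\epsilon_{i_j}-\epsilon_{i_{j+1}}$ are consecutive-difference roots drawn from the root set $J_r$ of a single indecomposable cyclic generator $p_{i_1,\ldots,i_r}$, whose indices $i_1,\ldots,i_r$ are pairwise distinct. Under that hypothesis $\alpha=-\beta=\epsilon_{i_{k+1}}-\epsilon_{i_k}$, and $\alpha+\beta'\in\Phi$ forces $i_{j+1}=i_{k+1}$ or $i_j=i_k$, each of which gives $j=k$ by injectivity of $s\mapsto i_s$ --- the desired contradiction. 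To repair your write-up, state and use this restricted hypothesis explicitly; the argument then closes in two lines and needs no appeal to the bracket expansion.
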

\begin{proof}
Given that roots $\alpha_1,\alpha_2,\beta \in \Phi$ and $(\alpha_1,\alpha_2)$ are connected, without loss of generality, assume that $\alpha_1 = \epsilon_{j_1} - \epsilon_{j_2},$ $\alpha_2 = \epsilon_{j_2} - \epsilon_{j_3}$ and $\beta = \epsilon_{i_k} - \epsilon_{i_{k+1}}$ for some $k$. Since both $(\alpha_1,\beta)$ and $(\alpha_2,\beta)$ are connected, by definition,
\begin{align*}
    \alpha_1 + \beta  = & \, \epsilon_{j_1} -\epsilon_{j_2} + \epsilon_{i_k} - \epsilon_{i_{k+1}} = \left\{\begin{matrix}
        \epsilon_{j_1}  - \epsilon_{i_{k+1}} & \text{ if } \epsilon_{j_2} =  \epsilon_{i_k} \\
         \epsilon_{i_k}-\epsilon_{j_2}  & \hskip 0.3cm \text{ if } \epsilon_{j_1}  = \epsilon_{i_{k+1}}
    \end{matrix}\right. \\
   \alpha_2 + \beta = & \, \epsilon_{j_2} -\epsilon_{j_3} + \epsilon_{i_k} - \epsilon_{i_{k+1}} = \left\{\begin{matrix}
        \epsilon_{j_2}  - \epsilon_{i_{k+1}} & \hskip -0.1cm \text{ if } \epsilon_{j_3} =  \epsilon_{i_k} \\
         \epsilon_{i_k}-\epsilon_{j_3}  & \hskip 0.3cm \text{ if } \epsilon_{j_2}  = \epsilon_{i_{k+1}} .
    \end{matrix}\right.
\end{align*} This implies that
\begin{center}
    (a) $\epsilon_{j_1} = \epsilon_{i_{k+1}}$ and $\epsilon_{j_3} = \epsilon_{i_k};$ \quad  (b) $\epsilon_{j_1} = \epsilon_{i_{k+1}}$ and $\epsilon_{j_2} = \epsilon_{i_{k+1}};$ \quad(c) $\epsilon_{j_2} = \epsilon_{i_k}$ and $\epsilon_{j_3} = \epsilon_{i_k}.$
\end{center}
It is clear that cases (b) and (c) need to be discarded, as there is no zero root. Hence $\alpha_1 + \alpha_2 + \beta = 0.$

Now we process part (ii). Without loss of generality, let $\beta = \epsilon_{i_k} - \epsilon_{i_{k+1}}$ and $\beta' = \epsilon_{i_j} - \epsilon_{i_{j+1}}$ with $j \neq k.$  Suppose, by contradiction, that $(\alpha,\beta')$ is connected. Since $\alpha+ \beta = 0,$ $\alpha = -\beta.$ Then \begin{align*}
    \alpha  + \beta ' = \epsilon_{i_j} - \epsilon_{i_{j+1}}  +   \epsilon_{i_{k+1}} -\epsilon_{i_k}= \left\{\begin{matrix}
        \epsilon_{i_j}  - \epsilon_{i_k} & \hskip 0.55cm \text{ if } \epsilon_{i_{j+1}}  = \epsilon_{i_{k+1}} \\
         \epsilon_{i_{k+1}}-\epsilon_{i_{j+1}}  & \text{ if } \epsilon_{i_j} =  \epsilon_{i_k}.
    \end{matrix}\right.
\end{align*} This implies that $j = k,$ which is a contradiction.
\end{proof}

\begin{proposition}
\label{rootc}
      For any monomial $p \in \textbf{Q}_{n+1}$ defined in $\eqref{eq:finiteindic}$, let $J_r := \{\beta_1,\ldots,\beta_r: \beta_1 + \ldots + \beta_r = 0\}  $ be the set consisting of roots in $p. $  Then, for any $\alpha \notin \Phi/J_r$, there are at most two distinct roots $\beta_\ell \neq \beta_l$ in $J_r$ with $1 \leq l \neq \ell \leq r$ such that $(\alpha,\beta_\ell)$ and $(\alpha,\beta_l)$ are connected. Here $\Phi/J_r$ means that we exclude the set $J_r$ from $\Phi.$
\end{proposition}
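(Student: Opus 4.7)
The plan is to make the root data of $p$ explicit and then classify the $\beta_k$'s connected to an arbitrary $\alpha$ by a short index analysis based on the $A_n$-root combinatorics recalled just before the statement. Recall from \eqref{eq:finiteindic} that $p = p_{i_1,\ldots,i_r} = e_{i_1 i_2} e_{i_2 i_3} \cdots e_{i_r i_1}$, which the paper identifies with an $r$-cycle in $S_{n+1}$. In particular, the indices $i_1,\ldots,i_r$ are pairwise distinct, and one can write
\[
J_r = \{\beta_k := \epsilon_{i_k} - \epsilon_{i_{k+1}} : 1 \le k \le r\},
\]
indices read cyclically modulo $r$ (so $i_{r+1}=i_1$). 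This normal form is the main technical input.

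Next, I would fix a root $\alpha = \epsilon_a - \epsilon_b \in \Phi$ (with $a\ne b$) and unpack the connection condition $\alpha + \beta_k \in \Phi$. Since $\alpha + \beta_k = (\epsilon_a - \epsilon_b) + (\epsilon_{i_k} - \epsilon_{i_{k+1}})$, a straightforward case check on which $\epsilon$-coordinates can cancel shows that this sum is a root of $A_n$ if and only if exactly one of the following holds: (a) $b = i_k$ and $a \ne i_{k+1}$, giving $\alpha + \beta_k = \epsilon_a - \epsilon_{i_{k+1}}$; or (b) $a = i_{k+1}$ and $b \ne i_k$, giving $\alpha + \beta_k = \epsilon_{i_k} - \epsilon_b$. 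The simultaneous possibility $b = i_k$ and $a = i_{k+1}$ would force $\alpha + \beta_k = 0 \notin \Phi$, and no other identification of indices produces a net root.

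The counting step is then immediate from the distinctness of $i_1,\ldots,i_r$: condition (a) determines $k$ uniquely (if at all) by $i_k = b$, and condition (b) determines $k$ uniquely (if at all) by $i_{k+1} = a$. Thus at most two indices $k \in \{1,\ldots,r\}$ produce a connected $\beta_k$, yielding at most two such roots in $J_r$. A brief check confirms that if both (a) and (b) are realized by $k_1$ and $k_2$ respectively, the resulting $\beta_{k_1}$ and $\beta_{k_2}$ are distinct: equality would force $i_{k_1} = i_{k_2}$ and $i_{k_1+1} = i_{k_2+1}$, hence $k_1 = k_2$ by distinctness, but then $\beta_{k_1} = -\alpha$ and $\alpha + \beta_{k_1} = 0$, contradicting connectedness.

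The main (minor) obstacle is keeping track of the degenerate possibility $\beta_k = -\alpha$, and, when $\alpha \in J_r$, the self-index $\alpha = \beta_k$: both must be excluded from the count so that a simultaneous satisfaction of (a) and (b) at the same $k$ does not spuriously inflate or deflate the bound. Both are handled uniformly by the explicit requirement $\alpha + \beta_k \in \Phi$ (equivalently, $\alpha + \beta_k \ne 0$) built into the definition of connectedness, after which the proposition follows from the two-case index analysis above.
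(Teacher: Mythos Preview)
Your proof is correct and rests on the same key observation as the paper: writing $\alpha=\epsilon_a-\epsilon_b$ and $\beta_k=\epsilon_{i_k}-\epsilon_{i_{k+1}}$, the condition $\alpha+\beta_k\in\Phi$ forces either $b=i_k$ or $a=i_{k+1}$, and distinctness of the cycle indices $i_1,\dots,i_r$ then pins down at most one $k$ in each case. The only difference is packaging: the paper argues by contradiction (assuming three connected roots $\beta_\ell,\beta_l,\beta_s$ and showing two of the indices must coincide), whereas you give the direct count, which is slightly cleaner and also makes explicit the role of the pairwise distinctness of $i_1,\dots,i_r$ that the paper uses implicitly.
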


\begin{proof}
 Suppose that the statement does not hold. That is, for any $\alpha \in \Phi/J_r$ we can assume that $(\alpha,\beta_\ell),(\alpha,\beta_l) $ and $ (\alpha,\beta_s)$ are connected with fixed $\ell \neq l \neq s   \in \{1,\ldots,r\}$. Without loss of generality, assume that $\alpha = \epsilon_i - \epsilon_j, \beta_\ell = \epsilon_{i_\ell} - \epsilon_{i_{\ell+1}}, \beta_l = \epsilon_{i_l} - \epsilon_{i_{l+1}}$ and $\beta_s = \epsilon_{i_s} - \epsilon_{i_{s+1}}$. By definition,
 \begin{align}
\nonumber
    \alpha+ \beta_\ell = & \, \epsilon_i - \epsilon_j + \epsilon_{i_\ell} - \epsilon_{i_{\ell+1}} = \left\{\begin{matrix}
         \epsilon_{i_\ell}  - \epsilon_j  & \text{ if } \epsilon_i = \epsilon_{i_{\ell+1}} \\
      \epsilon_i    - \epsilon_{i_{\ell+1}} & \hskip -0.2cm \text{ if } \epsilon_j = \epsilon_{i_\ell}
    \end{matrix}\right.  \\
    \alpha+ \beta_l = & \, \epsilon_i - \epsilon_j + \epsilon_{i_l} - \epsilon_{i_{l+1}} = \left\{\begin{matrix}
         \epsilon_{i_l}  - \epsilon_j  & \hskip  0.1cm \text{ if } \epsilon_i = \epsilon_{i_{l+1}} \\
      \epsilon_i    - \epsilon_{i_{l+1}} & \hskip  -0.1cm \text{ if } \epsilon_j = \epsilon_{i_l}
    \end{matrix}\right. \label{eq:classification0} \\
    \nonumber
    \alpha + \beta_s =  & \epsilon_i - \epsilon_j + \epsilon_{i_s} - \epsilon_{i_{s+1}} = \left\{\begin{matrix}
         \epsilon_{i_s}  - \epsilon_j  & \hskip  0.1cm \text{ if } \epsilon_i = \epsilon_{i_{s+1}} \\
      \epsilon_i    - \epsilon_{i_{s+1}} & \hskip  -0.10cm\text{ if } \epsilon_j = \epsilon_{i_s}.
    \end{matrix}\right. 
\end{align}
We note that determining the specific value of $\alpha$ merely requires two constraints from $\eqref{eq:classification0}$.  For example, given that $(\alpha,\beta_\ell) $ and $(\alpha,\beta_l) $ are connected, we infer that $\alpha = \left\{\begin{matrix}
    \epsilon_{i_{\ell+1}} - \epsilon_{i_l}\\
    \epsilon_{i_{l+1}} - \epsilon_{i_\ell}
\end{matrix}\right.. $  Then \begin{align}
    \alpha + \beta_s=\left\{\begin{matrix}
     \epsilon_{i_{\ell+1}} - \epsilon_{i_l}  + \epsilon_{i_s} - \epsilon_{i_{s+1}}\\
    \epsilon_{i_{l+1}} - \epsilon_{i_\ell} + \epsilon_{i_s} - \epsilon_{i_{s+1}}.
\end{matrix}\right. \label{eq:rootsum}
\end{align}
Since $\alpha + \beta_s \in \Phi,$ from $\eqref{eq:rootsum}$ we deduce that either $\ell = s  $ or $l = s,$ which contradicts our assumption. A similar argument holds if either $(\alpha ,\beta_\ell)$ and $(\alpha,\beta_s)$ are connected or $(\alpha ,\beta_l)$ and $(\alpha,\beta_s)$ are connected.
\end{proof}

\begin{example}
    Suppose that $r = 3$. We provide an example of the maximal number of connected pairs.  As $r = 3,$ then $(\alpha,\beta_j)$ are connected for all $1 \leq j \leq 3.$ From $\eqref{eq:classification0}$, once $\alpha$ is determined, we induce the following cases:

(a) The pairs $(\alpha,\beta_1)$ and $(\alpha,\beta_2)$ are connected, deducing that $\alpha = \epsilon_{i_3} - \epsilon_{i_1} = \beta_3.$ Then $\alpha+ \beta_3 = 2 \beta_3 \notin \Phi  ;$

(b) The pairs $(\alpha,\beta_1)$ and $(\alpha,\beta_3)$ are connected, deducing that $\alpha = \epsilon_{i_2} - \epsilon_{i_3} = \beta_2.$ Then $\alpha+ \beta_2 = 2 \beta_2 \notin \Phi  ;$

(c) The pairs $(\alpha,\beta_2)$ and $(\alpha,\beta_3)$ are connected, deducing that $\alpha = \epsilon_{i_1} - \epsilon_{i_2} = \beta_1.$ Then $\alpha+ \beta_1 = 2 \beta_1 \notin \Phi.$

Hence, not all pairs are connected.
\end{example}

  We now consider an interesting observation.  Suppose that $p,q \in \textbf{q}_t$ are such that $\mathcal{G}(p) = (0,n_+,n_-)$ and $\mathcal{G}(q) = (0,n_-,n_+)$ with $n_- + n_+ = t$.  Using Proposition \ref{5.1}, we deduce that $$\mathcal{G} (\{p,q\}) = \left(1,t -1,t-1\right) \tilde{+}  \left(0,t  ,t-1       \right)   \tilde{+} \left(0,t -1,t   \right). $$ We then have the following proposition.

  \begin{proposition}
  \label{5.8}
      Let $p = e_{\beta_1}\cdots e_{\beta_{t-1}} e_{\beta_t},q = e_{-\beta_1}  \cdots e_{-\beta_{t-1}} e_{-\beta_t} \in \textbf{q}_t$. Then \begin{align*}
          \{p,q\} = \sum_{l=1}^t N_{\beta_l,-\beta_l} \prod_{k \neq l}^t e_{\beta_k} e_{-\beta_k},
      \end{align*} where $N_{\beta_l,-\beta_l}   = \sum_{i=1}^n C_i^l h_i  $ and $h_i  $ is a coordinate function of Cartan generators in $\mathfrak{h}^* $ for each $i.$ Here $C_i^l$ are arbitrary coefficients.
  \end{proposition}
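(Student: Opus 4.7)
The strategy is to expand $\{p,q\}$ via the Leibniz rule of the Poisson bracket and then identify exactly which cross terms survive. Writing $p = \prod_{l=1}^{t} e_{\beta_l}$ and $q = \prod_{m=1}^{t} e_{-\beta_m}$, repeated application of Leibniz yields
\begin{equation*}
\{p, q\} \;=\; \sum_{l, m = 1}^{t} \{e_{\beta_l}, e_{-\beta_m}\} \prod_{k \neq l} e_{\beta_k} \prod_{k' \neq m} e_{-\beta_{k'}}.
\end{equation*}

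I would then argue that every off-diagonal contribution vanishes. Since $p \in \textbf{q}_t$ is a generator from $\textbf{Q}_{n+1}$, it has the cyclic form $e_{i_1 i_2} e_{i_2 i_3} \cdots e_{i_t i_1}$ with pairwise distinct indices $i_1, \ldots, i_t$, so the roots $\beta_l = \epsilon_{i_l} - \epsilon_{i_{l+1}}$ are pairwise distinct, and the analogous distinctness holds for the $-\beta_m$ appearing in $q$. Applying Proposition \ref{root} (ii) with the pair $(\alpha, \beta) = (\beta_l, -\beta_l)$ shows that for any $m \neq l$ one has $\beta_l + (-\beta_m) \notin \Phi$, hence $\{e_{\beta_l}, e_{-\beta_m\}} = 0$. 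Only the diagonal terms $l = m$ contribute, and so
\begin{equation*}
\{p, q\} \;=\; \sum_{l=1}^{t} \{e_{\beta_l}, e_{-\beta_l}\} \prod_{k \neq l} e_{\beta_k} e_{-\beta_k}.
\end{equation*}

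To conclude, each diagonal bracket is read off from \eqref{eq:relat}: since $[E_{i_l i_{l+1}}, E_{i_{l+1} i_l}] = E_{i_l i_l} - E_{i_{l+1} i_{l+1}}$ lies in $\mathfrak{h}$, its Poisson counterpart $N_{\beta_l, -\beta_l}$ is a linear combination of the Cartan coordinates, of the form $\sum_{i=1}^{n} C_i^{l} h_i$. Substituting this into the previous display produces the claimed formula.

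The main point requiring care is the vanishing of the off-diagonal contributions. Although Proposition \ref{root} (ii) supplies the disconnection statement directly, one must ensure that its hypothesis $\beta_l \neq \beta_m$ is actually enforced by the structure of elements in $\textbf{Q}_{n+1}$; the cyclic presentation with distinct indices $i_1,\dots,i_t$ delivers this, since any coincidence $\beta_l + (-\beta_m) \in \Phi$ with $l \neq m$ would force two of the $\epsilon_{i_\bullet}$ with matching signs to cancel, which is impossible when the indices are distinct. This is the only nontrivial input; the rest is a direct application of the Leibniz rule and the standard Chevalley-type identity for Cartan-valued brackets of opposite root vectors.
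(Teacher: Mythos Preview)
Your approach is essentially the same as the paper's: the paper's proof is the single line ``follows from using \eqref{eq:mulro} by a direct computation,'' and you have carried out precisely that direct computation, supplying the details of why the off-diagonal Leibniz terms vanish.

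One small caution: your appeal to Proposition~\ref{root}\,(ii) is not quite clean. As literally stated in the paper, that proposition is false for arbitrary $\beta'\in\Phi$ (e.g.\ in $A_2$, take $\alpha=\epsilon_1-\epsilon_2$, $\beta=-\alpha$, $\beta'=\epsilon_2-\epsilon_3$; then $\alpha+\beta'\in\Phi$). The paper's proof of (ii) tacitly restricts $\beta,\beta'$ to the cyclic set $J_r$, and your $\beta'=-\beta_m$ lives in the \emph{reversed} cycle rather than $J_r$ itself. Fortunately, your final paragraph supplies the correct self-contained argument: writing $\beta_l-\beta_m=\epsilon_{i_l}-\epsilon_{i_{l+1}}-\epsilon_{i_m}+\epsilon_{i_{m+1}}$, the only cancellations that could produce a root are $i_l=i_m$ or $i_{l+1}=i_{m+1}$, both of which force $l=m$ by distinctness of the cycle indices. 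That is the real reason the off-diagonal terms vanish, and with it your proof is complete and matches the paper's intended computation.
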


  \begin{proof}
  The proof follows from using \eqref{eq:mulro} by a direct computation.
  \end{proof}

\subsection{Explicit polynomials in $\{\textbf{q}_2,\textbf{q}_r\}$}
\label{5.2}

 We refer to Section $ \ref{5.1}$, where it was pointed out that the exact number of monomials derived from the grading of the non-trivial Poisson brackets depends on the Poisson brackets between the individual roots. In this Subsection \ref{5.2}, for any non-zero indecomposable monomial $p  \in \textbf{q}_2, q \in \textbf{q}_r $, we assume that $p = e_{\alpha}e_{-\alpha}$ and $q =    e_{\beta_1} \cdots e_{\beta_r}$ with $R(p) = \alpha + (-\alpha) = 0$ and $R(q) =   \beta_1 + \ldots + \beta_r = 0.$ Here $2 \leq r \leq n+1$ and $\alpha,\beta_1,\ldots,\beta_r \in \Phi $, unless stated otherwise. Recall that $J_r= \{\beta_1,\ldots,\beta_r: \beta_1 + \ldots + \beta_r = 0\}$ is the set consisting of all the roots from $R(q).$ Using Proposition $\ref{5.1},$ we deduce that \begin{align}
\mathcal{G} \left( \{p, q\}\right) = (1,n_+,n_-) \tilde{+} (0,n_++1,n_-) \tilde{+} (0,n_+,n_-+1) \text{ with } n_+ + n_- = r. \label{eq:commt}
\end{align} We aim to determine all the allowed polynomials from the grading in $\eqref{eq:commt}.$  From $\eqref{eq:commt},$ we can construct the components in the Poisson brackets by considering two blocks of the gradings: $  (1,n_+,n_-)$ and $(0,n_++1,n_-) \tilde{+} (0,n_+,n_-+1)$ separately. We first consider the permissible polynomials from the homogeneous grading $(1,n_+,n_-)$. By definition, they are decomposable into a Cartan generator and a monomial in $\textbf{q}_{r}$. On the other hand, a direct computation shows that
\begin{align}
    \left\{p,q\right\}  = \sum_{j =1}^r N_{\alpha,\beta_j} e_{-\alpha} \prod_{k\neq j} e_{\beta_k} + \sum_{j =1}^r N_{-\alpha,\beta_j} e_\alpha \prod_{k\neq j} e_{\beta_k}, \label{eq:commu2r}
\end{align}
where $N_{\alpha,\beta_j} = C_{\alpha,\beta_j} e_{\alpha+ \beta_j}$. Here, $C_{\alpha,\beta}$ are the structure constants in the commutator relations of $A_n$. In particular, the grading of \eqref{eq:commu2r} is as follows
\begin{align*}
    \mathcal{G} \left(\left\{e_{\alpha} e_{-\alpha}, e_{\beta_1} \cdots e_{\beta_r}\right\} \right) = &\underbrace{\sum_{\alpha + \beta_j \in \Phi} \left( \mathcal{G}\left(\{e_{\alpha},e_{\beta_j}\}e_{-\alpha} \prod_{k \neq j}e_{\beta_k}  \right) +  \mathcal{G}\left(\{e_{-\alpha},e_{\beta_j}\}e_{\alpha} \prod_{k \neq j}e_{\beta_k}  \right)\right)}_{\text{$\in (0,n_++1,n_-) \tilde{+} (0,n_+,n_-+1)$}}  \\
    &+ \underbrace{\sum_{\alpha + \beta_j =0 }   \mathcal{G}\left(\{e_{\alpha},e_{\beta_j}\}e_{-\alpha} \prod_{k \neq j}e_{\beta_k}  \right)   }_{\text{$ \in (1,n_+,n_-)$}}.
\end{align*}
As presented in the argument of Proposition \ref{rootc}, in the rest of this section, we assume that $\alpha = \epsilon_i - \epsilon_j,$ $\beta_j = \epsilon_{i_j} - \epsilon_{i_{j+1}}$ and $\beta_r = \epsilon_{i_r} - \epsilon_{i_1},$ where $\epsilon_i,\epsilon_j, \epsilon_{i_j}   \in \mathfrak{h}^* $ and $i,j,i_1,\ldots,i_r \in \{1,\ldots,n\}$.    Therefore, for a more comprehensive analysis of the allowed monomials in $\eqref{eq:commt}$, we should concentrate on categorizing the connectivity properties between the roots $\alpha$ and $\beta_j$.


\begin{proposition}
\label{5.5}
  Let $p = e_\alpha e_{-\alpha}\in \textbf{q}_2$ and $q = e_{\beta_1} \cdots e_{\beta_r} \in \textbf{q}_r$ be generators of $\mathcal{Q}_{A_n}(n)$, and let $J_r = \{\beta_1,\ldots,\beta_r: \beta_1 + \ldots + \beta_r = 0\}$ be the set consisting of all the roots in $q.$ Suppose that $(1,n_+,n_-) $ is contained in $\mathcal{G} (\{p,q\})$. Then for a fixed $\beta_\ell  = \epsilon_{i_\ell} - \epsilon_{i_{\ell+1}} \in J_r $ such that $\beta_\ell = -\alpha,$ we have
  \begin{align}
    \{p,q\} = 
       N_{-\beta_\ell,\beta_\ell}  \left( e_{\beta_\ell} \prod_{k\neq \ell}^r  e_{\beta_k} \right)  +  \left(C_{\beta_{\ell},\beta_{\ell+1}}   \left( e_{\beta_\ell+\beta_{\ell+1}}\prod_{k\neq \ell+1,\ell}^r e_{\beta_k}\right) + C_{\beta_{\ell-1},\beta_{\ell}}  \left(e_{\beta_\ell+\beta_{\ell-1}} \prod_{k\neq \ell-1}^r e_{\beta_k}\right) \right) p_{\ell,-\ell}. \label{eq:comp}
\end{align}
Here $C_{\beta_{\ell},\beta_{\ell+1}} $ and $C_{\beta_{\ell-1},\beta_{\ell}} $ are structure constants, $p_{\ell,-\ell} = e_{\beta_{\ell}} e_{-\beta_{\ell}}$ is a degree $2$ generator of $\mathcal{Q}_{A_n}(n)$ and $  N_{-\beta_\ell,\beta_\ell} = \sum_{i=1}^n C_i^\ell h_i $, where $h_i$ are Cartan elements and $C_i^\ell$ are constants.
\end{proposition}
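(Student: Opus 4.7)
My plan is to prove the formula by expanding $\{p,q\}$ directly via the Leibniz rule and then using Proposition~\ref{root} together with the cycle structure of the monomial $q$ to kill all but the three surviving families appearing in \eqref{eq:comp}. The hypothesis $\beta_\ell=-\alpha$ gives $p=e_{-\beta_\ell} e_{\beta_\ell}$, so two applications of Leibniz yield
\[
\{p,q\} \;=\; e_{\beta_\ell}\sum_{j=1}^{r} \{e_{-\beta_\ell}, e_{\beta_j}\}\prod_{k\ne j} e_{\beta_k} \;+\; e_{-\beta_\ell}\sum_{j=1}^{r} \{e_{\beta_\ell}, e_{\beta_j}\}\prod_{k\ne j} e_{\beta_k}.
\]
It therefore suffices to classify the non-vanishing brackets $\{e_{\pm\beta_\ell}, e_{\beta_j}\}$ for $1\le j\le r$ and then reassemble the result.

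For the first sum, the term $j=\ell$ contributes $\{e_{-\beta_\ell}, e_{\beta_\ell}\}=N_{-\beta_\ell,\beta_\ell}$, a linear combination of the Cartan coordinates $h_1,\ldots,h_n$, as prescribed by the Lie--Poisson structure of $A_n^*$. For $j\ne\ell$, Proposition~\ref{root}(ii) applied with $(\alpha,\beta)=(-\beta_\ell,\beta_\ell)$ rules out any $\beta_j\ne\beta_\ell$ in $\Phi$ for which $(-\beta_\ell,\beta_j)$ is connected; since the cycle indices $(i_1,\ldots,i_r)$ are pairwise distinct we have $\beta_j\ne\beta_\ell$ whenever $j\ne\ell$, hence $\{e_{-\beta_\ell}, e_{\beta_j}\}=0$. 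Consequently the first sum collapses to $N_{-\beta_\ell,\beta_\ell}\cdot e_{\beta_\ell}\prod_{k\ne\ell} e_{\beta_k}$, which is exactly the first summand of \eqref{eq:comp}.

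For the second sum, the $j=\ell$ contribution vanishes trivially since $2\beta_\ell\notin\Phi$. For $j\ne\ell$, writing $\beta_\ell=\epsilon_{i_\ell}-\epsilon_{i_{\ell+1}}$ and $\beta_j=\epsilon_{i_j}-\epsilon_{i_{j+1}}$ (with indices read cyclically in $r$), the sum $\beta_\ell+\beta_j$ lies in $\Phi$ only if exactly two of the four $\epsilon$'s cancel. By distinctness of the $i_k$, the only possibilities are $i_{\ell+1}=i_j$ (i.e.\ $j=\ell+1$) or $i_\ell=i_{j+1}$ (i.e.\ $j=\ell-1$), producing precisely the two surviving brackets $\{e_{\beta_\ell}, e_{\beta_{\ell\pm 1}}\}=C_{\beta_\ell,\beta_{\ell\pm 1}}\, e_{\beta_\ell+\beta_{\ell\pm 1}}$ with structure constants read off from \eqref{eq:relat}.

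Substituting these back into the second sum gives $e_{-\beta_\ell}\cdot C_{\beta_\ell,\beta_{\ell\pm 1}}\, e_{\beta_\ell+\beta_{\ell\pm 1}}\prod_{k\ne\ell\pm 1} e_{\beta_k}$; grouping the $e_{\beta_\ell}$ (present in the product since $\ell\ne\ell\pm 1$) with the external $e_{-\beta_\ell}$ factors out $p_{\ell,-\ell}=e_{\beta_\ell} e_{-\beta_\ell}$, and the remaining monomial has the form stated in the second summand of \eqref{eq:comp}. The only mildly delicate step is the combinatorial case analysis of Step~3, showing that within a cycle there are no connected pairs beyond the nearest neighbours $\beta_{\ell\pm 1}$; this is precisely where the cycle structure of $q$, rather than a generic weight-zero monomial, is essential, and it forms the main obstacle — although a routine one, once the distinctness of the cycle indices is invoked.
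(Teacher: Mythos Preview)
Your argument is correct and follows essentially the same route as the paper's proof: both expand $\{p,q\}$ via the Leibniz rule (the paper via \eqref{eq:commu2r}), show that the $e_{-\beta_\ell}$ sum survives only at $j=\ell$ by invoking Proposition~\ref{root} and the distinctness of the cycle indices, show that the $e_{\beta_\ell}$ sum survives only at the adjacent slots $j=\ell\pm 1$, and then factor out $p_{\ell,-\ell}$. Your write-up is slightly more explicit about the cycle-index combinatorics than the paper's, but the strategy and the key lemmas used are the same.
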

\begin{proof}
  By Proposition \ref{root}, for each $\alpha,$ we can have only one non-adjacent term in $J_r$ such that $\alpha = - \beta_\ell.$   Now, given $\beta_\ell \in J_r$ with $\alpha = -\beta_\ell$, we deduce that \begin{align*}
    N_{\alpha,\beta_\ell} = & \,  \sum_{i=1}^nC_i^\ell h_i \in \mathfrak{h}^*, \text{ } \quad  N_{\alpha,\beta_l} = 0\text{ with $l \neq \ell$} , \\
    N_{-\alpha,\beta_\ell} = & \, 0, \text{ } \quad  N_{-\alpha,\beta_l}= \left\{ \begin{matrix}
        0  &  \text{ with $ |l-\ell | \geq 2$   } \\
         C_{\alpha, \beta_l} \mathfrak{g}_{\alpha+\beta_l}  & \, \,  \text{ with $|l-\ell| \leq 1.$ }
    \end{matrix}\right.
\end{align*}  Back to $\eqref{eq:commu2r},$  we find that $  \sum_{j=1}^r N_{\alpha,\beta_j} e_{-\alpha} \prod_{k\neq j} e_{\beta_k} =     N_{-\beta_\ell,\beta_\ell}  \left( e_{\beta_\ell} \prod_{k\neq \ell}^r e_{\beta_k} \right) ,$ and if $|l - \ell| \leq 1,$ it is clear that $(\beta_\ell,\beta_{\ell+1})$ and $(\beta_{\ell-1},\beta_\ell)$ are connected. Then \begin{align*}
     \sum_{j =1}^r N_{-\alpha,\beta_j} e_\alpha \prod_{k\neq j} e_{\beta_k} =   \,C_{\beta_{\ell},\beta_{\ell+1}}  \left(e_{-\beta_\ell}e_{ \beta_\ell}\right) \left( e_{\beta_\ell+\beta_{\ell+1}}\prod_{k\neq \ell+1,\ell} e_{\beta_k}\right)  + C_{\beta_{\ell-1},\beta_{\ell}} \left( e_{-\beta_\ell} e_{\beta_\ell}\right)\left(e_{\beta_\ell+\beta_{\ell-1}} \prod_{k\neq \ell-1} e_{\beta_k}\right).
\end{align*} 
Summing up all the components, the expansion \eqref{eq:comp} is as required.
\end{proof}

We now look at the Cartan-free part. That is, assume that $(1,n_+,n_-) \notin \mathcal{G}(\{p,q\}) $. Starting with $r = 2$ and $\mathcal{G}(q) = (0,1,1)$, from $\eqref{eq:commt},$ we deduce that $\mathcal{G} \left( \{p, q\}\right) =  (1,1,1) \tilde{+} (0,2,1) \tilde{+} (0,1,2) .$ It is clear that the allowed monomials in the grading $(1,1,1)$ must be decomposable with respect to the generators in $\textbf{Q}_{n+1}.$ On the other hand, a direct computation shows that
\begin{align*}
     \{e_\alpha e_{-\alpha},e_{\beta} e_{-\beta}\} = N_{\alpha,\beta} e_{-\alpha} e_{-\beta} +  N_{-\alpha,\beta} e_{ \alpha} e_{-\beta} +  N_{\alpha, -\beta} e_{-\alpha} e_\beta +  N_{-\alpha,-\beta} e_{\alpha} e_{\beta},
 \end{align*}
 where $N_{\alpha,\beta} = C_{\alpha,\beta} e_{\alpha+ \beta}$.  Note that $N_{\alpha,\beta} = -N_{\beta,\alpha}$ and $C_{\alpha,\beta} =0$ if $\alpha + \beta \notin \Phi .$ In this context, we can examine two cases: $\alpha + \beta = 0$ or $\alpha = \beta.$ Both of these cases result in $p = q,$ which means that $\{p,q\} = 0.$ Therefore, we can deduce that $\mathcal{G} \left( \{p, q\}\right) =   (0,2,1) \tilde{+} (0,1,2)  $. Hence, the expansion of the Poisson brackets of two degree $2$ generators contains only in the degree $3$ generating set. That is, $\{\textbf{q}_2,\textbf{q}_2\} \sim \textbf{q}_3.$


We then consider the scenario where $r \geq 3$. We examine the connectivity of $ (\alpha,\beta_j)$ and $(-\alpha,\beta_j)$ for all $1 \leq j \leq r$. Given the symmetry property, we inspect the connectivity of pairs $(\alpha,\beta_j)$ for each $j$. Observe that if $(\alpha,\beta_j)$ are not connected for all $j$, then $\{p,q\} = 0$. Hence, we will assume that at least one pair of roots is connected. Furthermore, as established in Proposition $\ref{rootc},$ the number of distinct roots in the set $J_r$ connected to $\alpha$ does not exceed two. Consequently, we will proceed with the classification based on the number of roots in $J_r$ that are connected to $\alpha.$

We start our analysis by assuming that $J_r$ contains a unique root connected to $\alpha.$ Without loss of generality, let $\alpha = \epsilon_i - \epsilon_j$, and let $\beta_\ell = \epsilon_{i_\ell} - \epsilon_{i_{\ell+1}} \in J_r$ be a unique root such that the pair $(\alpha, \beta_\ell)$ is connected. By definition,
\begin{align}
    \alpha+ \beta_\ell = \epsilon_i - \epsilon_j + \epsilon_{i_\ell} - \epsilon_{i_{\ell+1}} = \left\{\begin{matrix}
         \epsilon_{i_\ell}  - \epsilon_j  & \text{ if } \epsilon_i = \epsilon_{i_{\ell+1}} \\
      \epsilon_i    - \epsilon_{i_{\ell+1}} & \hskip -0.3cm \text{ if } \epsilon_j = \epsilon_{i_\ell}
    \end{matrix}\right. \in \Phi. \label{eq:33}
\end{align}
Hence $\alpha = \epsilon_{i_{\ell+1}} - \epsilon_j$ or $\epsilon_i - \epsilon_{i_\ell}$ with $j \neq i_s, s\in \{1,\ldots,\ell\}$ and $i \neq i_t, t \in \{\ell+1,\ldots,r\}.$ Then using \eqref{eq:commu2r},
\begin{align}
    \{p,q\} = \left\{\begin{matrix}
         C_{\alpha,\beta_\ell} e_{\alpha+\beta_\ell} e_{-\alpha} \prod_{k\neq \ell}^r e_{\beta_k} +  C_{-\alpha,\beta_{\ell+1}} e_{\beta_{\ell+1}-\alpha} e_\alpha \prod_{k\neq \ell+1}^r e_{\beta_k} & \text{ if } \alpha = \epsilon_{i_{\ell+1}} - \epsilon_j \\
         \\
         C_{\alpha,\beta_\ell} e_{\alpha+\beta_\ell} e_{-\alpha} \prod_{k\neq \ell}^r e_{\beta_k} +  C_{-\alpha,\beta_{\ell-1}} e_{\beta_{\ell-1}-\alpha} e_\alpha \prod_{k\neq \ell-1}^r e_{\beta_k} & \hskip -0.3cm \text{ if } \alpha =\epsilon_i - \epsilon_{i_\ell}.
    \end{matrix}\right.  \label{eq:commu2rnc}
\end{align}
It is clear that all the monomials in $\eqref{eq:commu2rnc}$ are indecomposable. Hence, the explicit grading for the term $\{p,q\}$ is $(0,n_+,n_-),$ where $n_+ + n_- = \deg p + \deg q -1.$

 Now, we assume that there is more than one connected pair. Without loss of generality, assume that $(\alpha,\beta_\ell)$ and $(\alpha,\beta_l)$ are connected for fixed $l \neq \ell \in \{1,\ldots,r\}$.  Based on the previous discussion, $\alpha$ can be expressed as $\epsilon_{i_\ell} - \epsilon_{i_l}$ or $\epsilon_{i_{l+1}} - \epsilon_{i_\ell}$. Consequently,
\begin{align}
    \{p,q\} = \left\{\begin{matrix}
         C_{\alpha,\beta_\ell} e_{\alpha+\beta_{\ell-1}} e_{-\alpha} \prod_{k\neq \ell-1}^r e_{\beta_k}  +  C_{\alpha,\beta_l} e_{\alpha+\beta_l} e_{-\alpha} \prod_{k\neq l}^r e_{\beta_k} &  \hskip -0.9cm \text{ if } \alpha = \epsilon_{i_\ell} - \epsilon_{i_l} \\
      \hskip -0.5cm  + \  C_{-\alpha,\beta_{l-1}} e_{\beta_{l-1}-\alpha} e_\alpha \prod_{k\neq l-1}^r e_{\beta_k}  +  C_{-\alpha,\beta_{\ell}} e_{\beta_{\ell}-\alpha} e_\alpha \prod_{k\neq \ell}^r e_{\beta_k}   \\
        \\
       C_{\alpha,\beta_\ell} e_{\alpha+\beta_\ell} e_{-\alpha} \prod_{k\neq \ell}^r e_{\beta_k}  +  C_{\alpha,\beta_l} e_{\alpha+\beta_l} e_{-\alpha} \prod_{k\neq l}^r e_{\beta_k}& \hskip -0.5cm\text{ if }\alpha = \epsilon_{i_{l+1}} - \epsilon_{i_\ell} \\
        +  C_{-\alpha,\beta_{l+1}} e_{\beta_{l+1}-\alpha} e_\alpha \prod_{k\neq l+1}^r e_{\beta_k}+  C_{-\alpha,\beta_{\ell-1}} e_{\beta_{\ell-1}-\alpha} e_\alpha \prod_{k\neq \ell-1}^r e_{\beta_k}.
    \end{matrix}\right.   \label{eq:onep}
\end{align}
Note that each term in the right hand side of \eqref{eq:onep} is decomposable. Assume that $\ell < l$, and if $\alpha = \epsilon_{i_\ell} - \epsilon_{i_l}$, the components in the Poisson bracket $\eqref{eq:onep}$ can be further modified as follows
\begin{align*}
    \{p,q\} = & \, C_{\alpha,\beta_\ell} \left(e_{-\alpha} \prod_{j = \ell}^{l-1} e_{\beta_j} \right)  \left( e_{\alpha+\beta_{\ell-1}} \prod_{k\neq \ell-1,\ldots,l-1}^r e_{\beta_k} \right)  +  C_{\alpha,\beta_l} \left(e_{-\alpha} \prod_{j = \ell}^{l-1} e_{\beta_j} \right) \left(e_{\alpha+\beta_l}\prod_{k\neq \ell,\ldots,l }^r e_{\beta_k}\right)  \\
    & \,+  C_{-\alpha,\beta_{l-1}} \left(e_{\beta_{l-1}-\alpha} \prod_{j = \ell}^{l-2}e_{\beta_j}  \right) \left(e_\alpha \prod_{k\neq \ell,\ldots, l }^r e_{\beta_k} \right)+  C_{-\alpha,\beta_\ell} \left( e_{\beta_\ell-\alpha}  \prod_{j = l-1}^{\ell+1} e_j\right)\left( e_\alpha \prod_{k\neq \ell+1,\ldots,l-1}^r e_{\beta_k}\right) .
 \end{align*}
 On the other hand, if $\alpha = \epsilon_{i_{l+1}} - \epsilon_{i_\ell} $ with $l+1 < \ell$. The components in $\eqref{eq:onep}$ are decomposed into
 \begin{align*}
       \{p,q\} = & \, C_{\alpha,\beta_\ell}  \left( e_{-\alpha} \prod_{j = l+1}^{\ell-1} e_{\beta_j} \right) \left( e_{\alpha+\beta_\ell} \prod_{k\neq  l+1,\ldots,\ell}^r e_{\beta_k} \right) +  C_{\alpha,\beta_l} \left( e_{-\alpha} \prod_{j = l+1}^{\ell-1} e_{\beta_j} \right) \left(e_{\alpha+\beta_l} \prod_{k\neq l,\ldots,\ell-1}^r e_{\beta_k}\right) \\
       & \,+  C_{-\alpha,\beta_{l+1}} \left( e_{\beta_{l+1}-\alpha} \prod_{j = l+2}^{\ell-1} e_{\beta_j} \right) \left( e_\alpha \prod_{k\neq l+1,\ldots,\ell-1}^r e_{\beta_k}\right)+  C_{-\alpha,\beta_{\ell-1}}\left( e_{\beta_{\ell-1}-\alpha} \prod_{j = l+1}^{\ell-2}  e_{\beta_j} \right) \left(   e_\alpha \prod_{k\neq l+1,\ldots,\ell-2}^r e_{\beta_k}   \right).
 \end{align*}
  This determines all possible components in the non-trivial bracket $\{p,q\}$ from the grading in $\eqref{eq:commt}.$  

 In the context of the bracket $\{\textbf{q}_s,\textbf{q}_r\}$ where $3 \leq s, r \leq n+1$, the explicit polynomials within each non-trivial Poisson bracket can be divided into two distinct categories: one that involves Cartan elements and one that does not. This separation allows us to design an algorithmic approach to facilitate classifications. Specifically, given a bracket $\{p,q\}$ where $p \in \textbf{q}_s$ and $q \in \textbf{q}_r$, our initial step is to determine whether there exists a root $\beta_j \in J_r$ such that it satisfies the condition $\alpha = - \beta_j$. In cases where such a root does not exist, the subsequent task is to search for all connected roots. In a routine classification, we can deduce the possible decomposition of the components in the non-trivial brackets. In the Appendix \ref{app}, we report the classification of the case with $\{\textbf{q}_3,\textbf{q}_r\}$ for any $3 \leq r \leq n+1$.

\subsection{Cartan centralizer of $S( A_3)$}
\label{5.3}

 In Subsection \ref{5.2}, we propose a novel methodology that permits a more significant simplification of the grading terms present in the Poisson brackets of $\mathcal{Q}_{A_n}(n)$ by employing the properties of the root system. Within this section, we will focus on the application of these conceptual frameworks to a polynomial algebra of increased degree, which is intrinsically related to the rank-three Lie algebra $A_3$.  To present the calculation explicitly, we shall return to our previous indices notation to indicate polynomials and structure constants in the expansions of the non-trivial brackets, rather than relying on the roots in the generators.  In this case, $$\textbf{Q}_4 = \left\{h_1,h_2,h_3, p_{i,j}, p_{i,j,k},p_{i,j,k,l} : 1 \leq i \neq j \neq k \neq l \leq 4\right\} = \textbf{q}_1 \sqcup \textbf{q}_2 \sqcup \textbf{q}_3 \sqcup \textbf{q}_4  $$  with $\mathrm{Card}(\textbf{q}_1) =3, \mathrm{Card}(\textbf{q}_2) = \mathrm{Card}(\textbf{q}_4) = 6$ and $\mathrm{Card}(\textbf{q}_3) = 8.$ Here $\mathrm{Card}(\cdot)$ is the cardinality of the set. As each element in $\textbf{Q}_{n+1}$ can be realized as a $k$-cycle in the symmetric group $S_{n+1}$ with $k \leq n+1$, the cyclic symmetry of the indices in the monomial $p_{i_1,i_2,\ldots ,i_k }$ gives rise to the same element. That is,
\begin{align}
     p_{i_1,i_2,\ldots ,i_k } = p_{i_2,\ldots i_k,i_1 } = \ldots = p_{i_k,i_1,\ldots, i_{k-1} }. 
\end{align} 
 Therefore, for any $1 \leq i \neq j \neq k \neq l  \leq 4,$ the indices representative of the monomials in $\textbf{q}_3$ and $\textbf{q}_4$ are  \begin{align}
    p_{i,j,k} ,\text{ } p_{i,k,j} \quad  \text{ and } \quad  p_{i,j,k,l} , \text{ } p_{i,l,k,j} , \text{ } p_{i,l,j,k} , \text{ } p_{i,k,j,l} , \text{ } p_{i,j,l,k} , \text{ } p_{i,k,l,j} . \label{eq:qq34}
\end{align}

Using Proposition $\ref{5.1},$ we are able to determine the grading of each Poisson bracket as presented in Subsection \ref{4.3}. Here, we will present an example from $\{\textbf{q}_3,\textbf{q}_4\}.$ We obtain that $\mathcal{G} (\textbf{q}_3) = \{(0,1,2),(0,2,1)\}$ and $\mathcal{G}(\textbf{q}_4) = \{(0,1,3),(0,2,2),(0,3,1)\}.$ This implies that the grading in the non-trivial Poisson bracket $\{\textbf{q}_3,\textbf{q}_4\}$ will be different. For instance, for any $p \in \textbf{q}_3$ with $\mathcal{G}(p) = (0,1,2)$ and $ q\in\textbf{q}_4$ with  $\mathcal{G}(q) = (0,1,3)$, $\mathcal{G}(\{p,q\}) = (1,1,4) \tilde{+} (0,2,4) \tilde{+} (0,1,5).$ Then the allowed polynomials from each homogeneous generators are \begin{align*}
    (1,1,4) = (0,1,5) = \emptyset; \text{ } (0,2,4) = \{p_{1,4,3,2}p_{1,2},p_{1,4,3,2}p_{1,3},p_{1,4,3,2}p_{1,4},p_{1,4,3,2}p_{2,3},p_{1,4,3,2}p_{2,4},p_{1,4,3,2}p_{3,4}\}.
\end{align*} Therefore, we have:   \begin{align*}
    \{p,q\}=a_1p_{1,4,3,2}p_{1,2}+ a_2 p_{1,4,3,2}p_{1,3}+ a_3 p_{1,4,3,2}p_{1,4}+ a_4 p_{1,4,3,2}p_{2,3}+ a_5 p_{1,4,3,2}p_{2,4} + a_6 p_{1,4,3,2}p_{3,4},
\end{align*} for some coefficients $a_1,\ldots,a_6 \in \mathbb{R}$.  Now, take a $q' \in \textbf{q}_4$ with $\mathcal{G} (q') = (0,2,2)$. Then $\mathcal{G}(\{p,q'\}) = (1,2,3) \tilde{+} (0,2,4) \tilde{+} (0,3,3)$. In this case, there are $39$ permissible polynomials in $\mathcal{G}(\{p,q'\})$.  
Eventually, under the grading of the polynomial algebra,
the expected polynomial relations in reduced compact forms are given by \begin{align}
    \{\textbf{q}_2,\textbf{q}_2\} \sim & \,    \textbf{q}_3  \nonumber \\
    \{\textbf{q}_2,\textbf{q}_3\} \sim & \,   \textbf{q}_2^2 + \textbf{q}_1 \textbf{q}_3   \nonumber\\
     \{\textbf{q}_2,\textbf{q}_4\} \sim  & \,  \textbf{q}_1 \textbf{q}_2^2 +    \textbf{q}_1 \textbf{q}_4 + \textbf{q}_2 \textbf{q}_3 \label{eq:5.15}  \\
    \{\textbf{q}_3,\textbf{q}_4\} \sim  & \,     \textbf{q}_2 \textbf{q}_4 + \textbf{q}_1\textbf{q}_2 \textbf{q}_3 + \textbf{q}_2^3 \nonumber\\
  \{\textbf{q}_4,\textbf{q}_4\} \sim  & \,     \textbf{q}_1 \textbf{q}_2 \textbf{q}_4 +  \textbf{q}_1\textbf{q}_3^2  + \textbf{q}_2^2 \textbf{q}_3 + \textbf{q}_3\textbf{q}_4  . \nonumber
\end{align}  Building upon the initial presentation, we present in Table \ref{tab:my_label00} a comprehensive analysis that enables an in-depth comparison of the allowable polynomials found in the non-trivial Poisson brackets.
    \begin{table}[h]
     \centering
     \begin{tabular}{|c|c|c|c|}
     \hline
     \textsf{ Poisson brackets}    & $\begin{matrix}
            \textsf{   No. of polynomials without} \\
         \textsf{  using the grading method }
      \end{matrix}$  & $\begin{matrix}
          \textsf{Maximum No. of polynomials after applying} \\
          \textsf{ the grading method} 
      \end{matrix}$ & $\Delta$  \\
          \hline 
  $\{\textbf{q}_2,\textbf{q}_2\}$       & $ 31 $ & $8 $ & $23$  \\
   \hline 
  $\{\textbf{q}_2,\textbf{q}_3\}$       & $102 $ & $39 $ & $63$  \\
          \hline 
 $\{\textbf{q}_2,\textbf{q}_4\}$       & $261$ & $129 $ & $133 $ \\
          \hline  
    $\{\textbf{q}_3,\textbf{q}_4\}$       & $478 $ & $236 $ & $242 $  \\
           \hline 
  $\{\textbf{q}_4,\textbf{q}_4\}$       & $ 990 $ & $492 $ & $498$  \\
   \hline 
     \end{tabular}
     
     \quad
     
     \caption{Comparison of the number of polynomials }
     \label{tab:my_label00}
 \end{table} 

Here $\Delta$ means the difference of the allowed polynomials in the Poisson brackets through the two distinct approaches. To further reduce the number of the polynomial components after the application of the grading method, we will implement the tools provided in Section $\ref{5.2}$. The polynomials in the non-trivial Poisson brackets $\{\textbf{q}_s,\textbf{q}_t\}$ can be separated into its Cartan and non-Cartan components for $2 \leq s, t \leq 4$. Initially, we examine the case when $\{\textbf{q}_2,\textbf{q}_t\}$ includes the Cartan elements. In other words, we will consider the permissible polynomials from the homogeneous grading $(1,n_+,n_-).$  We start with the Poisson brackets in $\{\textbf{q}_2,\textbf{q}_3\}$.  Referring to the notation as set out in Section \ref{5.2}, consider $p = e_{\alpha}e_{-\alpha} \in \textbf{q}_2$, where we designate $\alpha = \epsilon_i - \epsilon_j$ in such a manner that $e_\alpha =   e_{ij}$, and $e_{-\alpha} = e_{ji}$. Our objective is to identify a generator $q \in \textbf{q}_3$ such that there exists a unique root $\beta$ within $R(q)$, which fulfills the condition $\alpha + \beta = 0$. We now proceed to find out the expansion in the Poisson brackets $\{p_{i,j},p_{i,j,k}\}$. In this context, the roots in $p_{i,j,k}$ are $J_3 = \{\beta_{ij},\beta_{jk},\beta_{ki}\}$, and $\beta_{ij} = -\alpha$.  \begin{remark}
In the following, we consider the root representation of the structure tensor $C_{\beta_\ell,\beta_{\ell+1}}$ in an indexed form. For example, if $\beta_\ell = \beta_{kl}: = \epsilon_k -\epsilon_l \in  \Phi_{A_3}$, we select $\beta_{\ell+1} = \beta_{lm} := \epsilon_l - \epsilon_m$. Consequently, the tensor $C_{\beta_\ell,\beta_{\ell+1}}$ is reformulated as $C_{kl,lm}$. 
\end{remark} Utilizing formula \eqref{eq:comp} in Proposition $\ref{5.5}$, we derive that \begin{align*}
      \{p_{i,j},p_{i,j,k}\} = & \, \left(\sum_{\ell=1}^3C_{ij,ji}^\ell  \,  h_\ell\right)   \, p_{i,j,k}  +  \left(C_{ij,jk} \,   p_{k,i} + C_{ki,ij} p_{k,j} \right)p_{i,j} .
      \end{align*} Here $C_{ij,jk}$ are the structure constants of $\{e_{ij},e_{jk}\}$, and same for $C_{ki,ij}$. Note that as $\{e_{ij},e_{ji}\} \in \mathfrak{h}^*$, the coefficients $C_{ij,ji}^\ell$ run through all the generators in $\mathfrak{h}^*$. Similarly, we deduce \begin{align*}
    \{p_{i,j},p_{i,k,j}\} = & \,\left(\sum_{\ell=1}^3C_{ij,ji}^\ell  \,  h_\ell\right)\, p_{i,k,j} + \left(C_{ji,ik} \, p_{j,k}   + C_{ji,kj} p_{i,k}\right) p_{i,j}.
\end{align*} We now turn to study the non-trivial relations in the Poisson brackets of $\{\textbf{q}_2,\textbf{q}_4\}$.  Using \eqref{eq:comp} again, it can be shown that the nontrivial brackets comprise higher-order monomials alongside either Cartan elements or quadratic monomials. Thus,  
\begin{align*}
    \{p_{i,j},p_{i,j,k,l}\} = & \,\left(\sum_{\ell=1}^3C_{ij,ji}^\ell  \,  h_\ell \right)\, p_{ i,j,k,l}+ \left( C_{li,ij} \,  p_{l,j, k}  + C_{ij,jk} \,   p_{ i,k  , l}\right) p_{i,j}   \\
    \{p_{i,j},p_{j,i,k,l}\} = & \,\left( \sum_{\ell=1}^3C_{ij,ji}^\ell  \,  h_\ell\right)\, p_{j,i,k,l} + \left(C_{lj,ji} \, p_{i,j} p_{i,k,l} + C_{ji,ik} \,  p_{j,k,l}\right)p_{i,j} \\
    \{p_{i,j},p_{i,j,l,k}\} = & \, \left(\sum_{\ell=1}^3C_{ij,ji}^\ell  \,  h_\ell\right)\, p_{i,j,l, k} + \left(C_{ij,ki} \, p_{j,i}p_{k ,j,l}+ C_{ij,jl} \,  p_{j, i,l ,k}\right)p_{i,j}   \\
    \{p_{i,j},p_{j,i,l,k}\} = & \,\left(\sum_{\ell=1}^3C_{ij,ji}^\ell  \,  h_\ell\right)\, p_{j, i,l,k} + \left(C_{ji,il} \, p_{i,j}p_{j,l,k} + C_{ji,kj} \,  p_{i,l,k}\right)p_{i,j}.
\end{align*}   On the other hand, consider the Poisson brackets without the Cartan elements involved.    As discussed in Subsection $\ref{5.2},$ we first assume that there exists only one root from the monomials of $\textbf{q}_3$ that is connected to the root $\alpha_{ij}$ in $p_{i,j}$. From $\eqref{eq:commu2rnc},$ we observe that
\begin{align*}
 \{p_{i,j},p_{j,k,l}\} = & \, C_{ij,jk} \, p_{j,i,k,l} + C_{lj,ji} \, p_{j,k,l,i} \\
       \{p_{i,j},p_{j,l,k}\} = & \, C_{ij,jl} \,p_{i,j,l,k}+ C_{kj,ji}\, p_{i,j,l,k}.
\end{align*}  Otherwise, from $\eqref{eq:onep}$, two connected roots will lead to the decomposition in each of the components as follows: \begin{align*}
      \{p_{i,j},p_{i,l,j,k}\} = & \, \left(C_{ij,jk} \,  p_{i,k} + C_{ij,ki} \,p_{k,j}\right) p_{j,i,l} + \left(C_{ji,il} \, p_{ j,l}  + C_{ji,lj} \, p_{l,i }\right) p_{j,k,i }  \\
     \{p_{i,j},p_{i,k,j,l}\} = & \, \left(  C_{ij,jl} \, p_{i,l}  + C_{ij,li} \,p_{l,j} \right) p_{i,k,j} + \left( C_{ji,ik} \, p_{j,k }  + C_{ji,kj} \, p_{k,i } \right) p_{j,l,i}.
\end{align*}

Next, we consider the non-trivial brackets $\{\textbf{q}_3,\textbf{q}_r\}.$ Here $r = 3$ and $4$.  In analogy to the previous discussion, we split the case into the Cartan involved part and non-Cartan involved part. We first focus on the Cartan-implied case, starting with $r = 3$. Without loss of generality, assume that $p = e_{\alpha_1} e_{\alpha_2} e_{\alpha_3} $ and $q = e_{\beta_1} e_{\beta_2} e_{\beta_3}  $.  Based on the analysis provided in the Appendix \ref{app}, the classification is based on the number of roots in $R(p)$, which permits a singular root in $R(q)$ and such that their sum is zero. In this case, we have only two possibilities:

(a) There exists a unique $\alpha_u$ in $\{\alpha_1,\alpha_2,\alpha_3: \alpha_1 + \alpha_2 + \alpha_3 = 0\}$ such that $\alpha_u + \beta_v = 0$ for a fixed $u$ and the rest of the roots are connected. Then using \eqref{eq:35} and \eqref{eq:36}, we obtain\begin{align*}
     \{p_{i,j,k},p_{j,l,k}\} = & \, \left(\sum_{\ell=1}^3C_{jk,kj}^\ell \, h_\ell \right)\, p_{i,j,l,k} + C_{ij,jl} \, p_{k,i,l} p_{k,j}     + C_{ki,lk} \,p_{l,i,j} p_{k,j}\\
    \{p_{i,k,j},p_{j,k,l}\} = & \left(\sum_{\ell=1}^3C_{jk,kj}^\ell \, h_\ell \right) \, p_{j,i,k,l} +C_{ik,kl} \, p_{k,j} p_{j,i,l}   + C_{ji,lj} \, p_{i,k,l} p_{j,k} .
\end{align*}

(b) If $\alpha_u + \beta_v = 0$ for all $u,v = 1,2,3,$ then from Proposition \ref{5.8}, we have  \begin{align*}
       \{p_{i,j,k},p_{i,k,j}\} = & \, \left(\sum_{\ell=1}^3C_{ij,ji}^\ell  \,  h_\ell\right) \, p_{j,k} p_{i,k} + \left(\sum_{\ell=1}^3 C_{jk,kj}^\ell h_\ell \right)\, p_{i,j} p_{i,k} + \left( \sum_{\ell=1}^3 C_{ki,ik}^\ell h_\ell\right) \, p_{i,j} p_{k,j}.
\end{align*}

On the other hand, the Cartan-free case is simply given by
\begin{align*}
   \{p_{i,j,k},p_{j,k,l}\} = & \, C_{ij,jk} \,  p_{i,k} p_{k,l,j} + C_{jk,kl} \, p_{j,l} p_{k,i,j} + C_{jk,lj} \, p_{l,k} p_{k,i,j} + C_{jk,ki} \, p_{i,j} p_{k,l,j} \\
     \{p_{i,k,j},p_{j,l,k}\} = & \, C_{ik,kj} \, p_{i,j} p_{l,k,j}   + C_{kj,jl} \, p_{k,l} p_{i,k,j} + C_{ji,kj} \, C_{ki,ik} p_{k,j,l}.
\end{align*}

We now consider the case with $r = 4.$ Starting with the Cartan-involved case. For any $p \in \textbf{q}_3$ and $q \in \textbf{q}_4,$ we can always find roots $\alpha_u \neq \alpha_w$ in $\{\alpha_1,\alpha_2,\alpha_3,\alpha_4 : \alpha_1 + \alpha_2 + \alpha_3 + \alpha_4 = 0\}$ (roots in $p$) such that $\alpha_u + \beta_v = 0$ and $\alpha_w + \beta_z =0$. Here $\beta_v \neq \beta_z \in J_4$ with $1 \leq u,v,w,z \leq 4.$ A direct computation shows that 
\begin{align*}
       \{p_{j,k,l},p_{i,j,l,k}\}   = & \,\left( \left(\sum_{\ell=1}^3C_{kl,lk}^\ell h_\ell \right) \, p_{l,j}     + \left(\sum_{\ell =1}^3 C_{lj,jl}^\ell \, h_\ell \right) \, p_{k,i} \right) p_{k,i,j} + \left(C_{jk,ki} \, p_{j,i}  +  C_{jk,ij} \, p_{i,k}\right) p_{k,l} p_{j,l}   \\
     \{p_{j,k,l},p_{i,l,k,j}\}   = & \,\left(\left(\sum_{\ell=1}^3 C_{jk,kj}^\ell h_\ell \right)  \, p_{k,l}  + \left(\sum_{\ell =1}^3  C_{kl,lk}^\ell \, h_\ell \right) \,p_{j,k} \right) p_{l,j,i} + \left( C_{lj,ji} \, p_{i,l}   + C_{il,lj} \, p_{i,j}\right) p_{j,k} p_{k,l}\\
        \,\{p_{j,k,l},p_{l,i,k,j}\}   = & \, \left( \left(\sum_{\ell=1}^3 C_{jk,kj}^\ell h_\ell \right)  \, p_{l,j}   + \left(\sum_{\ell =1}^3  C_{lj,jl}^\ell \, h_\ell \right) \, p_{j,k}\right) p_{l,i,k} +\left(C_{kl,li} \, p_{k,i}   + C_{kl,ik} \, p_{i,l} \right)p_{j,k}p_{l,j}  \\
    \{p_{j,l,k},p_{i,j,k,l}\}  = & \,  \left( \left(\sum_{\ell=1}^3 C_{lk,kl}^\ell h_\ell \right) \, p_{k,j}  + \left(\sum_{\ell =1}^3  C_{kj,jk}^\ell \, h_\ell \right)\, p_{k,l} \right)  p_{i,j,l} +\left(C_{jl,li} \, p_{j,i}   +C_{jl,ij} \, p_{i,l}  \right)p_{k,l}p_{k,j}   \\
     \{p_{j,l,k},p_{j,i,k,l}\} = & \, \left(\left(\sum_{\ell=1}^3 C_{jl,lj}^\ell h_\ell \right) \, p_{k,l} + \left(\sum_{\ell =1}^3  C_{lk,kl}^\ell \, h_\ell \right) \, p_{l,j}\right)  p_{j,i,k} + \left( C_{kj,ji} \, p_{k,i} + C_{kj,ik} \, p_{i,j}\right) p_{l,k} p_{l,j}\\
       \{p_{j,l,k},p_{i,l,j,k}\}  = & \, \left(\left(\sum_{\ell=1}^3 C_{jl,lj}^\ell h_\ell \right)  \, p_{k,j}+ \left(\sum_{\ell =1}^3  C_{kj,jk}^\ell \, h_\ell \right) \,  p_{j,l} \right) p_{k,i,l} + \left(C_{lk,ki} \, p_{l,i} +C_{lk,il} \, p_{i,k} \right) p_{j,l}  p_{k,j}   \\
    \{p_{j,l,k},p_{i,k,l,j}\}  = & \, \left( \left(\sum_{\ell=1}^3C_{jl,lj}^\ell h_\ell \right)  \, p_{k,l}   + \left(\sum_{\ell =1}^3  C_{lk,kl}^\ell \, h_\ell \right)\, p_{j,l}\right) p_{j,i,k} + \left(C_{kj,ji} \, p_{k,i}  + C_{kj,ik} \,p_{i,j}\right) p_{k,l} p_{j,l} .
\end{align*} Moreover, suppose that there are no Cartan elements in the brackets $\{\textbf{q}_3,\textbf{q}_4\}$. We observe that for any $p \in \textbf{q}_3,$ each root in the set $\{\alpha_1,\alpha_2,\alpha_3: \alpha_1 + \alpha_2 + \alpha_3 = 0\}$ is connected to two distinct roots in $J_4.$ Taking into account the case (b2) in the Appendix \ref{app}, the explicit components in the expansions within the brackets $\{\textbf{q}_3,\textbf{q}_4\}$ are delineated by   \begin{align*}
       \left\{p_{j,k,l},p_{i,j,k,l}\right\} =  & \,\left( C_{ij,jk} \,  p_{k,l,i} + C_{kl,li} \, p_{k,i,j} \right)p_{j,k,l} + \left( C_{lj,jk} \, p_{l,i}   + C_{lj,kl} \,  p_{k,j} \right) p_{i,j,k,l}\\
    \{p_{j,k,l},p_{j,i,k,l}\} = & \,\left(C_{kl,ik} \,   p_{i,l,j}  + C_{lj,ji}    \, p_{k,l,i}\right) p_{k,l,j} +\left( C_{jk,kl} \, p_{j,l}   + C_{jk,lj} \, p_{l,k}\right) p_{i,k,l,j}    \\
     \{p_{j,k,l},p_{i,l,j,k}\}     = & \, \left(C_{jk,ki} \, p_{i,l,j}     + C_{lj,il } \, p_{i,j,k} \right) p_{j,k,l } + \left(C_{ kl,lj} \, p_{k,j}  + C_{kl,jk } \, p_{j,l} \right)p_{i,l,j,k}     \\
      \{p_{j,l,k},p_{i,j,l,k}\}  = & \,  \left( C_{jl,ij} \, p_{i,l,k}  +C_{lk,ki} \, p_{l,i,j} \right) p_{k,j,l}  +\left(C_{kj,jl} \, p_{l,k}  +C_{kj,lk} \, p_{l,j}\right) p_{i,j,l,k}   \\
       \{p_{j,l,k},p_{i,l,k,j}\}  = & \, \left(C_{lk,il} \, p_{k,j,i} + C_{kj,ji} \, p_{i,l,k}\right) p_{j,k,l}+ \left(C_{jl,lk} \, p_{j,k } + C_{jl,kj} \, p_{k,l}\right) p_{i,k,j,l} .
\end{align*}

Finally, we consider the components in the non-trivial brackets $\{\textbf{q}_4,\textbf{q}_4\}$. From the generators in $\eqref{eq:qq34}$,  we observe that the brackets of elements with four indices always contain the Cartan elements.  We first consider $p_{i,j,k,l}$ with the rest of the cyclic generators. Using Proposition \ref{5.8}, we first derive that 
\begin{align*}
  \left\{p_{i,j,k,l},p_{l,k,j,i}\right\} = & \left( \sum_{\ell =1}^3\, C_{ij,ji}^\ell \,h_\ell \right) \, p_{j,k} p_{k,l} p_{l,i}+  \left( \sum_{\ell =1}^3C_{jk,kj}^\ell \,h_\ell \right) \, \, p_{j,i} p_{k,l} p_{l,i}+  \left( \sum_{\ell =1}^3\,  C_{kl,lk}^\ell \,h_\ell \right)  \, p_{j,k} p_{i,j} p_{l,i} \\
  & \, + \left( \sum_{\ell =1}^3  C_{li,il}^\ell\, h_\ell \right) \, \, p_{j,k} p_{k,l} p_{j,i} .
  \end{align*} The rest of the Poisson brackets contain only one Cartan components. That is, 
  \begin{align*}
     \{p_{i,j,k,l},p_{i,j,l,k}\} = & \, \left(C_{jk,ki} \,  p_{i,j}   + C_{li,ij} \, p_{j,l}\right) p_{k,l}p_{i,j,k} + \left( C_{li,jl} \, p_{j,i} +C_{ij,jl} \,  p_{i,l}\right) p_{k,l}p_{j,k,i}\\
     & \, +C_{ij,ki} \, p_{k,j}p_{k,l} p_{l,i,j}  + C_{jk,ij} \,  p_{k,l,i}p_{k,i,j,l}+ \left( \sum_{\ell=1}^3  C_{kl,lk}^\ell\, h_\ell \right) \, p_{i,j,k} p_{l,i,j}  \\
      \{p_{i,j,k,l},p_{i,l,j,k}\} = & \, \left(C_{ij,jk}\,   p_{i,k} + C_{ij,ki} \, p_{k,j}\right) p_{i,l}p_{k,l,j} + C_{jk,lj} \, p_{k,l} p_{l,i } p_{j,k,i} + C_{jk,ki} \, p_{i,j} p_{j,l}p_{l,j,k}\\
      & \, +\left( C_{kl,lj} \, p_{k,j} + C_{kl,jk}\, p_{j,l} \right)p_{l,i} p_{i,j,k}  + \left( \sum_{\ell=1}^3  C_{li,il}^\ell\, h_\ell \right) \, p_{i,j,k} p_{k,l,j}  \\
   \left\{p_{i,j,k,l},p_{i,k,j,l}\right\} = & \, \left(C_{kl,ik} \, p_{i,l}   + C_{ij,li} \, p_{l,j} \right)+p_{j,k}p_{j,l,i} C_{ij,jl} \, p_{i,l} p_{j,k} p_{i,k,j} + C_{kl,li} \, p_{k,i } p_{k,j}p_{k,l,i} \\
   & \, +\left( C_{li,ik} \, p_{l,k } + C_{li,jl} \, p_{i,j}\right)p_{j,k} p_{k,l,i}+\left( \sum_{\ell=1}^3  C_{jk,kj}^\ell\, h_\ell \right)\, p_{j,l,i} p_{k,l,i}   \\
    \left\{p_{i,j,k,l},p_{i,k,l,j}\right\} = &  \, \left(   C_{jk,kl}  \, p_{j,l}  + C_{jk,lj} \, p_{l,k} \right) p_{j,i}p_{k,l,i} + C_{kl,ik} \, p_{i,l} p_{i,j}p_{j,k,l} + C_{li,ik} \, p_{l,k}p_{i,j}p_{k,l,j} \\
    & \,  + \left(C_{kl,lj} \, p_{k,j}  + C_{li,kl} \, p_{k,i }p_{i,j}\right)p_{k,l,i} +\,\left( \sum_{\ell=1}^3C_{li,il}^\ell  \, h_\ell \right) \, p_{j,k,l } p_{k,l,i}  \\
      \{p_{i,j,k,l},p_{i,l,j,k}\} = & \,  \left( C_{ij,jk}\, p_{i,k}   + C_{ij,ki} \,  p_{j,k} p_{l,i} \right) p_{l,j,k}+ \left(C_{kl,lj} \, p_{k,j}  + C_{kl,jk}\, p_{j,l} \right)p_{i,l}p_{i,j,k} \\
      & \,+ \left(C_{jk,ki} \, p_{ i,j}  p_{l,j,k} + C_{jk,lj} \, p_{l,k} p_{j,k,i}\right) p_{i,l} + \left( \sum_{\ell=1}^3C_{li,il}^\ell\, h_\ell \right) p_{i,j,k} p_{k,l,j} .
      \end{align*}
Moreover, the derivation of the remaining nontrivial generators is achieved based on $\textbf{q}_4$ as detailed in $\eqref{eq:qq34}$. By employing an analogous methodology to that previously described, we subsequently derive the comprehensive expansions for the rest of the Poisson brackets, outlined below:
      \begin{align*}
    \{p_{i,l,j,k},p_{i,j,l,k}\} = & \, \left(C_{il,lk}\, p_{i,k}   + C_{il,ki} \, p_{k,l}\right) p_{l,j}p_{j,k,i} + \left(C_{jk,ij} \, p_{i,k}  + C_{ki,ij}\, p_{k,j}\right) p_{j,l} p_{l,k,i} \\
    & \,+ \left(C_{jk,ki} \, p_{i,j}   p_{i,l,k} + C_{ki,lk}\,   p_{i,l}p_{i,j,k} \right) p_{l,j}  +\left(\sum_{\ell=1}^3 C_{lj,jl}^\ell \, h_\ell\right)  \, p_{j,k,i } p_{l,k,i} \\
         \{p_{i,l,j,k},p_{i,k,l,j}\} = & \,   \left(C_{il,lj} \, p_{i,j}   +C_{il,ji} \, p_{l,j} \right) p_{k,i} p_{l,j,k}  +\left(C_{lj,ji} \, p_{l,i}  p_{k,l,j}  +C_{lj,kl} \, p_{k,j} p_{j,i,l}\right) p_{k,i} \\
        & \, +\left(C_{jk,kl} \, p_{j,l}    +C_{jk,lj} \, p_{k,l} \right)p_{k,i}  p_{j,i,l}  +\left(\sum_{\ell=1}^3C_{ki,ik}^\ell \, h_\ell\right)  \, p_{i,l,j} p_{j,k,l} \\
     \{p_{i,l,j,k},p_{j,i,k,l}\} = & \,  \left( C_{il,lj} \, p_{i,j}   +C_{il,ji} \, p_{j,l}\right)  p_{j,k} p_{l,j,k}  +\left(C_{lj,ji} \, p_{i,l}  +C_{lj,kl} \, p_{j,k}\right) p_{k,i} p_{k,l,j}    \\
     & \, +\left(C_{jk,kl} \, p_{j,l}  +C_{jk,lj} \, p_{l,k} \right)  p_{k,i} p_{i,l,j} +\left(\sum_{\ell=1}^3C_{ki,ik}^\ell \, h_\ell\right)  \, p_{l,j,k} p_{j,i,l}        \\
   \{p_{i,l,k,j},p_{i,k,j,l}\} = & \,  \left( C_{lk,kj} \, p_{l,j}   +C_{lk,jl} \, p_{j,k}\right) p_{i,l} p_{k,j,i}   +\left(C_{kj,jl} \, p_{k,l}  p_{i,k,j}  +C_{kj,ik} \, p_{i,j}p_{k,j,l}\right) p_{l,i}   \\
   & \, +\left(C_{ji,ik} \, p_{j,k}  +C_{ji,kj} \, p_{k,i} \right)  p_{i,l} p_{l,k,j}    +   \left(\sum_{\ell=1}^3 C_{il,li}^\ell \, h_\ell\right)  \, p_{j,i,k} p_{k,j,l}  \\
      \{p_{i,l,k,j},p_{j,i,k,l}\} = & \,  \left( C_{il,lj} \, p_{i,j}  +C_{il,ji} \, p_{j,l}\right) p_{k,l} p_{j,i,k}   +\left(C_{kj,ji} \, p_{k,i}  +C_{kj,ik} \, p_{i,j}\right)  p_{k,l} p_{j,i,l}   \\
      & \,+ \left( C_{ji,ik} \, p_{j,k}   p_{j,i,l}  +C_{ji,lj} \, p_{l,i}  p_{j,i,k} \right)p_{k,l}    + \left(\sum_{\ell=1}^3C_{lk,kl}^\ell \, h_\ell\right) \, p_{j,i,k} p_{j,i,l}  . 
\end{align*}

From the previously outlined construction, it can be inferred that $\mathcal{Q}_{A_3}(3)$ defines a cubic Poisson algebra.

\section{conclusion}

  In this paper, the procedure proposed in \cite{MR4660510,campoamor2024superintegrable} to determine the commutant in the enveloping algebra associated to a subalgebra chain of reductive Lie algebras has been reexamined, by considering an additional simplification based on gradings determined by the embedding.  Using such a grading of monomials in the centralizer of symmetric algebras, it is possible to considerably reduce the number of admissible polynomials, also leading to a compact presentation of the polynomial algebra and their commutators. Explicitly, three reduction chains related to the simple rank-two complex Lie algebra $\mathfrak{sl}(3,\mathbb{C})$ have been analyzed: the Elliott chain $\mathfrak{so}(3) \subset \mathfrak{su}(3)$, the reduction $\mathfrak{o}(3) \subset \mathfrak{sl}(3,\mathbb{C})$ and $\mathfrak{h} \subset \mathfrak{sl}(3,\mathbb{C})$, previously considered in \cite{MR4660510}, \cite{MR4710584} and \cite{campoamor2023algebraic}, respectively. A grading of the indecomposable polynomials has been presented, as well as a procedure to obtain the Poisson brackets under which the algebraic structure closes in the Poisson-Lie setting. In this context, a description of the main grading properties has been given and it has been illustrated how to use the root systems associated with a semisimple Lie algebra to completely characterize the polynomial algebra $\mathcal{Q}_{A_n}(n)$ that comes from the centralizer in $S(A_{n})$ with respect to the Cartan subalgebra $\mathfrak{h}$. This fact has been shown to be relevant in the theory of superintegrable systems in classical and quantum mechanics. In particular, the last example treated connects with the generic models on the $n$ sphere through the Marsden-Weinstein realizations \cite{campoamor2023algebraic},\cite{campoamor2023polynomial}. On the other hand, the Elliott chain, besides its relevance in nuclear physics, presents some special features, as the embedding is singular, a fact that requires some modifications of the method, as certain properties of Casimir invariants are broken down, making computations and the explicit analysis of the closure of the polynomial algebra harder. In this context, the grading method has been shown to be an effective ansatz to study the reduction chains in the case of singular embeddings. The latter type is particularly relevant in physical applications and labeling problems, where a direct approach through root systems is generally not possible. It should be observed that the proposed construction is completely independent of realizations of Lie algebras as vector fields, hence providing a generic universal character that may allow for a more detailed insight into the particular structure of centralizers in enveloping algebras as well as the associated missing label problems.

In further work, more physical models related to the subalgebra chains can be considered to involve the construction of polynomial algebras. For instance, the interacting boson-fermion model (IBFM) or the supermultiplet model. On the one hand, the IBFM involves Lie algebra chains such as \( \mathfrak{u}(6) \supset \mathfrak{u}(5) \), \(  \mathfrak{su}(3) \supset  \mathfrak{su}(2) \), and related hierarchical structures underpin the IBFM \cite{MR1081533,MR1152802}. Each subalgebra within the chain represents specific symmetries or conserved quantities associated with different physical behaviors of the nucleus, such as vibrational, rotational, or transitional dynamics. For example, the \( \mathfrak{u}(6) \supset  \mathfrak{su}(3) \) chain models rotational symmetries associated with deformed nuclei, whereas the \( \mathfrak{so}(6)\supset \mathfrak{u}(5)  \) chain models vibrational modes pertinent to spherical or near-spherical nuclei. These Lie algebraic chains provide a structured pathway from higher-symmetry groups, representing general nuclear behavior, to more specialized subgroups that describe specific symmetries and conserved quantities relevant to particular nuclear states. On the other hand, within the framework of the supermultiplet model, with the chain \( \mathfrak{su}(4) \supset \mathfrak{su}(2) \times \mathfrak{su}(2) \) \cite{Wig,Draayer,Brunet,Hecht}, further enriches this scheme by providing a unified treatment of both protons and neutrons in the nuclear shell model, encapsulating both their spin and isospin degrees of freedom within a single algebraic structure. The construction of polynomial algebras through the grading method in these directions is currently ongoing.

\section*{Acknowledgement}
IM was supported by the Australian Research Council Future Fellowship FT180100099. YZZ was supported by the Australian Research Council Discovery Project DP190101529. RCS acknowledges financial support by the research grant PID2019-106802GB-I00/AEI/10.13039/501100011033 (AEI/ FEDER, UE). The research of DL is partially funded by MUR - Dipartimento di Eccellenza 2023-2027, codice CUP \textsf{G43C22004580005} - codice progetto \textsf{DECC23\_012\_DIP} and partially supported by INFN-CSN4 (Commissione Scientifica Nazionale 4 - Fisica Teorica), MMNLP project. DL is a member of GNFM, INdAM.

\appendix
\section{Explicit polynomials in $\{\textbf{q}_3,\textbf{q}_r\}$}

\label{app}

In this Appendix \ref{app}, we derive the explicit generators in the non-trivial brackets $\{\textbf{q}_3,\textbf{q}_r\}$ for all $ 3 \leq  r \leq n+1.$  In the following, we will denote $p = e_{\alpha_1}e_{\alpha_2} e_{\alpha_3} \in \textbf{q}_3$ with $\alpha_1 + \alpha_2 + \alpha_3 = 0 $, and $q = e_{\beta_1} \cdots e_{\beta_r}  \in \textbf{q}_r$ with $R(q) = \beta_1 + \ldots + \beta_r = 0.$ Let $J_r : = \{\beta_1,\ldots,\beta_r\}$ such that $\beta_1 + \ldots + \beta_r = 0.$  Moreover, assume that $\alpha_1 = \epsilon_{j_1} - \epsilon_{j_2}$, $ \alpha_2 = \epsilon_{j_2} - \epsilon_{j_3}$ and $\alpha_3 = \epsilon_{j_3}  - \epsilon_{j_1}.$ Here $j_1,j_2,j_3 \in \{1,\ldots,n+1\}.$ Using Proposition $\ref{5.1}$, we deduce that
\begin{align*}
     \mathcal{G}\left(\{p,q\}\right) = \left\{\begin{matrix}
        (1,n_+  ,n_-  + 1) \tilde{+}   (0,n_+ +1,n_- + 1)\tilde{+}(0,n_+  ,n_-+ 2 )  & \text{ if } \mathcal{G}(p) = (0,1,2) \\
         \\
        (1,n_+   + 1,n_- ) \tilde{+}  (0,n_+  + 1,n_-   +1 )  \tilde{+}(0,n_+ +2,n_- ) & \text{ if } \mathcal{G}(p) = (0,2,1)
     \end{matrix}\right.
\end{align*} with $n_+ + n_- = r.$
A direct computation shows that
\begin{align}
    \left\{e_{\alpha_1}e_{\alpha_2} e_{\alpha_3}, \prod_{k \geq 1}^r e_{\beta_k} \right\}  = & \,\underbrace{\sum_{w=1}^r N_{\alpha_1,\beta_w} \prod_{k \neq w}^re_{\beta_k}e_{\alpha_2} e_{\alpha_3}}_{\text{: = $\mathcal{N}_1$}}  +\underbrace{\sum_{w=1}^r N_{\alpha_2,\beta_w}\prod_{k \neq w}^r e_{\beta_k}e_{\alpha_1} e_{\alpha_3}}_{\text{: = $\mathcal{N}_2$}}   +   \underbrace{\sum_{w=1}^r N_{\alpha_3,\beta_w}\prod_{k \neq w}^r e_{\beta_k} e_{\alpha_1} e_{\alpha_2}}_{\text{: = $\mathcal{N}_3$}}  . \label{eq:commur3}
\end{align}
Here $$ N_{\alpha_i,\beta_k} = \left\{e_{\alpha_i} , e_{\beta_k} \right\} = \left\{\begin{matrix}
    C_{\alpha_i,\beta_k} e_{\alpha_i + \beta_k} & \hskip -0.6cm \text{ if } (\alpha_i,\beta_k) \text{ is connected} \\
   0 & \text{ if } (\alpha_i,\beta_k) \text{ is not connected}
\end{matrix} \right.  $$ for all $ 1 \leq i \leq 3.$


We will once again examine the Cartan-free and Cartan parts, beginning with the Cartan-involved part. According to Proposition \ref{root}, for each $\alpha_i$, there is a unique $\beta_j \in J_r$ such that $\alpha_i = - \beta_j$. It is important to note that, if $3 < r \leq n+1$, not all $\alpha_1, \alpha_2$ and $\alpha_3$ correspond to non-adjacent roots in $J_r$. For example, if we assume that $\alpha_1 = -\beta_j$ and $\alpha_2 = -\beta_k$, then $\beta_j + \beta_k = \alpha_3 = - \beta_l$, which implies that $q$ is decomposable. Clearly, the Poisson bracket of these monomials results in Cartan-free monomials. To classify the Cartan-involved case, we consider the following cases:

(a) If $\alpha_i = -\beta_j$ for a unique $1 \leq i \leq 3$ and a fixed $j \in \{1,\ldots,r\}$;

(b) If $\alpha_i = -\beta_j$ and $\alpha_s = - \beta_l$ with $ 1 \leq  i \neq s \leq 3 $ and fixed $l \neq j \in \{1,\ldots,r\}$.

Consider case (a). Without loss of generality, assume further that $\alpha_1 = - \beta_j$ for a fixed $j$. The similar analysis holds for letting $\alpha_2$ or $\alpha_3$ equals to $-\beta_j$. From the argument in Proposition $\ref{5.5}$, we deduce that
\begin{align}
    \mathcal{N}_1 =  \left(\sum_{\ell=1}^n C_{-\beta_j,\beta_j}^\ell h_\ell \right)  \, e_{\alpha_2}e_{\alpha_3}\prod_{k \neq j}^r e_{\beta_k}. \label{eq:35}
\end{align} Here $\sum_{\ell=1}^n C_{-\beta_j,\beta_j}^\ell h_\ell \in \mathfrak{h}^*.$ Note that the monomial $ e_{\alpha_2}e_{\alpha_3}\prod_{k \neq j}^re_{\beta_k}$ is indecomposable as $R(e_{\alpha_2}e_{\alpha_3}) = -\alpha_1 = \beta_j.$ Moreover, as $\epsilon_{j_3}$ is an undetermined term, the explicit value of $\mathcal{N}_2$ and $\mathcal{N}_3$ depends on whether $\alpha_2$ and $\alpha_3$ are connected to some $\beta_j$ or not. By Proposition \ref{rootc}, for each $\alpha_i$, there are at most two different choices of roots in the set $J_r$ such that the pairs between $\alpha_i$ with these roots are connected. As $\alpha_1 = -\beta_j,$ we observe that $\alpha_2 + \beta_{j-1} ,\alpha_3 + \beta_{j+1} \in \Phi$. This implies that  \begin{align}
    \mathcal{N}_2 =   C_{\alpha_2,\beta_{j-1}} \left(e_{\alpha_2 + \beta_{j-1}}\prod_{k \neq j,j-1}^r e_{\beta_k}e_{\alpha_3}\right)\underbrace{e_{\beta_j}e_{\alpha_1}}_{\text{$ \in \textbf{q}_2$}} \, \text{ and } \, \mathcal{N}_3 = C_{\alpha_3,\beta_{j+1}} \left( e_{\alpha_3 + \beta_{j+1}}\prod_{k \neq j,j+1}^r e_{\beta_k} e_{\alpha_2}\right) \underbrace{e_{\alpha_1}e_{\beta_j}}_{\text{$ \in \textbf{q}_2$}}   . \label{eq:36}
\end{align}  We then consider the case where more than one root in $J_r$ is connected to $\alpha_2$. Suppose, without loss of generality, that there exists a $\beta_t \in J_r$ with $t\neq j-1$ such that $\alpha_2 + \beta_t \in \Phi$ for a fixed $t$. We further assume that $t <j$. Then
\begin{align*}
     \alpha_2+ \beta_t = & \, \epsilon_{i_j} - \epsilon_{j_3} + \epsilon_{i_t} - \epsilon_{i_{t+1}} = \left\{\begin{matrix}
         \epsilon_{i_l}  - \epsilon_{j_3}  & \text{ if } \epsilon_{i_j} = \epsilon_{i_{t+1}} \\
      \epsilon_{i_j}    - \epsilon_{i_{t+1}} & \hskip -0.3cm\text{ if } \epsilon_{j_3} = \epsilon_{i_t}.
    \end{matrix}\right.
\end{align*}
Note that if $t + 1 =j$, then the value of $\mathcal{N}_2$ and $\mathcal{N}_3$ take the form of $\eqref{eq:36}.$ On the other hand, if $j_3 = i_t,$ then $\alpha_2 = \epsilon_{i_j} - \epsilon_{i_t} = \beta_j + \ldots + \beta_{t-1}$ and $\alpha_3 = \epsilon_{i_t} - \epsilon_{i_{j+1}} = -\beta_t -\ldots - \beta_{j+1}.$ However, one can easily check that $\alpha_1 + \alpha_2 + \alpha_3 = -\beta_t - \beta_j \neq 0.$ Hence $t >j,$ and $\alpha_3 + \beta_{t-1} \in \Phi.$ Therefore, \begin{align*}
     \mathcal{N}_2& =      C_{\alpha_2,\beta_{j-1}} \left(e_{\alpha_2 + \beta_{j-1}}\prod_{k \neq j,j-1}^re_{\beta_k}e_{\alpha_3}\right)e_{\beta_j}e_{-\beta_j} + C_{\alpha_2,\beta_t} \left(e_{\alpha_2 + \beta_l}\prod_{k \neq j,t}^re_{\beta_k}e_{\alpha_3}\right)e_{\beta_j}e_{-\beta_j}  \\
      \mathcal{N}_3 & =   C_{\alpha_3,\beta_{j+1}} \left( e_{\alpha_3 + \beta_{j+1}}\prod_{k \neq j,j+1}^r e_{\beta_k} e_{\alpha_2}\right) e_{-\beta_j}e_{\beta_j} + C_{\alpha_3,\beta_{t-1}} \left(e_{\alpha_3 + \beta_{t-1}}\prod_{k \neq j,t-1}^r e_{\beta_k}e_{\alpha_2}\right)e_{\beta_j}e_{-\beta_j}.
\end{align*}

Now, consider case (b). Without loss of generality, assume that $\alpha_1 = -\beta_j$ and $\alpha_2 = -\beta_l.$ Then using the constraints  \begin{align*}
   0 = \alpha_1 + \alpha_2 + \alpha_3 = \epsilon_{i_{j+1}} - \epsilon_{i_j} + \epsilon_{i_{l+1}} - \epsilon_{i_l} +  \epsilon_{j_3} - \epsilon_{j_1}  ,
\end{align*} we have the following two different possibilities for the choice of the indices:

\begin{center}
    (i) $j = l+1,$ $j_3 = i_l$ and $ j_1 = i_{j+1} ;$ \quad  (ii) $l = j+1$, $j_3 = i_j$ and $ j_1 = i_{l+1} .$
\end{center}

We will only provide the classification for option (i), as the analysis in the remaining case is analogous. Given that $\alpha_3 = \epsilon_{i_l} - \epsilon_{i_{l+2}} = \beta_l + \beta_{l+1}$, it follows that $\alpha_3 + \beta_{l+2}$ and $\alpha_3 + \beta_{l-1}$ are elements of $\Phi$. By Proposition $\ref{rootc}$, $\alpha_3$ is maximally connected to two distinct roots in $J_r$. Returning to $\eqref{eq:commur3},$ we immediately find that
\begin{align*}
     & \,\mathcal{N}_1 =   \left(\sum_{\ell=1}^n C_{-\beta_j,\beta_j}^\ell h_\ell \right)  \left( e_{\alpha_3}\prod_{k \neq j,l}^r e_{\beta_k}\right)e_{-\beta_l} e_{\beta_l}, \text{ } \quad \mathcal{N}_2 =   \left(\sum_{\ell=1}^n C_{-\beta_j,\beta_j}^\ell h_\ell \right) \left( e_{\alpha_3}\prod_{k \neq j,l}e_{\beta_k}\right)e_{-\beta_j} e_{\beta_j} \\
    \mathcal{N}_3 = & \,    C_{\alpha_3,\beta_{l+2}} \left( e_{\alpha_3 + \beta_{l+2}}\prod_{k \neq l,l+1,l+2}^r e_{\beta_k} \right)e_{\alpha_2} e_{\beta_l} e_{\alpha_1}e_{\beta_j} + C_{\alpha_3,\beta_{l-1}} \left(e_{\alpha_3 + \beta_{l-1}}\prod_{k \neq l+1,l,l-1}^r e_{\beta_k}e_{\alpha_3}\right)e_{\beta_{l+1}}e_{\alpha_1}e_{\beta_l}e_{\alpha_2},
\end{align*}
where $  e_{\alpha_3 + \beta_{l+2}}\prod_{k \neq l,l+1,l+2}^r e_{\beta_k}  \in \textbf{q}_{r-2}$ is indecomposable.

Now, we look at the Cartan-free part. In other words, assume that $\alpha_i + \beta_k \neq 0$ for all $1 \leq i \leq 3$ and $ 1 \leq k \leq r$. By Proposition $\ref{rootc},$ for each $\alpha_i,$ there are maximal $2$ distinct choices of $\beta_k$ that are connected to it. From this fact, we will provide the classification using the number of the connected pairs for each $\alpha_i$ and $\beta_k$. Again, if $\alpha_i + \beta_k \notin \Phi$ for all $i,k$, we immediately conclude that $\{p,q\} = 0.$

(A) Suppose that only one of the roots in $\{\alpha_1,\alpha_2,\alpha_3\}$ is connected to some roots in $J_r.$

(a1) For a fixed $\alpha_{i_0}$ with $i_0 \in \{1,2,3\}$, we first assume that there exists only one root $\beta_j \in J_r$ such that $\alpha_{i_0} +\beta_j \in \Phi.$ Here $j$ is a fixed integer from $1$ to $r$. Without loss of generality, assume that $\alpha_1 + \beta_j \in \Phi$. It turns out that we either have 

\begin{center}
  (i) $\alpha_1 + \beta_j , \text{  } \alpha_3 + \beta_{j+1} \in \Phi$; \quad   (ii) $\alpha_1 + \beta_j , \text{  }\alpha_2 + \beta_{j-1} \in \Phi.$
\end{center}

Then
\begin{align}
    \{p,q\} = \left\{\begin{matrix}
        C_{\alpha_1,\beta_j} \left(e_{\alpha_1 + \beta_j} e_{\alpha_2}e_{\alpha_3} \prod_{w\neq j}^r e_{\beta_w} \right) +    C_{\alpha_3,\beta_{j+1}} \left(e_{\alpha_3 + \beta_{j+1}} e_{\alpha_1}e_{\alpha_2} \prod_{w\neq j+1}^r e_{\beta_w} \right)   & \text{ Case (i)}   \\
        \\
         C_{\alpha_1,\beta_j} \left(e_{\alpha_1 + \beta_j} e_{\alpha_2}e_{\alpha_3} \prod_{w\neq j}^r e_{\beta_w} \right) +    C_{\alpha_2,\beta_{j-1}} \left(e_{\alpha_2 + \beta_{j-1}} e_{\alpha_1}e_{\alpha_3} \prod_{w\neq j-1}^r e_{\beta_w} \right)  & \hskip 0.2cm\text{ Case (ii)}
    \end{matrix}\right.
\end{align}
In this case, the Poisson brackets contain only indecomposable monomials.

(a2) Under the assumption of (a1), suppose that there exists a $\beta_a \neq \beta_j$ such that $\alpha_1 + \beta_a \in \Phi$ for a fixed $ 1 \leq a \neq j \leq r$. Then from the relation below,
\begin{align*}
        \alpha_1+ \beta_j = & \, \epsilon_{j_1} - \epsilon_{j_2} + \epsilon_{i_j} - \epsilon_{i_{j+1}} = \left\{\begin{matrix}
         \epsilon_{i_j}  - \epsilon_{j_2}  & \text{ if } \epsilon_{j_1} = \epsilon_{i_{j+1}} \\
      \epsilon_{j_1}    - \epsilon_{i_{j+1}} & \hskip -0.2cm\text{ if } \epsilon_{j_2} = \epsilon_{i_j}
    \end{matrix}\right.  \\
     \alpha_1+ \beta_a = & \, \epsilon_{j_1} - \epsilon_{j_2} + \epsilon_{i_a} - \epsilon_{i_{a+1}} = \left\{\begin{matrix}
         \epsilon_{i_a}  - \epsilon_{j_2}  & \text{ if } \epsilon_{j_1} = \epsilon_{i_{a+1}} \\
      \epsilon_{j_1}    - \epsilon_{i_{a+1}} & \hskip -0.25cm \text{ if } \epsilon_{j_2} = \epsilon_{i_a}
    \end{matrix}\right.
\end{align*}
  we deduce that
\begin{align*}
     \text{(i)} \text{ } & \, \alpha_1 = \epsilon_{i_{j+1}} - \epsilon_{i_a}, \text{ }   \alpha_2 = \epsilon_{i_a} - \epsilon_{j_3}, \text{ }   \alpha_3 = \epsilon_{j_3} - \epsilon_{i_{j+1}}; \\
   \text{(ii)} \text{ } & \, \alpha_1 = \epsilon_{i_{a+1}} - \epsilon_{i_j}, \text{ }   \alpha_2 = \epsilon_{i_j} - \epsilon_{j_3}, \text{ }   \alpha_3 = \epsilon_{j_3} - \epsilon_{i_{a+1}}.
\end{align*}
Consider case (i). A direct computation shows that \begin{align}
    \mathcal{N}_1 = & \, C_{\alpha_1,\beta_j} \left(e_{\alpha_1 + \beta_j} \prod_{k=1}^{a+1} e_{\beta_k} \prod_{v=j+1}^r e_{\beta_v} \right) \left( e_{\alpha_2} e_{\alpha_3} \prod_{w = j}^{a-1} e_{\beta_w}\right)+ C_{\alpha_1,\beta_a} \left(e_{\alpha_1 + \beta_a} \prod_{k=1}^{a-1} e_{\beta_k} \prod_{v=j}^r e_{\beta_v} \right) \left( e_{\alpha_2} e_{\alpha_3} \prod_{w = j+1}^{a-2} e_{\beta_w}\right)   \nonumber \\
     \mathcal{N}_2 = & \, C_{\alpha_2,\beta_{a-1}} \left(e_{\alpha_2 + \beta_{a-1}} e_{\alpha_1}e_{\alpha_3} \prod_{w\neq a-1}^r e_{\beta_w} \right)  \text{ and }  \mathcal{N}_3 =      C_{\alpha_3,\beta_{j+1}} \left(e_{\alpha_3 + \beta_{j+1}} e_{\alpha_1}e_{\alpha_2} \prod_{w\neq j+1}^r e_{\beta_w} \right) . \label{eq:a5}
\end{align}
We will omit case (ii) as the values of $\mathcal{N}_1$, $\mathcal{N}_2$ and $\mathcal{N}_3$ admit the same decomposable monomials as provided in \eqref{eq:a5}.

\vskip 0.4cm

(B) Suppose that two of the roots in $\{\alpha_1,\alpha_2,\alpha_3\}$ are connected to some roots in $J_r.$

(b1) Assume that $\alpha_1,\alpha_2$ is connected to only one root in $J_r.$  Without loss of generality, for fixed $\beta_j \neq \beta_\ell \in J_r,$ suppose that $(\alpha_1,\beta_j)$ and $(\alpha_2,\beta_\ell)$ are connected. Then from the relation below, \begin{align*}
        \alpha_1+ \beta_j = & \, \epsilon_{j_1} - \epsilon_{j_2} + \epsilon_{i_j} - \epsilon_{i_{j+1}} = \left\{\begin{matrix}
         \epsilon_{i_j}  - \epsilon_{j_2}  & \hskip 0.1cm \text{ if } \epsilon_{j_1} = \epsilon_{i_{j+1}} \\
      \epsilon_{j_1}    - \epsilon_{i_{j+1}} & \hskip -0.2cm\text{ if } \epsilon_{j_2} = \epsilon_{i_j}
    \end{matrix}\right.  \\
\alpha_2+ \beta_\ell = & \, \epsilon_{j_2} - \epsilon_{j_3} + \epsilon_{i_\ell} - \epsilon_{i_{\ell+1}} = \left\{\begin{matrix}
         \epsilon_{i_\ell}  - \epsilon_{j_3}  & \hskip 0.1cm \text{ if } \epsilon_{j_2} = \epsilon_{i_{\ell+1}} \\
      \epsilon_{j_2}    - \epsilon_{i_{\ell+1}} &  \text{ if } \epsilon_{j_3} = \epsilon_{i_\ell} \, ,
    \end{matrix}\right.  
\end{align*}
 it follows that
\begin{align}
   \text{(i)} \text{ } & \, \alpha_1 = \epsilon_{i_{j+1}} - \epsilon_{i_{\ell+1}}, \text{ }   \alpha_2 = \epsilon_{i_{\ell+1}} - \epsilon_{j_3}, \text{ }   \alpha_3 =   \epsilon_{j_3}-\epsilon_{i_{j+1}}  ; \\
   \nonumber
    \text{(ii)} \text{ } & \, \alpha_1 = \epsilon_{i_{j+1}} - \epsilon_{i_{\ell+1}}, \text{ }   \alpha_2 = \epsilon_{j_2} - \epsilon_{i_\ell}, \text{ }   \alpha_3 =   \epsilon_{i_\ell}-\epsilon_{i_{j+1}} ;    \\
     \nonumber
    \text{(iii)} \text{ } & \, \alpha_1 = \epsilon_{j_1} - \epsilon_{i_j}, \text{ }   \alpha_2 = \epsilon_{i_j} - \epsilon_{i_\ell}, \text{ }   \alpha_3 =   \epsilon_{i_\ell}-\epsilon_{j_1}.
\end{align}
From now on, the values of $\mathcal{N}_1, \mathcal{N}_2$ and $\mathcal{N}_3$ will have similar decomposable monomials as seen in (a2). Consequently, we will present only one case to clarify the differences in the components of $\mathcal{N}_i$ with $ 1 \leq i \leq 3$. Consider case (iii). Direct computation shows that
\begin{align*}
   &  \mathcal{N}_1 =   C_{\alpha_1,\beta_j} \left( e_{\alpha_1 + \beta_j} e_{\alpha_2}e_{\alpha_3} \prod_{w \neq j }^r e_{\beta_w}\right) \text{ and }  \mathcal{N}_3 =  C_{\alpha_3,\beta_\ell}\left( e_{\alpha_3 + \beta_\ell} e_{\alpha_2}e_{\alpha_1} \prod_{w \neq \ell }^r e_{\beta_w}\right)  \\
    \mathcal{N}_2 = &\, C_{\alpha_2,\beta_{j-1}} \left(e_{\alpha_2 +\beta_{j-1}} \prod_{k=1}^{j-2} e_{\beta_k} \prod_{v = \ell}^r e_{\beta_v}\right) \left( e_{\alpha_1}e_{\alpha_3} \prod_{w = j}^{\ell-1} e_{\beta_w}\right)+ C_{\alpha_2,\beta_\ell} \left( e_{\alpha_2 +\beta_\ell}\prod_{k=1}^{j-1} e_{\beta_k} \prod_{v = \ell+1}^r e_{\beta_v} \right)\left( e_{\alpha_1}e_{\alpha_3} \prod_{w = j}^{\ell-1} e_{\beta_w}\right) .
\end{align*}

(b2) Under the assumption of (b1), we further assume that there exists a $\beta_a \neq \beta_j$ in the set $J_r$ such that $\alpha_1 + \beta_a \in \Phi.$ Then the following relations
\begin{align*}
        \alpha_1+ \beta_j = & \, \epsilon_{j_1} - \epsilon_{j_2} + \epsilon_{i_j} - \epsilon_{i_{j+1}} = \left\{\begin{matrix}
         \epsilon_{i_j}  - \epsilon_{j_2}  & \text{ if } \epsilon_{j_1} = \epsilon_{i_{j+1}} \\
      \epsilon_{j_1}    - \epsilon_{i_{j+1}} & \hskip -0.26cm \text{ if } \epsilon_{j_2} = \epsilon_{i_j}
    \end{matrix}\right.  \\
     \alpha_1+ \beta_a = & \, \epsilon_{k_1} - \epsilon_{j_2} + \epsilon_{i_a} - \epsilon_{i_{a+1}} = \left\{\begin{matrix}
         \epsilon_{i_a}  - \epsilon_{j_2}  & \hskip -0.1cm  \text{ if } \epsilon_{j_1} = \epsilon_{i_{a+1}} \\
      \epsilon_{j_1}    - \epsilon_{i_{a+1}} & \hskip -0.35cm\text{ if } \epsilon_{j_2} = \epsilon_{i_a}
    \end{matrix}\right.  \\
\alpha_2+ \beta_\ell = & \, \epsilon_{j_2} - \epsilon_{j_3} + \epsilon_{i_\ell} - \epsilon_{i_{\ell+1}} = \left\{\begin{matrix}
         \epsilon_{i_\ell}  - \epsilon_{j_3}  & \hskip  0.1cm\text{ if } \epsilon_{j_2} = \epsilon_{i_{\ell+1}} \\
      \epsilon_{j_2}    - \epsilon_{i_{\ell+1}} & \hskip -0.2cm\text{ if } \epsilon_{j_3} = \epsilon_{i_\ell}
    \end{matrix}\right.
\end{align*}
imply the following choices \begin{align*}
     \text{(i)} \text{ } & \, \alpha_1 = \epsilon_{i_{j+1}} - \epsilon_{i_a}, \text{ }   \alpha_2 = \epsilon_{i_a} - \epsilon_{i_{\ell}}, \text{ }   \alpha_3 = \epsilon_{i_\ell} - \epsilon_{i_{j+1}}; \\
   \text{(ii)} \text{ } & \, \alpha_1 = \epsilon_{i_{a+1}} - \epsilon_{i_j}, \text{ }   \alpha_2 = \epsilon_{i_j} - \epsilon_{i_\ell}, \text{ }   \alpha_3 = \epsilon_{i_\ell} - \epsilon_{i_{a+1}}; \\
   \text{(iii)} \text{ } & \, \alpha_1 = \epsilon_{i_{j+1}} - \epsilon_{i_{\ell+1}}, \text{ }   \alpha_2 = \epsilon_{i_a} - \epsilon_{j_3}, \text{ }   \alpha_3 = \epsilon_{j_3} - \epsilon_{i_{j+1}}.
\end{align*}
For each scenario mentioned previously, distinct values for $\mathcal{N}_1, \mathcal{N}_2$ and $\mathcal{N}_3$ will arise in expression $\eqref{eq:commur3}$. Given the analogous nature of the computations required for each situation, our focus will be solely directed towards analyzing case (i). A direct computation shows that $\{p,q\} = \mathcal{N}_1 + \mathcal{N}_2 + \mathcal{N}_3$, where
\begin{align*}
    \mathcal{N}_1 = & \, C_{\alpha_1,\beta_j} e_{\alpha_1+\beta_j} e_{\alpha_2} e_{\alpha_3}\prod_{w \neq    j}^re_{\beta_w} + C_{\alpha_1,\beta_a} e_{\alpha_1 + \beta_a} e_{\alpha_1 + \beta_a} e_{\alpha_2} e_{\alpha_3} \prod_{w \neq    a}^re_{\beta_w} \\
    =&\, C_{\alpha_1,\beta_j} \left(e_{\alpha_3} \prod_{w=j+1}^{\ell-1}e_{\beta_w}\right) \left(e_{\alpha_2}e_{\alpha_1+\beta_j} \prod_{k=1}^{a-1}e_{\beta_k}\prod_{v=\ell}^r e_{\beta_v}\right) + C_{\alpha_1,\beta_a} \left(e_{\alpha_3} \prod_{w=j+1}^{\ell-1}e_{\beta_w}\right) \left(e_{\alpha_2}e_{\alpha_1+\beta_a} \prod_{k=1}^j e_{\beta_k}\prod_{v=\ell}^r e_{\beta_v}\right) \\
     \mathcal{N}_2 =     &\, C_{\alpha_2,\beta_{k-1}} \left(e_{\alpha_3} \prod_{w=j+1}^{\ell-1}e_{\beta_w}\right) \left(e_{\alpha_1}e_{\alpha_2+\beta_{k-1}} \prod_{k=1}^je_{\beta_k}\prod_{v=\ell}^re_{\beta_v}\right) + C_{\alpha_2,\beta_\ell} \left(e_{\alpha_3} \prod_{w=j+1}^{\ell-1}e_{\beta_w}\right) \left(e_{\alpha_1}e_{\alpha_2+\beta_\ell}\prod_{k=1}^j e_{\beta_k}\prod_{v=\ell+1}^r e_{\beta_v}\right) \\
       \mathcal{N}_3 =     &\, C_{\alpha_3,\beta_{\ell-1}} \left(e_{\alpha_3 + \beta_{\ell-1}} \prod_{w=j+1}^{\ell-2}e_{\beta_w}\right) \left(e_{\alpha_2}e_{\alpha_1} \prod_{k=1}^je_{\beta_k}\prod_{v=\ell}^r e_{\beta_v}\right) + C_{\alpha_3,\beta_{j+1}} \left(e_{\alpha_3 + \beta_{j+1}} \prod_{w=j+2}^{\ell-1}e_{\beta_w}\right) \left(e_{\alpha_2}e_{\alpha_1} \prod_{k=1}^j e_{\beta_k}\prod_{v=\ell}^re_{\beta_v}\right).
\end{align*}
Here each monomial in the bracket $(\cdot)$ is indecomposable.   Additionally, for case (iii), since $\epsilon_{j_3}$ is undetermined, the expressions for all of $\mathcal{N}_1, \mathcal{N}_2$ and $\mathcal{N}_3$ will be expanded in the similar manner to those provided in case (b1).  

(b3) Under the assumption of (b1) and (b2), we further assume that there exist $\beta_l \neq \beta_\ell \in J_r$ such that $\alpha_2 + \beta_l, \text{ } \alpha_2 + \beta_\ell \in \Phi.$ Then by definition,
\begin{align*}
        \left.\begin{matrix}
           \alpha_1+ \beta_j =   \, \epsilon_{j_1} - \epsilon_{j_2} + \epsilon_{i_j} - \epsilon_{i_{j+1}} = \left\{\begin{matrix}
         \epsilon_{i_j}  - \epsilon_{j_2}  & \text{ if } \epsilon_{j_1} = \epsilon_{i_{j+1}} \\
      \epsilon_{j_1}    - \epsilon_{i_{j+1}} & \hskip -0.3cm\text{ if } \epsilon_{j_2} = \epsilon_{i_j}
    \end{matrix}\right.  \\
     \alpha_1+ \beta_a =   \, \epsilon_{k_1} - \epsilon_{j_2} + \epsilon_{i_a} - \epsilon_{i_{a+1}} = \left\{\begin{matrix}
         \epsilon_{i_a}  - \epsilon_{j_2}  & \text{ if } \epsilon_{j_1} = \epsilon_{i_{a+1}} \\
      \epsilon_{j_1}    - \epsilon_{i_{a+1}} & \hskip -0.3cm \text{ if } \epsilon_{j_2} = \epsilon_{i_a}
    \end{matrix}\right.
       \end{matrix} \right\} & \, \Longrightarrow \alpha_1 = \epsilon_{i_{j+1}} - \epsilon_{i_a} \text{ or } \epsilon_{i_{a+1}} - \epsilon_{i_j} \\
  \left. \begin{matrix}
    \alpha_2+ \beta_\ell =  \, \epsilon_{j_2} - \epsilon_{j_3} + \epsilon_{i_\ell} - \epsilon_{i_{\ell+1}} = \left\{\begin{matrix}
         \epsilon_{i_\ell}  - \epsilon_{j_3}  & \text{ if } \epsilon_{j_2} = \epsilon_{i_{\ell+1}} \\
      \epsilon_{j_2}    - \epsilon_{i_{\ell+1}} & \hskip -0.3cm \text{ if } \epsilon_{j_3} = \epsilon_{i_\ell}
    \end{matrix}\right.  \\
    \alpha_2 + \beta_l =   \epsilon_{j_2} - \epsilon_{j_3} + \epsilon_{i_l} - \epsilon_{i_{l+1}} = \left\{\begin{matrix}
         \epsilon_{i_l}  - \epsilon_{j_3}  &   \text{ if } \epsilon_{j_2} = \epsilon_{i_{l+1}} \\
      \epsilon_{j_2}    - \epsilon_{i_{l+1}} & \hskip -0.20cm \text{ if } \epsilon_{j_3} = \epsilon_{i_l}
    \end{matrix}\right.
\end{matrix} \right\} & \, \Longrightarrow \alpha_2 = \epsilon_{i_{\ell+1}} - \epsilon_{i_l} \text{ or } \epsilon_{i_{l+1}} - \epsilon_{i_\ell} .
\end{align*}
This gives
\begin{align*}
    \text{ (i) } \text{ } \alpha_1 = & \, \epsilon_{i_{j+1}} - \epsilon_{i_a} , \text{ } \alpha_2 =  \epsilon_{i_{\ell+1}} - \epsilon_{i_l}, \text{ } \alpha_3 = \epsilon_{i_{j+1}} -   \epsilon_{i_l} \text{ or } \epsilon_{i_a}   -\epsilon_{i_{\ell+1}}  ; \\
    \text{ (ii) } \text{ } \alpha_1 = & \, \epsilon_{i_{j+1}} - \epsilon_{i_a} , \text{ } \alpha_2 =  \epsilon_{i_{l+1}} - \epsilon_{i_\ell}, \text{ } \alpha_3 = \epsilon_{i_{j+1}} -   \epsilon_{i_\ell} \text{ or } \epsilon_{i_a} -\epsilon_{i_{l+1}}    ; \\
    \text{ (iii) } \text{ } \alpha_1 = & \, \epsilon_{i_{a+1}} - \epsilon_{i_j}, \text{ } \alpha_2 =  \epsilon_{i_{l+1}} - \epsilon_{i_\ell}, \text{ } \alpha_3 =   \epsilon_{i_\ell} - \epsilon_{i_{a+1}} \text{ or }   \epsilon_{i_j} - \epsilon_{i_{l+1}} ; \\
    \text{ (iv) } \text{ } \alpha_1 = & \, \epsilon_{i_{a+1}} - \epsilon_{i_j}, \text{ } \alpha_2 =  \epsilon_{i_{l+1}} - \epsilon_{i_\ell}, \text{ } \alpha_3 =     \epsilon_{i_\ell}-\epsilon_{i_{a+1}} \text{ or } \epsilon_{i_j} - \epsilon_{i_{l+1}}   .
\end{align*}
Given that \( \alpha_1, \alpha_2 \), and \( \alpha_3 \) are defined, the Poisson bracket decomposition for all the cases mentioned above matches the Poisson bracket decomposition in part (i) of case (b1), based on the computations in case (b1). Hence, we omit all the cases list above. 

\vskip 0.4cm

(C) Consider that all $\alpha_1, \alpha_2$ and $\alpha_3$ are connected to some roots in $J_r$. Without loss of generality, assume that there exists $\beta_j \neq \beta_\ell \neq \beta_s \in J_r$ such that $\alpha_1 + \beta_j,\alpha_2 + \beta_\ell,\alpha_3 + \beta_s \in \Phi $. Then from the relation below, \begin{align}
\nonumber
        \alpha_1+ \beta_j = & \, \epsilon_{j_1} - \epsilon_{j_2} + \epsilon_{i_j} - \epsilon_{i_{j+1}} = \left\{\begin{matrix}
         \epsilon_{i_j}  - \epsilon_{j_2}  & \text{ if } \epsilon_{j_1} = \epsilon_{i_{j+1}} \\
      \epsilon_{j_1}    - \epsilon_{i_{j+1}} & \hskip -0.3cm\text{ if } \epsilon_{j_2} = \epsilon_{i_{j}}
    \end{matrix}\right.   \\
     \alpha_2+ \beta_\ell = & \, \epsilon_{j_2} - \epsilon_{j_3} + \epsilon_{i_\ell} - \epsilon_{i_{\ell+1}} = \left\{\begin{matrix}
         \epsilon_{i_\ell}  - \epsilon_{j_3}  & \hskip 0.05 cm\text{ if } \epsilon_{j_2} = \epsilon_{i_{\ell+1}} \\
      \epsilon_{j_2}    - \epsilon_{i_{\ell+1}} & \hskip -0.23cm\text{ if } \epsilon_{j_3} = \epsilon_{i_\ell}
    \end{matrix}\right.  \nonumber   \\
    \alpha_3 + \beta_s =  & \epsilon_{j_3} - \epsilon_{j_1} + \epsilon_{i_s} - \epsilon_{i_{s+1}} = \left\{\begin{matrix}
         \epsilon_{i_s}  - \epsilon_{j_1}  & \hskip 0.1cm  \text{ if } \epsilon_{j_3} = \epsilon_{i_{s+1}} \\
      \epsilon_{j_3}    - \epsilon_{i_{s+1}} & \hskip -0.05cm \text{ if } \epsilon_{j_1} = \epsilon_{i_s}.
    \end{matrix}\right. \label{eq:a16}
\end{align}
From this, we can infer the following potential choices \begin{align*}
   \text{(i)} \text{ } & \, \alpha_1 = \epsilon_{i_{j+1}} - \epsilon_{i_{\ell+1}}, \text{ }   \alpha_2 = \epsilon_{i_{\ell+1}} - \epsilon_{i_{s+1}}, \text{ }   \alpha_3 = \epsilon_{i_{s+1}} - \epsilon_{i_{j+1}}; \\
   \text{(ii)} \text{ } & \, \alpha_1 = \epsilon_{i_s} - \epsilon_{i_j}, \text{ }   \alpha_2 = \epsilon_{i_j} - \epsilon_{i_\ell}, \text{ }   \alpha_3 = \epsilon_{i_\ell} - \epsilon_{i_s}.
\end{align*}  Analogously to the reasoning in (A) and (B), we can additionally postulate the existence of an extra root within $J_r$ such that each $\alpha_i$ is connected to either one or two roots in $J_r$. Since the classification method closely resembles the one previously detailed, we will illustrate just one scenario here: Under the assumption of Case (C), we further assume that there exist some roots $\beta_a \neq \beta_l \neq \beta_t$ such that $\alpha_1 + \beta_a , \alpha_2 + \beta_l,\alpha_3 + \beta_t \in \Phi.$ Then together with \eqref{eq:a16}, we obtain that
\begin{align}
\nonumber
        \left.\begin{matrix}
           \alpha_1+ \beta_j =   \, \epsilon_{j_1} - \epsilon_{j_2} + \epsilon_{i_j} - \epsilon_{i_{j+1}} = \left\{\begin{matrix}
         \epsilon_{i_j}  - \epsilon_{j_2}  & \text{ if } \epsilon_{j_1} = \epsilon_{i_{j+1}} \\
      \epsilon_{j_1}    - \epsilon_{i_{j+1}} & \hskip -0.3 cm\text{ if } \epsilon_{j_2} = \epsilon_{i_j}
    \end{matrix}\right.  \\
    \\
     \alpha_1+ \beta_a =   \, \epsilon_{k_1} - \epsilon_{j_2} + \epsilon_{i_a} - \epsilon_{i_{a+1}} = \left\{\begin{matrix}
         \epsilon_{i_a}  - \epsilon_{j_2}  & \text{ if } \epsilon_{j_1} = \epsilon_{i_{a+1}} \\
      \epsilon_{j_1}    - \epsilon_{i_{a+1}} & \hskip -0.3 cm\text{ if } \epsilon_{j_2} = \epsilon_{i_a}
    \end{matrix}\right.
       \end{matrix} \right\} & \Longrightarrow \alpha_1 = \epsilon_{i_{j+1}} - \epsilon_{i_a} \text{ or } \epsilon_{i_{a+1}} - \epsilon_{i_j} ;\\
  \left. \begin{matrix}
    \alpha_2+ \beta_\ell =  \, \epsilon_{j_2} - \epsilon_{j_3} + \epsilon_{i_\ell} - \epsilon_{i_{\ell+1}} = \left\{\begin{matrix}
         \epsilon_{i_\ell}  - \epsilon_{j_3}  & \text{ if } \epsilon_{j_2} = \epsilon_{i_{\ell+1}} \\
      \epsilon_{j_2}    - \epsilon_{i_{\ell+1}} & \hskip -0.3 cm\text{ if } \epsilon_{j_3} = \epsilon_{i_\ell}
    \end{matrix}\right.   \\
    \\
    \alpha_2 + \beta_l =   \epsilon_{j_2} - \epsilon_{j_3} + \epsilon_{i_l} - \epsilon_{i_{l+1}} = \left\{\begin{matrix}
         \epsilon_{i_l}  - \epsilon_{j_1}  & \text{ if } \epsilon_{j_2} = \epsilon_{i_{l+1}} \\
      \epsilon_{j_3}    - \epsilon_{i_{l+1}} & \hskip -0.3 cm\text{ if } \epsilon_{j_3} = \epsilon_{i_l}
    \end{matrix}\right.
\end{matrix} \right\} & \Longrightarrow \alpha_2 = \epsilon_{i_{\ell+1}} - \epsilon_{i_l} \text{ or } \epsilon_{i_{l+1}} - \epsilon_{i_\ell} ;   \\
\nonumber
  \left. \begin{matrix}     \alpha_3 + \beta_s =    \epsilon_{j_3} - \epsilon_{j_1} + \epsilon_{i_s} - \epsilon_{i_{s+1}} = \left\{\begin{matrix}
         \epsilon_{i_s}  - \epsilon_{j_1}  & \text{ if } \epsilon_{j_3} = \epsilon_{i_{s+1}} \\
      \epsilon_{j_3}    - \epsilon_{i_{s+1}} & \hskip -0.3 cm\text{ if } \epsilon_{j_1} = \epsilon_{i_s}
    \end{matrix}\right. \\
    \\
   \alpha_3 + \beta_t =   \epsilon_{j_3} - \epsilon_{j_1} + \epsilon_{i_t} - \epsilon_{i_{t+1}} = \left\{\begin{matrix}
         \epsilon_{i_t}  - \epsilon_{j_1}  & \text{ if } \epsilon_{j_3} = \epsilon_{i_{t+1}} \\
      \epsilon_{j_3}    - \epsilon_{i_{t+1}} & \hskip -0.3 cm\text{ if } \epsilon_{j_1} = \epsilon_{i_t}
    \end{matrix}\right.  \end{matrix} \right\} & \Longrightarrow \alpha_3 = \epsilon_{i_{s+1}} - \epsilon_{i_t} \text{ or } \epsilon_{i_{t+1}} - \epsilon_{i_s} .
\end{align}
Thus, there are $6$ possible choices for $\alpha_i$. By the constraints $\alpha_1 + \alpha_2 + \alpha_3 = 0,$ we can deduce the relations between the indices $\{a,l,t\}$ and $\{j,\ell,s\}.$ Due to the classification being analogous to that of case (b3), we shall refrain from detailing each individual possibility in this context. For instance,
\begin{align*}
    \alpha_1 = \epsilon_{i_{j+1}} - \epsilon_{i_a}, \text{ }  \quad \alpha_2 =  \epsilon_{i_{\ell+1}} - \epsilon_{i_l}, \text{ } \quad  \alpha_3 = \epsilon_{i_{s+1}} - \epsilon_{i_t}.
\end{align*}
From the constraint, we deduce that
\begin{center}
   (i) $a = \ell+1, l = s+1$ and $t = j+1 $; \quad \quad (ii) $j+1 = l, a = s+1 $ and $ t = \ell + 1.$
\end{center}

From case (i), $ \alpha_1 = \epsilon_{i_{j+1}} - \epsilon_{i_{\ell+1}}, \text{ } \alpha_2 =  \epsilon_{i_{\ell+1}} - \epsilon_{i_{s+1}}, \text{ } \alpha_3 = \epsilon_{i_{s+1}} - \epsilon_{i_{j+1}}$. In this scenario, the expression for $\{p,q\}$ will match the one derived in case (i) of part (b2). Similar argument holds for case (ii).

\bibliographystyle{unsrt}
\bibliography{bibliography.bib}

\end{document}